\theoremstyle{plain}
\newtheorem{thm}{Theorem}
\newtheorem{prop}[thm]{Proposition}
\newtheorem{definition}[thm]{Definition}
\newtheorem{remark}[thm]{Remark}
\numberwithin{equation}{section}
\renewcommand{\epsilon}{\varepsilon}
\renewcommand{\hom}{\mathrm{Hom}}
\renewcommand{\phi}{\varphi}
\renewcommand{\d}{\mathrm{d}}
\newcommand{\id}{\mathrm{id}}
\newcommand{\fuse}{\dot\otimes}
\newcommand{\im}{\mathop{\mathrm{im}}}
\title{The tensor structure on the
    representation category of the {\boldmath $\mathcal{W}_p$\unboldmath} triplet algebra}
\date{}
\begin{document}
\thispagestyle{empty}
\def\thefootnote{\fnsymbol{footnote}}

\begin{center}\LARGE
  \textbf{%
    The tensor structure on the
    representation category of the {\boldmath $\mathcal{W}_p$\unboldmath} triplet algebra
  }
\end{center}\vskip 2em
\begin{center}\large
  Akihiro Tsuchiya\footnote{Email: {\tt akihiro.tsuchiya@ipmu.jp}} \,%
  and
  Simon Wood\footnote{Email: {\tt simon.wood@ipmu.jp}}
\end{center}
\begin{center}
  Institute for the Physics and Mathematics of the Universe (IPMU)\\
  The University of Tokyo
\end{center}
\vskip 1em
\begin{center}
  December, 2011
\end{center}
\vskip 1em
\begin{abstract}
  We study the braided monoidal structure that the fusion product induces on the abelian
  category \(\mathcal{W}_p\)-mod, the category of representations of the triplet \(W\)-algebra
  \(\mathcal{W}_p\).
  The \(\mathcal{W}_p\)-algebras are a family of vertex operator algebras that form the simplest known examples of symmetry algebras
  of logarithmic conformal field theories. We formalise the methods for
  computing fusion products, developed by Nahm, Gaberdiel and Kausch,
  that are widely used in
  the physics literature and illustrate a systematic approach to calculating fusion products in non-semi-simple 
  representation categories. 
  We apply these methods to the braided monoidal structure of
  \(\mathcal{W}_p\)-mod, previously
  constructed by Huang, Lepowsky and Zhang, to prove
  that this braided monoidal structure is rigid. 
  The rigidity of \(\mathcal{W}_p\)-mod allows us to prove
  explicit formulae for the fusion product on the set of all simple and all
  projective \(\mathcal{W}_p\)-modules, which were first conjectured by
  Fuchs, Hwang, Semikhatov and Tipunin; and Gaberdiel and Runkel.
\end{abstract}

\setcounter{footnote}{0}
\def\thefootnote{\arabic{footnote}}

\newpage

\section{Introduction}

The theory of vertex operator algebras is an algebraic approach to describing the chiral symmetry algebras
of conformal field theories, at least when the number of irreducible representations of the symmetry algebra is finite
\cite{Borcherds:1983sq,Frenkel:1988xz,Frenkel:1992,Frenkel:1993,Zhu:1996,Frenkel:2001}.
Over the last few years a class of conformal field theories, called logarithmic
conformal field theories, has been the subject of a lot of research. Logarithmic conformal field theories appear
in the description of critical points of a number of interesting physical systems. Examples are polymers, spin chains, percolation and sand-pile models \cite{Jeng:2006tg,Pearce:2006we,Read:2007qq,Mathieu:2007pe,Ridout:2008cv,Nigro:2009si}.
Logarithmic conformal field theories generalise the conformal field theories most commonly considered, by
allowing the singularities encountered in correlation functions, when two field insertion approach each other, to be
logarithmic divergencies rather than just poles \cite{Gurarie:1993xq}. Two necessary consequences of the logarithmic divergencies are that
\(L_0\) the generator of scale transformations is no longer diagonalisable and that the representation theory of the
symmetry algebra is non-semi-simple. The non-semi-simplicity in particular has made it quite challenging to find
rigorous mathematical classifications of the representations of symmetry algebras associated to logarithmic conformal
field theories.

Arguably the two best understood families of vertex operator algebras associated to logarithmic conformal field theories
are the \(\mathcal{W}_p\)- and the \(\mathcal{W}_{p_+,p_-}\)-series, where \(p\geq2\) and \(p_{\pm}\geq 2\), with \(p_{+},p_-\)
coprime respectively. The \(\mathcal{W}_p\)-series is by now quite well understood 
\cite{Gaberdiel:1996np,Fuchs:2003yu,Huang:2003za,Carqueville:2005nu,Feigin:2005zx,Feigin:2006iv,Gaberdiel:2007jv,Adamovic:2007er,Bushlanov:2009cv,Ridout:2010qx}.
It was shown in \cite{Abe:2007kb} that the \(\mathcal{W}_2\)
triplet algebra satisfies Zhu's \(c_2\)-cofiniteness condition. The same
result for general \(p\geq2\) was shown in \cite{Carqueville:2005nu,Adamovic:2007er}. The representation
category \(\mathcal{W}_p\)-mod was completely classified in \cite{Nagatomo:2009xp} as a \(\mathbb{C}\)-linear abelian
category. The representation theory of \(\mathcal{W}_{p_+,p_-}\)-series is not so well understood yet.
There are well supported conjectures for lists of all irreducible and all projective representations, but there is still a
lot of work to be done
\cite{Feigin:2006iv,Feigin:2006xa,Rasmussen:2008xi,Eberle:2006zn,Gaberdiel:2010rg,Pearce:2010pa}.
For all odd \(p_-\geq 3\) is was shown in \cite{Adamovic:2009xs,Adamovic:2010zk,Adamovic:2011xq} that
the triplet algebra \(\mathcal{W}_{2,p_-}\) satisfies Zhu's \(c_2\)-cofiniteness
condition and additionally the centre of Zhu's algebra was determined.

It was shown by Huang, Lepowsky and Zhang in the series of papers 
\cite{Huang:2003za,Huang:2010pm,Huang:2010pn,Huang:2010pp,Huang:2010pq,Huang:2010pr,Huang:2010ps,Huang:2011fn,Huang:2011fp,Huang:2009xq,Huang:2009mj}
that the fusion product of
\(\mathcal{W}_p\)-representations induces a braided monoidal structure on
\(\mathcal{W}_p\)-mod, the representation category of the
\(\mathcal{W}_p\) triplet algebra.
The purpose of this paper is to analyse the fusion product of
\(\mathcal{W}_p\)-representations
by making heavy use of all that is known of
\(\mathcal{W}_p\)-mod as an abelian category.
The main results can be summarised as follows:
\begin{itemize}
\item Section \ref{sec:explicitdefoffusion} in which a systematic
  description is given on how to define and compute fusion products in a
  non-semi-simple setting.
\item Theorem \ref{sec:wprigid} which states that the braided monoidal
  structure of \(\mathcal{W}_p\)-mod induced by fusion is rigid.
\item Theorem \ref{sec:fusionrules} which gives explicit
  formulae for the fusion products of all simple and all projective
  \(\mathcal{W}_p\)-modules.
\end{itemize}
The formulae in theorem \ref{sec:fusionrules}
were first conjectured in \cite{Fuchs:2003yu,Gaberdiel:2007jv}.

As a final comment we would like to note that the \(\mathcal{W}_p\)-series is closely related to quantum groups at roots of
unity \cite{Feigin:2005zx,Feigin:2006iv}. Indeed it was shown in \cite{Nagatomo:2009xp} that the representation categories
of \(\mathcal{W}_p\) and its corresponding quantum group are equivalent as abelian categories. It was shown in \cite{Kondo:2009} that
the standard quantum group tensor product cannot coincide with the \(\mathcal{W}_p\) fusion product, because it is not braided.

The paper is organised as follows. 
In section \ref{sec:deffusion} we
introduce our notation for vertex operator algebras, give a short definition
of monoidal categories and explain how to define and compute the fusion product in the representation category \(V\)-mod
of an arbitrary \(c_2\)-cofinite vertex operator algebra \(V\). 
In section \ref{sec:defwpmod} we introduce the \(\mathcal{W}_p\) triplet algebra and its representation category \(\mathcal{W}_p\)-mod.
Sections \ref{sec:deffusion} and \ref{sec:defwpmod} are introductory and serve to familiarise the
reader with fusion products, the \(\mathcal{W}_p\)-algebra, representations of the \(\mathcal{W}_p\)-algebra and to introduce our notation.

Section \ref{sec:calcX1andX2} contains a detailed analysis of two simple \(\mathcal{W}_p\)-modules which we call \(X_1^-\) and \(X_2^+\). We calculate their
fusion products with all simple \(\mathcal{W}_p\)-modules and prove that they are rigid objects in \(\mathcal{W}_p\)-mod. This section relies heavily on
the notions discussed in sections \ref{sec:propofmoncat} and \ref{sec:explicitdefoffusion}.
In section \ref{sec:wpisrigid} we prove this paper's two main theorems \ref{sec:wprigid} and \ref{sec:fusionrules}
by exploiting the rigidity of \(X_1^-\) and \(X_2^+\) to compute the fusion product of
\(X_1^-\) and \(X_2^+\) with all projective modules. This allows us to prove the rigidity of \(\mathcal{W}_p\)-mod and to
compute the fusion product on the set of all simple and all projective modules as well as the induced product on
the Grothendieck group \(K(\mathcal{W}_p)\).

\subsection*{Acknowledgements}

This work was supported by World Premier International Research Center Initiative (WPI Initiative), MEXT, Japan.
The first author is supported by Kakenhi-grant number 22540010, while the second author is supported by the SNSF
scholarship for prospective researchers and thanks the University of Tokyo, IPMU for its hospitality.
The second author would also like to thank Matthias Gaberdiel and Ingo Runkel,
whose conjectured fusion rules inspired this paper, for helpful discussions.

\section{The definition of fusion tensor products}
\label{sec:deffusion}

\subsection{Vertex operator algebras and current algebras}

In this section we will briefly summarise our definitions and notation for vertex operator algebras. 
For a more detailed discussion see \cite{Nagatomo:2002,Matsuo:2005}.

\begin{definition}
  A tuple \((V,\Omega,T,Y)\) -- consisting of a complex vector space \(V\), two distinguished non-trivial elements 
  \(\Omega\), \(T\in V\) and a map \(Y\) -- is called a vertex operator algebra (VOA for short), if it satisfies
  the following conditions.
  \begin{enumerate}
  \item The vector space \(V\) is non-negative integer graded
    \begin{align}
      V=\bigoplus_{n=0}^\infty V[n]\,,
    \end{align}
    such that \(V[0]=\mathbb{C}\Omega\), \(\dim V[n]<\infty\ \forall n\geq 0\)
    and \(T\in V[2]\).
  \item For \(A\in V[h_A]\), the map \(Y\) is a \(\mathbb{C}\)-linear map
    \begin{align}
      Y: V&\rightarrow \mathrm{End}_{\mathbb{C}}(V)[[z,z^{-1}]]\\\nonumber
      A&\mapsto Y(A;z)=A(z)=\sum_{n\in\mathbb{Z}} A_nz^{-n-h_A}\,,
    \end{align}
    such that
    \begin{align}
      Y(A;z)\Omega-A\ \in V[[z]]z
    \end{align}
    and
    \begin{align}
      Y(\Omega;z)=\id_V\,.
    \end{align}
  \item If we set
    \begin{align}
      Y(T;z)=\sum_{n\in \mathbb{Z}}L_n z^{-n-2}\,,
    \end{align}
    then the modes \(L_n\) satisfy the commutation relations of the Virasoro algebra with
    fixed central charge \(c=c_V\)
    \begin{align}
      [L_m,L_n]=(m-n)L_{m+n}+\frac{c_V}{12}(m^3-m)\delta_{m+n,0}\,.
    \end{align}
  \item The zero mode of the Virasoro algebra \(L_0\) acts semi-simply on \(V\) and
    \begin{align}
      V[n]=\{A\in V| L_0 A= n A\}\,.
    \end{align}
  \item For any element \(A\in V\) we have
    \begin{align}
      \frac{\d}{\d z}Y(A;z)=Y(L_{-1}A;z)\,.
    \end{align}
  \item For any elements \(A,B\in V\), \(Y(A;z)\) and \(Y(B;z)\) are local with
    respect to each other, that is, there exits an \(N>0\), such that
    \begin{align}
      (z-w)^N[Y(A;z),Y(B;w)]=0
    \end{align}
    as a formal power series in 
    \(\mathrm{End}_{\mathbb{C}}(V)[[z,z^{-1},w,w^{-1}]]\).
  \item For any elements \(A\in V[h_A]\), \(B\in V\), \(Y(A;z)\) and \(Y(B;z)\)
    satisfy the operator product expansion
    \begin{align}
      Y(A;z)Y(B;w)=\sum_{n\in \mathbb{Z}}Y(A_n B;w)(z-w)^{-n-h_A}\,.
    \end{align}
  \end{enumerate}
\end{definition}
When there is no chance of confusion we will refer to a VOA just by its graded vector space
\(V\).
\begin{remark}
  By the above definition it follows that
  for \(A\in V[h_A]\) the Virasoro generators \(L_0\) and \(L_{-1}\) satisfy
    \begin{align}
      [L_{-1},A_n]&=-(n+h_A-1)A_{n-1}\\\nonumber
      [L_0,A_n]&=-n A_n\,.
    \end{align}
\end{remark}

Next we introduce a finiteness condition due to Zhu \cite{Frenkel:1992,Zhu:1996}.
\begin{definition}
  A VOA \(V\) is said to be \(c_2\)-cofinite if
  \begin{align}
    \dim_\mathbb{C}V/c_2(V)<\infty\,,
  \end{align}
  where \(c_2(V)\) the subspace of \(V\) defined by
  \begin{align}
    c_2(V)=\mathrm{span}\{A_nB|\ A\in V[h_A],\ B\in V,\ n\leq -(h_A+1)\}\,.
  \end{align}
\end{definition}
In this paper will mainly be considering \(c_2\)-cofinite VOAs. Among many
other helpful properties \(c_2\)-cofiniteness guarantees that the \(V\) has only a finite
number of irreducible representations.

The algebra of the modes of a VOA \(V\) can be understood by
using the concepts of the current Lie algebra \(\mathfrak{g}(V)\) and the current algebra \(\mathcal{U}(V)\) of \(V\).
The representation theory of a VOA \(V\) can be defined by left \(\mathcal{U}(V)\)-modules
with some extra properties, which we will explain in sequel.

Let \(V\) be a VOA.
Consider the spaces \(V^{(1)}=\bigoplus_{h\geq0}V[h]\otimes\mathbb{C}[[\xi,\xi^{-1}]](d\xi)^{1-h}\)
and \(V^{(0)}=\bigoplus_{h\geq0}V[h]\otimes\mathbb{C}[[\xi,\xi^{-1}]](d\xi)^{-h}\) as well as the \(\mathbb{C}\)-linear map
\(\nabla:V^{(0)}\rightarrow V^{(1)}\) defined by
\begin{align}\label{eq:currentconnection}
  \nabla(A\otimes f(\xi)(\d\xi)^{-h_A})= L_{-1}A\otimes f(\xi)(\d\xi)^{-h_A}
  +A\otimes \frac{\d f(\xi)}{\d\xi}(\d\xi)^{1-h_A}.
\end{align}
\begin{definition}
  Let
  \begin{align}
    \mathfrak{g}(V)=V^{(1)}/\nabla V^{(0)}\,,
  \end{align}
  then \(\mathfrak{g}(V)\) has the structure of a Lie algebra
  given by
  \begin{align}\label{eq:currentbracket}
    &[A\otimes f(\d\xi)^{1-h_A},B\otimes g(\d\xi)^{1-h_B}]=\\
    &\qquad \sum_{m=0}^{h_A+h_B-1}\frac{1}{m!}A_{m-h_A+1}B\otimes
    \frac{d^m f}{\d\xi^m}g(\d\xi)^{m+2-h_A-h_B}.\nonumber
  \end{align}
\end{definition}
For each element \(A\in V[h]\) we denote
\begin{align}
  A_{n}&=[A\otimes\xi^{n+h-1}(\d\xi)^{1-h}]\in\mathfrak{g}(V)\,.
\end{align}
and
\begin{align}
  L_n=[T\otimes \xi^{n+1}(\d\xi)^{-1}]\in\mathfrak{g}(V)\,.
\end{align}
The \(L_n\) generate the Virasoro Lie algebra as a subalgebra in \(\mathfrak{g}(V)\) and the Lie algebra
\(\mathfrak{g}(V)\) has a \(\mathbb{Z}\)-graded Lie algebra structure by
\begin{align}
  \mathfrak{g}(V)[n]=\left\{g\in\mathfrak{g}(V)\,|\, [L_0,g]=ng\right\}\,.
\end{align}
By definition the \(A_n\) are the elements in \(\mathfrak{g}(V)[-n]\).

We now define the current algebra \(\mathcal{U}(V)\) of \(V\). We first consider the universal enveloping algebra
\(U(\mathfrak{g}(V))\) of \(\mathfrak{g}(V)\). Then \(U(\mathfrak{g}(V))\) has the structure of a \(\mathbb{Z}\)-graded algebra by
decomposition into \(L_0\) eigenspaces
\begin{align}
  U(\mathfrak{g}(V))[n]=\left\{P\in U(\mathfrak{g}(V))\,|\,[L_0,P]=nP\right\}\,.
\end{align}
Consider the degreewise completion of \(U(\mathfrak{g}(V))\) 
\begin{align}
  \overline{U(\mathfrak{g}(V))}=\bigoplus_{n\in\mathbb{Z}}\overline{U(\mathfrak{g}(V))[n]}
\end{align}
and
consider the degreewise closed two sided ideal
\begin{align}
  \overline{\mathcal{I}}=\bigoplus_{n\in\mathbb{Z}}\overline{\mathcal{I}[n]}
\end{align}
of \(\overline{U(\mathfrak{g}(V))}\) generated by the Borcherds relations which arise from the operator product expansion
\begin{align}
  Y(A;z)Y(B;w)=\sum_{n\in\mathbb{Z}}Y(A_nB;w)(z-w)^{-n-h_A}\,.
\end{align}

\begin{definition}
  The current algebra \(\mathcal{U}(V)\) of \(V\) is the topological
  \(\mathbb{Z}\)-graded algebra
  \begin{align}
    \mathcal{U}(V)=\bigoplus_n \mathcal{U}(V)[n]=\bigoplus_n
    \overline{U(\mathfrak{g}(V))[n]}/\overline{\mathcal{I}[n]}\,.
  \end{align}
\end{definition}

The following proposition is very important in this paper, because
it allows us to switch back and forth between calculations in the current
Lie algebra and the current algebra.
\begin{prop}
  The canonical \(\mathbb{Z}\)-graded Lie algebra map
  \begin{align}
    \mathfrak{g}(V)=\bigoplus_n\mathfrak{g}(V)[n]\rightarrow
    \mathcal{U}(V)=\bigoplus_n\mathcal{U}(V)[n]
  \end{align}
  has a dense image.
\end{prop}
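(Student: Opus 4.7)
The plan is to establish density degree by degree. The main idea is to combine the Poincaré--Birkhoff--Witt filtration on \(U(\mathfrak{g}(V))\) with the Borcherds relations that express products of modes as convergent sums of single modes modulo the ideal \(\overline{\mathcal{I}}\).

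Fix an integer \(n\) and let \([P] \in \mathcal{U}(V)[n]\) be represented by some \(P \in \overline{U(\mathfrak{g}(V))[n]}\). By the definition of the degreewise completion, \(P\) is approximated in the completion topology by finite sums of PBW monomials \(A^{(1)}_{m_1} \cdots A^{(r)}_{m_r}\) in \(U(\mathfrak{g}(V))[n]\). Hence, to prove density, it is enough to exhibit for each such monomial an element of \(\mathfrak{g}(V)[n]\) congruent to it modulo \(\overline{\mathcal{I}[n]}\), up to limits in the topology. I would proceed by induction on the length \(r\). The case \(r=1\) is trivial, since a single mode \(A_m\) already lies in \(\mathfrak{g}(V)[n]\) by definition.

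For the inductive step, the key ingredient is the Borcherds identity, which follows directly from the OPE axiom
\begin{align*}
Y(A;z)\, Y(B;w) = \sum_{k \in \mathbb{Z}} Y(A_k B;\, w)(z-w)^{-k-h_A}.
\end{align*}
Extracting the coefficient of \(z^{-p-h_A} w^{-q-h_B}\) after expansion in the domain \(|z|>|w|\) yields a relation of the form
\begin{align*}
A_p B_q = \sum_{j \geq 0} (-1)^j \binom{-p+j-h_A}{j}\, (A_{p-j} B)_{p+q}
\end{align*}
which lies in \(\overline{\mathcal{I}}\). Since each \((A_{p-j} B)_{p+q}\) is a single mode and hence an element of \(\mathfrak{g}(V)[-p-q]\), this identity expresses a two-mode product as a (convergent) infinite sum of elements of \(\mathfrak{g}(V)\), modulo \(\overline{\mathcal{I}}\). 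Applied to the leftmost adjacent pair of an \(r\)-fold PBW monomial, the identity reduces its length by one at the cost of introducing such a sum; the induction hypothesis then handles each of the shortened monomials and produces a representative of \([P]\) in the closure of the image of \(\mathfrak{g}(V)[n]\).

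The main obstacle I anticipate is the convergence bookkeeping. Two points need care: first, showing that the infinite sum produced by a single Borcherds rewrite truly converges in the topology defining \(\overline{U(\mathfrak{g}(V))[n]}\); and second, showing that iterating the rewriting across all adjacent pairs of a long PBW monomial produces a genuine limit in that same topology rather than a merely formal one. Both points are handled by tracking the weight grading on \(V\) and the induced filtration on \(U(\mathfrak{g}(V))[n]\), using the fact that on any grading-restricted module the action of \((A_{p-j} B)_{p+q}\) on a fixed vector vanishes for all sufficiently large \(j\). Once this topological bookkeeping is in place, the formal identities carry through and density in each graded piece follows.
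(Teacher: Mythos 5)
The paper states this proposition without proof (it is quoted from the current--algebra formalism of \cite{Nagatomo:2002,Matsuo:2005}), so there is no argument of the authors to compare yours against; it must stand on its own, and it does not. The central identity \(A_pB_q=\sum_{j\geq0}(-1)^j\binom{-p+j-h_A}{j}(A_{p-j}B)_{p+q}\) is not a valid relation. The OPE axiom equates the expansion of \(Y(A;z)Y(B;w)\) in \(|z|>|w|\) with the expansion of the iterate in \(|w|>|z-w|\); re-expanding the right-hand side in \(|z|>|w|\) and extracting the coefficient of \(z^{-p-h_A}w^{-q-h_B}\) term by term rearranges a double series that is not absolutely convergent, and the result is false. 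A concrete test: take the Heisenberg current \(a\) of section \ref{sec:vlfusion} with \(A=B=a\), \(h_A=1\), \(p=q=-1\). Your formula reads \(a_{-1}a_{-1}=\sum_{j\geq0}(-1)^j(a_{-1-j}a_{-1}\Omega)_{-2}\), and a direct computation gives \((a_{-1-j}a_{-1}\Omega)_{-2}|\lambda\rangle=(-1)^j\lambda\,a_{-2}|\lambda\rangle\) for every \(j\geq2\), so the \(j\)-th summand contributes \(+\lambda\,a_{-2}|\lambda\rangle\) for all \(j\geq 2\) and the series diverges on \(|\lambda\rangle\) whenever \(\lambda\neq0\). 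This refutes both of your bookkeeping claims at once: the sum does not converge in \(\overline{U(\mathfrak{g}(V))[n]}\), and it is not true that \((A_{p-j}B)_{p+q}\) kills a fixed vector for large \(j\) --- the mode index \(p+q\) is held fixed while the weight of the state \(A_{p-j}B\) grows at exactly the rate that keeps the mode nontrivial. (The vanishing you have in mind is \(u_{n}v=0\) for \emph{fixed} \(u\) and \(n\to+\infty\), as in \eqref{eq:moveright}; that is a different limit.)

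There is also a structural reason the step cannot be repaired by better convergence estimates: all the terms \((A_{p-j}B)_{p+q}\) are single Lie-algebra elements of the \emph{same} degree \(-(p+q)\), and in the degreewise completion an element is small only when it factors through \(U(\mathfrak{g}(V))\cdot\mathfrak{g}(V)[d]\) with \(d\) very negative; a nonzero element of a fixed graded piece of \(\mathfrak{g}(V)\) never becomes small, so no genuinely infinite sum of such terms can converge. The Borcherds relations that actually live in \(\overline{U(\mathfrak{g}(V))}\) point the other way: they express the single mode \((A_nB)_m\) as a normally ordered double sum of products whose right factors have degree tending to \(-\infty\). A correct proof of the proposition therefore has to work modulo the \(k\)-th step of the filtration defining the completion, where only finitely many normally ordered quadratic (and longer) monomials survive, write down the convergent relations attached to the finitely many states \(A_{p-j}B\), \(0\le j\le k\), and then \emph{invert} the resulting finite, Vandermonde-type linear system to solve for each surviving monomial in terms of single modes, before inducting on the PBW length. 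That invertibility is the actual mathematical content of the proposition, and it is entirely absent from your argument.
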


We define filtrations of \(\mathcal{U}(V)\) 
\begin{align}\label{eq:filtration}
  \mathcal{F}_k(\mathcal{U})&=\bigoplus_{n\geq k} \mathcal{U}(V)[n]\,,\\\nonumber
  \mathcal{F}^k(\mathcal{U})&=\bigoplus_{n\leq k} \mathcal{U}(V)[n]\,,
\end{align}
satisfying
\begin{align}
  \cdots\mathcal{F}_{k-1}(\mathcal{U})\supset\mathcal{F}_k(\mathcal{U})\supset\mathcal{F}_{k+1}(\mathcal{U})\cdots\,,\\\nonumber
  \cdots\mathcal{F}^{k-1}(\mathcal{U})\subset\mathcal{F}^k(\mathcal{U})\subset\mathcal{F}^{k+1}(\mathcal{U})\cdots\,.
\end{align}

The category \(V\)-mod of representations of the
VOA \(V\) is defined using left \(\mathcal{U}(V)\)-modules.
\begin{definition}
  A representation \(M\) of the VOA \(V\), also called a \(V\)-module, is
  a left \(\mathcal{U}(V)\)-module containing a finite dimensional \(\mathcal{F}_0(\mathcal{U})\) invariant subspace
  \(M_0\) such that \(\mathcal{U}(V)\cdot M_0=M\).
  We denote the abelian category of \(V\)-modules by \(V\)-mod.
\end{definition}

For any \(V\)-module \(M\) we can define the \(\mathbb{C}\)-linear map
\begin{align}
  Y^M:V&\rightarrow \mathrm{End}_{\mathbb{C}}(M)[[z,z^{-1}]]\\\nonumber
  A&\mapsto Y^M(A;z)=\sum_{n\in\mathbb{Z}}\rho_M(A_n)z^{-n-h_A}\,,
\end{align}
where \(\rho_M\) is a representation of the current Lie algebra \(\mathfrak{g}(V)\) on \(M\)
and we have assumed that \(A\in V[h_A]\). Then we have the following formulae
\begin{align}
  Y^M(\Omega;z)&=\id_M\\\nonumber
  \frac{\d}{\d z}Y^M(A;z)&=Y^M(L_{-1}A;z)
\end{align}
and for all \(A\in V[h_A]\) and \(B\in V\), the operators \(Y^M(A;z)\) and
\(Y^M(B;w)\) are local with respect to each other and satisfy the operator
product expansion
\begin{align}
  Y^M(A;z)Y^M(B;w)=\sum_{n\in\mathbb{Z}}Y^{M}(A_nB;w)(z-w)^{-n-h_A}\,.
\end{align}

\begin{definition}
  The current algebra \(\mathcal{U}(V)\) admits the algebra
  anti-automorphism
  \begin{align}
    \sigma: \mathcal{U}(V)[n]\rightarrow \mathcal{U}(V)[-n]\,,
  \end{align}
  which for \(A\in V[h_A]\) is defined by \(\sigma(A_{-n})=
  (-1)^{h_A} (e^{L_{1}}A)_{n}\).
\end{definition}

\begin{definition}
  For any \(V\)-module \(M\) the contragredient \(V\)-module \(M^\ast\) is defined by
  \begin{align}
    M^\ast=\sum_{h\in H}\hom_\mathbb{C} (M[h],\mathbb{C})
  \end{align}
  as a vector space. While the action of the current algebra is given by
  \begin{align}
    \langle P\phi,u\rangle=\langle \phi,\sigma(P)u\rangle\,,
  \end{align}
  for \(\phi\in M^\ast, u\in M, P\in \mathcal{U}(V)\). The contragredient of the contragredient is again the
  original module \(M=(M^\ast)^\ast\).
\end{definition}

\begin{prop}\label{sec:c2props}
  Let \(V\) be a \(c_2\) cofinite VOA.
  \begin{enumerate}
  \item The number of isomorphism classes of simple \(V\)-modules is finite.
  \item For any \(V\)-module \(M\) all Jordan-H\"older series
    \begin{align}\label{eq:JordanHoelder}
      M=M_0\supset M_1\supset\cdots \supset M_n=\{0\}\,,
    \end{align}
    such that \(M_i/M_{i+1}\) are simple \(V\)-modules, are finite.
  \item Each \(V\)-module \(M\) decomposes into a direct sum of finite
    dimensional generalised \(L_0\)-eigenspaces \(M[h]\)
    \begin{align}
      M&=\bigoplus_{h\in H}M[h]\,,\\\nonumber M[h]&=\left\{u\in M,
        (L_0-h)^Nu=0,\text{ for some }N\geq 1\right\}\,,
    \end{align}
    where \(H\) is the set of all weights, a discrete subset of
    \(\mathbb{C}\) generated from the finite set \(H_0\) of highest
    weights by adding all non-negative integers.
  \item \(V\)-mod admits a contravariant endofunctor called the
    contragredient or contragredient dual.
  \end{enumerate}
\end{prop}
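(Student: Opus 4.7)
The overall plan is to reduce each of the four statements to known properties of Zhu's associative algebra $A(V)$, which is finite-dimensional as a consequence of $c_2$-cofiniteness (Zhu, refined by Gaberdiel--Neitzke and Karel--Li). For (1), I would invoke Zhu's bijection between isomorphism classes of simple $V$-modules and simple $A(V)$-modules, implemented by taking the top weight component. A finite-dimensional associative algebra has only finitely many simples up to isomorphism, so the same holds for $V$-mod.

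The most substantial part is (3). The key assertion is that $L_0$ acts locally finitely on every $V$-module $M$, so that $M$ admits a generalised eigenspace decomposition with finite-dimensional eigenspaces. I would deduce this directly from the current-algebra picture set up in the excerpt: by hypothesis $M$ is generated as a $\mathcal{U}(V)$-module from a finite-dimensional $\mathcal{F}_0(\mathcal{U})$-invariant subspace $M_0$, and combined with the grading of $\mathcal{U}(V)$ in \eqref{eq:filtration} one gets $M = \mathcal{F}^0(\mathcal{U})\cdot M_0$, so every $L_0$-weight of $M$ differs from a weight of $M_0$ by a non-negative integer. The finite set $H_0$ is therefore realised as the union of $L_0$-spectra on the tops of the finitely many simples furnished by (1). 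Finite-dimensionality of each $M[h]$ then follows from the fact that $c_2$-cofiniteness equips $\mathcal{U}(V)$ with a finite spanning set of ordered monomials in a fixed finite set of modes modulo higher filtration (Gaberdiel--Neitzke / Miyamoto / Buhl), so only finitely many such monomials of bounded grade can act non-trivially on $M_0$ to reach $M[h]$.

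Once (3) is in hand, (2) is immediate: in any strictly descending chain of submodules, each successive quotient is a non-zero simple $V$-module contributing at least one vector to the weight decomposition, and since any weight-truncation $\bigoplus_{\mathrm{Re}(h)\leq N} M[h]$ is finite-dimensional, only finitely many steps can occur. For (4), the underlying vector space $M^\ast$ is defined in the excerpt via $\sigma$; one verifies it is a $V$-module by choosing $M^\ast_0$ to be the dual of a sufficiently large finite-dimensional graded piece containing a generating subspace of $M$, whose $\mathcal{F}_0(\mathcal{U})$-invariance follows from $\mathcal{F}^0(\mathcal{U})$-invariance of the original piece together with the fact that $\sigma$ exchanges $\mathcal{F}_\bullet$ and $\mathcal{F}^\bullet$. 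Functoriality is given by $f^\ast(\phi) = \phi\circ f$, which is $\mathcal{U}(V)$-linear because $\sigma$ is an algebra anti-automorphism. The genuine obstacle is (3): local finiteness of $L_0$ and finite-dimensionality of generalised weight spaces under $c_2$-cofiniteness are the deep analytic inputs, and with them the remaining items reduce to bookkeeping.
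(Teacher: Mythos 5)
The paper itself offers no proof of this proposition: it is stated and then attributed to \cite{Huang:2009mj,Miamoto:2004}, so there is no in-text argument to compare yours against. Your outline follows the standard route, which is also the route of those references and the one the surrounding text presupposes (finiteness of simples via the finite-dimensional algebra $A_0(V)$ and the bijection of simples; weight-space decomposition via spanning sets; finite length by counting inside a finite-dimensional weight truncation; contragredient via the anti-automorphism $\sigma$), so in that sense it is the ``intended'' proof. Two steps deserve to be made explicit rather than attributed to bookkeeping. First, the identity $M=\mathcal{F}^{0}(\mathcal{U})\cdot M_0$ is not a consequence of the grading \eqref{eq:filtration} alone: one must normal-order arbitrary monomials in $\mathcal{U}(V)$ using the Borcherds commutation relations before the triangular decomposition applies, and it is precisely this reordering, together with $c_2$-cofiniteness, that produces the finite spanning set of each graded piece of $\mathcal{U}(V)\cdot M_0$; this is the content of the Gaberdiel--Neitzke/Miyamoto/Buhl theorem you cite for finite-dimensionality, and the weight-containment claim should be routed through it as well. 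Second, for item 2 your counting argument correctly bounds the length $n$ of any chain as in \eqref{eq:JordanHoelder} by $\sum_{h\in H_0}\dim_{\mathbb{C}}M[h]$, but the statement also implicitly requires that composition series exist, i.e.\ that every nonzero subquotient of $M$ is again an object of $V$-mod and admits a simple subquotient whose lowest weight lies in the finite set $H_0$; this Noetherian/Artinian input comes from the finite-dimensionality of all the algebras $A_k(V)$ (not just $A_0(V)$) and should be stated. With those two points supplied, and with the routine verification that $M^\ast$ is finitely generated over $\mathcal{U}(V)$ (most easily deduced from finite length and the fact that contragredients of simples are simple), your reduction is correct.
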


Proposition \ref{sec:c2props} was shown in \cite{Huang:2009mj,Miamoto:2004}.

To better analyse \(V\)-mod we define finite dimensional \(\mathbb{C}\)-algebras
\({A}_k(V),\ k=0,1,\dots\) in the following way. Consider the degreewise closed right \(\mathcal{U}(V)\)-ideal
\begin{align}
  \mathcal{I}_k=\overline{\mathcal{F}_{k+1}(\mathcal{U})\cdot \mathcal{U}(V)}\subset \mathcal{U}(V)
\end{align}
and consider the two sided \(\mathcal{F}_0(\mathcal{U})\) ideal
\begin{align}
  I_k=\mathcal{I}_k\cap \mathcal{F}_0(\mathcal{U})\,.
\end{align}
Taking the quotient of \(\mathcal{F}_0(\mathcal{U})\) by \(I_k\) we define a series of 
\(\mathbb{C}\)-algebras
\begin{align}
  A_k(V)=\frac{\mathcal{F}_0(\mathcal{U})}{I_{k}}\,.
\end{align}
For the remainder of this section we will assume that the VOA \(V\) is \(c_2\)-cofinite.
\begin{prop}
  For \(k=0,1,2,\dots\) the \(\mathbb{C}\)-algebras \(A_k(V)\) are all finite dimensional.
\end{prop}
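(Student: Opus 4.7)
The plan is to show \(A_k(V)\) has only finitely many non-zero graded pieces and that each piece is finite-dimensional. First I would observe that \(A_k(V)\) inherits a \(\mathbb{Z}_{\geq 0}\)-grading from the \(\mathbb{Z}\)-grading on \(\mathcal{U}(V)\), and that \(A_k(V)[n]=0\) for every \(n\geq k+1\): if \(P\in\mathcal{U}(V)[n]\) with \(n\geq k+1\), then \(P=P\cdot \mathrm{id}_V\in \mathcal{F}_{k+1}(\mathcal{U})\cdot\mathcal{U}(V)\subset \mathcal{I}_k\), so \(P\in I_k\). Hence \(A_k(V)=\bigoplus_{n=0}^{k}A_k(V)[n]\), and the problem is reduced to bounding the dimension of each graded piece with \(0\leq n\leq k\).

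To bound \(A_k(V)[n]\), I would combine the density of \(\mathfrak{g}(V)\to\mathcal{U}(V)\) (the preceding proposition) with the PBW theorem for \(U(\mathfrak{g}(V))\): every element of \(\mathcal{U}(V)[n]\) is obtained as a limit of sums of ordered monomials \(A^{(1)}_{n_1}\cdots A^{(r)}_{n_r}\) with the \(A^{(i)}\) chosen from a basis of \(V\) and \(\sum_i n_i=-n\). The \(c_2\)-cofiniteness of \(V\) supplies a finite basis of \(V/c_2(V)\); the Borcherds relations coming from the OPE let one replace any generator lying in \(c_2(V)\) by a combination of products of modes of strictly deeper index, which after commuting to the left can be absorbed into \(\mathcal{F}_{k+1}(\mathcal{U})\cdot\mathcal{U}(V)\) and therefore vanish in \(A_k(V)\). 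A separate grading estimate, using that the total grade of the monomial is bounded above by \(k\), restricts the surviving mode indices \(n_i\) to a finite window \([-N_k,N_k]\) depending only on \(k\) and the chosen basis of \(V/c_2(V)\). A combinatorial count on monomials built from finitely many generators with bounded mode indices then yields a finite spanning set of \(A_k(V)[n]\).

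The main obstacle is the systematic implementation of the reduction just sketched: one must translate the \(c_2\)-relations in \(V\), which are relations among vectors, into honest relations in \(\mathcal{U}(V)\) that lie in the \emph{two-sided} ideal \(I_k=\mathcal{I}_k\cap\mathcal{F}_0(\mathcal{U})\) rather than merely in the right ideal \(\mathcal{I}_k\) or in the topological completion \(\mathcal{F}_{k+1}(\mathcal{U})\). Iterating the Borcherds normal-ordering while keeping track of both filtrations \(\mathcal{F}_k\) and \(\mathcal{F}^k\) simultaneously is the delicate bookkeeping step. For \(k=0\) this is essentially the classical proof that Zhu's algebra is finite-dimensional for a \(c_2\)-cofinite VOA, recast in the current-algebra language of this paper; for general \(k\) the same strategy applies but one has to verify more carefully that the rewriting of deep modes really lands inside \(I_k\) at every stage of the induction.
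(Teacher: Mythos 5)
The paper states this proposition without proof (it is a known consequence of $c_2$-cofiniteness, going back to the spanning-set results of Gaberdiel--Neitzke, Buhl and Miyamoto and to the higher Zhu algebras $A_n(V)=V/O_n(V)$ of Dong--Li--Mason), so your attempt can only be judged on its own merits. Your first step is correct and cleanly argued: $I_k$ is graded, $\mathcal{U}(V)[n]\subseteq\mathcal{F}_{k+1}(\mathcal{U})\cdot\mathcal{U}(V)\subseteq\mathcal{I}_k$ for $n\geq k+1$, hence $A_k(V)=\bigoplus_{n=0}^{k}A_k(V)[n]$, and the same degree bookkeeping correctly limits the total raising and lowering content of a PBW monomial to at most $k$, which bounds the individual mode indices.

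The gap is in the final ``combinatorial count.'' Finitely many generators with mode indices confined to a finite window do \emph{not} yield a finite spanning set, because the length of the monomials is unbounded: already the powers $(A_0)^N$ of a single zero mode, or repeated degree-cancelling pairs, produce infinitely many PBW monomials in each $A_k(V)[n]$ that your argument does not touch. This is not a bookkeeping issue --- for the (non-$c_2$-cofinite) Heisenberg VOA the zero-mode subalgebra of $A_0(V)$ is the infinite-dimensional polynomial ring $\mathbb{C}[a_0]$, so any correct proof must use $c_2$-cofiniteness to impose polynomial relations on products of zero modes, not merely to shrink the generating set of $V$. The missing ingredient is the Borcherds \emph{associativity} (iterate) relation used in the opposite direction from how you use it: one must show that a product $A_mB_{m'}$ of modes equals, modulo $I_k$, a finite sum of \emph{single} modes $(A_jB)_{m+m'}$ plus terms whose left factor raises degree by more than $k$; iterating this gives a surjection from finitely many copies of $V/c_2(V)$ onto $A_k(V)$ (equivalently, one invokes the no-repeated-index spanning set theorem, or identifies $A_k(V)$ with the classical $V/O_k(V)$ and quotes its known finite-dimensionality). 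Your sketch only applies the Borcherds relations to generators lying in $c_2(V)$, and the absorption of the resulting deep modes into $\mathcal{F}_{k+1}(\mathcal{U})\cdot\mathcal{U}(V)$ only works once their degree exceeds $k$, which is exactly the induction you acknowledge but do not carry out. There is also a minor topological point left implicit: one should note that $A_k(V)[n]$ is a Hausdorff quotient, so that exhibiting a finite-dimensional (hence closed) subspace containing the dense image of $U(\mathfrak{g}(V))[n]$ really does prove finite-dimensionality.
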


We denote by \(A_k(V)\)-mod, the abelian category of finite dimensional left \(A_k(V)\)-modules.
We define the covariant functor
\begin{align}
  A_k: V\text{-mod}&\rightarrow A_k(V)\text{-mod}\\\nonumber
  M&\rightarrow \mathcal{A}_k(M)=\frac{M}{I_k(M)}\,.
\end{align}

\begin{prop}
  \begin{enumerate}
  \item A \(V\)-module \(M\) is the zero module if and only if \(\mathcal{A}_k(M)\) is the
    zero module.
  \item If \(M\) is a simple \(V\)-module, then \(\mathcal{A}_k(M)\) is simple and the
    set of isomorphism classes of simple \(V\)-modules are in one to one correspondence with
    the isomorphism classes of simple \({A}_k(V)\)-modules.
  \end{enumerate}
\end{prop}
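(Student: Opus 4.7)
The plan is to combine the $\mathbb{Z}$-grading of $\mathcal{U}(V)$ with the lower-boundedness of weight spectra in $V$-modules (Proposition~\ref{sec:c2props}(3)), using throughout that $\mathcal{F}_0(\mathcal{U})$ consists of weight-nondecreasing modes while $\mathcal{I}_k$ is generated by modes that raise the weight by at least $k+1$.

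For part~(1) only the forward direction is non-trivial. Assume $M\neq 0$ and fix a minimal weight $h_{\min}$ with $M[h_{\min}]\neq 0$. Any element $p\in I_k$ lies in the degreewise closure of $\mathcal{F}_{k+1}(\mathcal{U})\cdot\mathcal{U}(V)$, so in the completion $p$ is approximated by finite sums of products $xy$ with $\deg x\geq k+1$. For $xy\cdot v$ with $v\in M[h_{\min}]$ to land back in $M[h_{\min}]$ one needs $\deg x+\deg y=0$, forcing $\deg y\leq-(k+1)$; then $yv$ would sit in weight $h_{\min}+\deg y<h_{\min}$ and hence vanishes by minimality. Continuity of the action extends this through the completion, yielding $I_k(M)\cap M[h_{\min}]=0$ and therefore an injection $M[h_{\min}]\hookrightarrow\mathcal{A}_k(M)$.

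For part~(2), let $M$ be simple with top (minimum) weight $h_0\in H_0$. Extending the above argument to each weight piece $M[h_0+i]$ with $0\leq i\leq k$ shows $I_k(M)\cap M[h_0+i]=0$; conversely, for $j\geq 1$ the inclusion $\mathcal{U}(V)[k+j]\subseteq\mathcal{F}_{k+1}(\mathcal{U})\cap\mathcal{F}_0(\mathcal{U})\subseteq I_k$ combined with simplicity (which gives $M[h_0+k+j]=\mathcal{U}(V)[k+j]\cdot M[h_0]$) yields $M[h_0+k+j]\subseteq I_k(M)$. Thus $\mathcal{A}_k(M)\cong\bigoplus_{i=0}^{k}M[h_0+i]$ as a graded $A_k(V)$-module. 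Simplicity over $A_k(V)$ then follows by induction on $k$: the base case $k=0$ is Zhu's classical theorem that $M[h_0]$ is a simple $A_0(V)$-module whenever $M$ is simple, and the inductive step uses the canonical surjection $A_k(V)\twoheadrightarrow A_{k-1}(V)$ compatibly with $\mathcal{A}_k(M)\twoheadrightarrow\mathcal{A}_{k-1}(M)$, reducing the problem to ruling out a non-zero $A_k(V)$-submodule of $\mathcal{A}_k(M)$ whose image in $\mathcal{A}_{k-1}(M)$ is zero---a statement one controls via the Borcherds relations and the action of the weight-raising modes of degrees $1,\dots,k$ now present in $A_k(V)$. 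The bijection between isomorphism classes is then completed by constructing an inverse functor: for each simple $A_k(V)$-module $U$ form the generalised Verma-type module $\mathcal{U}(V)\otimes_{\mathcal{F}_0(\mathcal{U})}U$, extract its unique simple quotient $L_k(U)$, and verify $\mathcal{A}_k(L_k(U))\cong U$.

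The main obstacle is the simplicity step, because $A_k(V)$ consists only of weight-nondecreasing modes and hence the $A_k(V)$-action on $\mathcal{A}_k(M)$ cannot directly connect a higher weight piece back down to $M[h_0]$ within $\mathcal{A}_k(M)$. The resolution forces one either to detour through the ambient module $M$ under the full current algebra $\mathcal{U}(V)$, exploiting its irreducibility, or to carry out the induction on $k$ sketched above with careful bookkeeping of the compatibility between the surjections $A_k(V)\twoheadrightarrow A_{k-1}(V)$ and $\mathcal{A}_k(M)\twoheadrightarrow\mathcal{A}_{k-1}(M)$.
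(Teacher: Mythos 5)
The paper does not actually prove this proposition; it is stated as a known structural fact (for $k=0$ it is Zhu's theorem, and the higher $k$ case is implicitly deferred to the literature on higher Zhu algebras), so there is no in-paper argument to compare yours against. Your part~(1), and your identification $\mathcal{A}_k(M)\cong\bigoplus_{i=0}^{k}M[h_0+i]$ for a simple module $M$ of lowest weight $h_0$, are correct and are the standard weight-bookkeeping arguments.

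The genuine gap is the simplicity step for $k\geq 1$, and it is worse than an unfinished induction: the statement your inductive step sets out to prove is false under the paper's literal definitions. Since $A_k(V)=\mathcal{F}_0(\mathcal{U})/I_k$ is a quotient of $\bigoplus_{n=0}^{k}\mathcal{U}(V)[n]$, every element of $A_k(V)$ is weight-nondecreasing; hence $\bigoplus_{i=1}^{k}M[h_0+i]$ is an $A_k(V)$-submodule of $\mathcal{A}_k(M)$, and in particular the kernel $M[h_0+k]$ of $\mathcal{A}_k(M)\twoheadrightarrow\mathcal{A}_{k-1}(M)$ is precisely a nonzero submodule with zero image in $\mathcal{A}_{k-1}(M)$ --- exactly the object you propose to ``rule out''. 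No manipulation with Borcherds relations can manufacture the missing weight-lowering maps $M[h_0+j]\to M[h_0+i]$, $j>i$, inside $\mathcal{F}_0(\mathcal{U})/I_k$; the obstruction lives in the algebra, not in the module, so your alternative of ``detouring through the ambient module $M$'' does not help either. To make part~(2) true for $k\geq 1$ one must either enlarge the algebra (build $A_k(V)$ from $\mathcal{F}_{-k}(\mathcal{U})$, so that the modes of degrees $-k,\dots,k$ all survive and act block-fully on $\bigoplus_{i=0}^{k}M[h_0+i]$, as in the Dong--Li--Mason picture, in which case your generalised-Verma construction of the inverse bijection is the right strategy), or restrict the claim to $k=0$, where your base case --- Zhu's theorem --- is the entire content and is in fact the only case the paper ever uses in an essential way. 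You correctly flagged this as ``the main obstacle'', but neither of the two resolutions you sketch actually closes it.
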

The \(k=0\) case is most important to our work at hand, we will refer to \(A_0(V)\) as the
zero mode algebra of the VOA \(V\).

To define the notion of the fusion tensor product on \(V\)-mod, we prepare some additional
concepts and definitions. As a first step we define left and right completions of the current
algebra \(\mathcal{U}(V)\). For each \(k\in\mathbb{Z}\) define
\begin{align}
  \mathcal{F}^k(\mathcal{U}^L)&=\prod_{d\leq k}\mathcal{U}(V)[d]
  &\mathcal{F}_k(\mathcal{U}^R)&=\prod_{d\geq k}\mathcal{U}(V)[d]\\\nonumber
  \mathcal{U}^L&=\bigcup_k\mathcal{F}^k(\mathcal{U}^L)
  &\mathcal{U}^R&=\bigcup_k\mathcal{F}_k(\mathcal{U}^R)\,.
\end{align}
Then \(\mathcal{U}^L\) and \(\mathcal{U}^R\) are topological \(\mathbb{C}\)-algebras with
topologies defined by the filtrations \(\mathcal{F}^k(\mathcal{U}^L)\) and \(\mathcal{F}_k(\mathcal{U}^R)\)
and the canonical inclusions
\begin{align}
  \mathcal{U}(V)&\rightarrow \mathcal{U}^L\\\nonumber
  \mathcal{U}(V)&\rightarrow \mathcal{U}^R
\end{align}
have dense images.

For any object \(M\) of \(V\)-mod, we define its closure by
\begin{align}
  \overline{M}=\varprojlim_{k}M/\mathcal{F}_k(\mathcal{U})(M)\,.
\end{align}
Then there is a continuous action of \(\mathcal{U}^R\) on \(\overline{M}\). Note that \(M\)
already has the structure of a \(\mathcal{U}^L\)-module.
By the properties of projective limits \(\overline{M}\) is equipped with
a complete Hausdorff linear topology. Then the action of the current
algebra \(\mathcal{U}(V)\) is uniquely extended to an action of \(\mathcal{U}^R\) on \(\overline{M}\). The two spaces
\(M\) and \(\overline{M}\) share the same generalised \(L_0\)-eigenspaces
\(M[h]=\overline{M}[h]\)
\begin{align}
  M[h]&=\{m\in M|\ (L_0-h)^nm=0,\ \mathrm{for\ some\ } n\geq 1\}\\\nonumber
  \overline{M}[h]&=\{m\in \overline{M}|\ (L_0-h)^nm=0,\ \mathrm{for\ some\ } n\geq 1\}\,,
\end{align}
but unlike \(M\), \(\overline{M}\) also contains infinite sums of generalised \(L_0\)-eigenvectors
\begin{align}
  \overline{M}=\prod_{h\in H}M[h]\,.
\end{align}
The image of the canonical inclusion \(M\rightarrow \overline{M}\) is dense.

For any \(V\)-module \(M\) there is a canonical surjective linear map
\begin{align}
  \mathcal{A}_{k+1}(M)\rightarrow \mathcal{A}_k(M)
\end{align}
and the projective limit
\begin{align}
  \varprojlim_{k}\mathcal{A}_k(M)
\end{align}
has a unique continuous \(\mathcal{U}^R\) action. Indeed
\begin{align}
  \overline{M}=\varprojlim_{k}\mathcal{A}_k(M)
\end{align}
as a continuous \(\mathcal{U}^R\)-module.
We also have the canonical homomorphisms
\begin{align}\label{eq:completionsofg}
  \mathfrak{g}^L(V)&\rightarrow \mathcal{U}^L\\\nonumber
  \mathfrak{g}^R(V)&\rightarrow \mathcal{U}^R\,,
\end{align}
which both have dense images.

For each \(k\in \mathbb{Z}\) we define
\begin{align}
  \mathfrak{g}_k(V)=\sum_{d\geq k}\mathfrak{g}(V)[d]\,.
\end{align}
Then the canonical map
\begin{align}
  \mathfrak{g}_k(V)\rightarrow \mathcal{F}_k(\mathcal{U})\rightarrow \mathcal{F}_k(\mathcal{U}^R)
\end{align}
has dense image. So we have the \(\mathbb{C}\)-linear isomorphism
\begin{align}
  \frac{M}{\mathfrak{g}_{k+1}(V)(M)}\rightarrow \frac{M}{I_k(M)}=\mathcal{A}_k(M)\,.
\end{align}
Therefore we have
\begin{align}
  \overline{M}=\varprojlim_{k}\frac{M}{\mathfrak{g}_{k+1}(V)(M)}
\end{align}

For later use we define for each \(V\)-module \(M\) the quotient
\begin{align}
  \frac{M}{c_1(M)}=\frac{M}{\mathrm{span}\{A_{-n}m\,|\, m\in M,\ A\in V[h],\ n\geq h>0\}}\,.
\end{align}
Then \(M/c_1(M)\) is a finite dimensional complex vector space
and there exists a canonical surjective linear map
\begin{align}
  \frac{M}{c_1(M)}\rightarrow \mathcal{A}_0(M)\,.
\end{align}
Also for later use we introduce the following notation.
\begin{definition}\label{sec:subspaces}
  Let \(M\) be a \(V\)-module, then we define \(L_0\)-graded subspaces \(M^0\) and \(M^s\), such that
    \begin{align}
    M&=M^0\oplus \mathfrak{g}_1(V)\cdot M\,,\\\nonumber
    M&=M^s\oplus c_1(M)\,, 
  \end{align}
  respectively called the \emph{zero mode} and the \emph{special subspace},
  such that the canonical maps
  \begin{align}
    M^0&\rightarrow \mathcal{A}_0(M)\,,\\\nonumber
    M^s&\rightarrow \frac{M}{c_1(M)}\,,
  \end{align}
 are \(\mathbb{C}\)-linear isomorphisms.
\end{definition}
\begin{remark}
  The subspaces \(M^0\) and \(M^s\) are not uniquely defined. We will fix specific
  choices of subspaces in a later section.
\end{remark}

\subsection{General properties of braided monoidal categories}
\label{sec:propofmoncat}

In this section we introduce the concepts of monoidal categories, their rigidness and
some general properties. We mainly follow the appendices of the seminal papers due to
Kazhdan and Lusztig \cite[Appendix A]{Kazhdan:1994} as well as the standard reference for monoidal
categories \cite{Joyal:1993}. We will only be considering monoidal categories that are also
\(\mathbb{C}\)-linear and abelian and we assume that the reader is familiar with basic notions of
abelian categories such as exact sequences, projective modules, injective modules, etc.

\begin{definition}
  A monoidal category is a tuple \((\mathcal{C}, \otimes, \mathbf{1}, \alpha, \lambda,\rho)\) -- just \((\mathcal{C},\otimes,\mathbf{1})\) for short --
  where \(\mathcal{C}\) is a category, \(\otimes : \mathcal{C} \times \mathcal{C} \rightarrow \mathcal{C}\) is the tensor product bi-functor, 
  \(\mathbf{1} \in \mathcal{C}\) is the tensor unit, 
  \(\alpha_{L,M,N} : L \otimes (M \otimes N) \overset{\sim}{\rightarrow} (L\otimes M)\otimes N\) 
  is the associator, \(\lambda_M : \mathbf{1} \otimes M \rightarrow M\) is the left unit isomorphism, 
  and \(\rho_M : M \otimes \mathbf{1} \rightarrow M\) is the right unit isomorphism. These data are 
  subject to conditions, in particular \(\alpha\) satisfies the pentagon axiom and 
  \(\lambda\), \(\rho\), \(\alpha\) obey the triangle axiom. 
\end{definition}

\begin{definition}
  We say that an object \(M\) is weakly rigid if the contravariant
  functor
  \begin{align}
    F_M(-)=\hom(-\otimes M,\mathbf{1})
  \end{align}
  is representable, {\it i.e.} for all objects \(N\) there exits an
  object \(M^\vee\) called the tensor dual\footnote{Strictly speaking \(M^\vee\) is called the right dual.
    There is a similar notion of a left dual. However if the tensor
    category is braided then these notions are related. We will therefore only be considering
  right duals.} such
  that
  \begin{align}\label{eq:weaklyrigid}
    \hom(N\otimes M,1)\cong\hom(N,M^\vee)\,.
  \end{align}
\end{definition}
Therefore if \(M\) is weakly rigid there exits a morphism
\begin{align}
  e_M:M^\vee\otimes M\rightarrow \mathbf{1}
\end{align}
that is isomorphic to \(\text{id}_{M^\vee}\in\hom(M^\vee,M^\vee)\) by the equivalence
\eqref{eq:weaklyrigid}.
A monoidal category is called weakly rigid if all its object are weakly rigid.

\begin{definition}\label{sec:defrigid}
  An object \(M\) is said to be rigid of it is weakly rigid and there exits a morphism
  \begin{align}
    i_M:\mathbf{1}\rightarrow M\otimes M^\vee\,,
  \end{align}
  such that
  \begin{align}\label{eq:rigiditycond}
    \id_M&=\rho_M\circ(\id_M\otimes e_M)\circ \alpha_{M,M^\vee,M}^{-1}\circ(i_M\otimes\id_M)\circ \lambda_M^{-1}\\\nonumber
    \id_{M^\vee}&=\lambda_{M^\vee}\circ(e_M\otimes\id_{M^\vee})\circ\alpha_{M,M^\vee,M}\circ(\id_{M^\vee}\otimes i_M)\circ\rho_{M^\vee}^{-1}\,.
  \end{align}
\end{definition}

\begin{definition}
  A braiding \(b\) on a monoidal category \((\mathcal{C},\otimes,\mathbf{1})\) is a natural
  transformation between the functors \(\otimes\) and \(\otimes\circ P\), where
  \(P:\mathcal{C}\times \mathcal{C}\rightarrow \mathcal{C}\times\mathcal{C}\) is the
  permutation \((M,N)\rightarrow (N,M)\). For \(M, N\) in \(\mathcal{C}\), \(b\) defines
  a morphism  \(b_{M,N}\in\hom(M\otimes N,N\otimes M)\) satisfying:
  \begin{enumerate}
  \item  All \(b_{M,N}\) are isomorphisms.
  \item For any \(L,M,N\) in \(\mathcal{C}\) we have
    \begin{align}
      b_{L\otimes M,N}&=\alpha_{N,L,M}^{-1}\circ (b_{L,N}\circ \mathrm{id}_M)
      \circ \alpha_{L,N,M}\circ (\mathrm{id}_L\otimes b_{M,N})\otimes \alpha_{L,M,N}^{-1}\\\nonumber
      b_{L,M\otimes N}&=\alpha_{M,N,L}\circ (\mathrm{id}_M\otimes b_{L,N})
      \circ \alpha_{M,L,N}^{-1}\circ (b_{L,M}\circ \mathrm{id}_N)\otimes \alpha_{L,M,N}\,.
    \end{align}
  \item For all \(M\) in \(\mathcal{C}\)
    \begin{align}
      b_{M,\mathbf{1}}=b_{\mathbf{1},M}=\mathrm{id}_M\,.
    \end{align}
  \end{enumerate}
\end{definition}

\begin{prop}\label{sec:moncatprop}
  Let \((\mathcal{C},\otimes,\mathbf{1})\) be a monoidal category, then
  \begin{enumerate}
  \item For any rigid object \(M\) in \(\mathcal{C}\) the functors
    \begin{align}
      M\otimes -,\ -\otimes M:\mathcal{C}\rightarrow \mathcal{C}
    \end{align}
    are exact.
  \item For a rigid object \(M\) with dual \(M^\vee\) and arbitrary objects \(N,L\) we have the isomorphism
    \begin{align}
      \hom(N,M\otimes L)\cong \hom(M^\vee\otimes N,L)
    \end{align}
  \item Let \(M\) in \(\mathcal{C}\) be rigid with dual \(M^\vee\). 
    Then for any projective object \(P\) \(M^\vee\otimes P\) is projective. For any injective
    object \(I\),  and \(M\otimes I\) is injective in \(\mathcal{C}\) if \(M^\vee\) is also rigid.
  \item Assume that
    \begin{enumerate}
    \item The abelian category \(\mathcal{C}\) has enough projective
      and injective objects.
    \item All projective objects are injective and all injective
      objects are projective.
    \item All projective objects are rigid.
    \end{enumerate}
    Then if
    \begin{align}
      0\longrightarrow L\longrightarrow M\longrightarrow
      N\longrightarrow 0
    \end{align}
    is an exact sequence in \(\mathcal{C}\) such that two of \(L,M,N\)
    are rigid, then the third object is also rigid.
  \item If \(M,N\) in \(\mathcal{C}\) are rigid objects, then
    \(M\otimes N\) is also rigid and its dual \((M\otimes N)^\vee\) is
    given by
    \begin{align}
      (M\otimes N)^\vee=N^\vee\otimes M^\vee\,.
    \end{align}
  \end{enumerate}
\end{prop}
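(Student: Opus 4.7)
The plan is to tackle the five parts in order, with each later part leveraging the adjunction set up at the start. I would begin with Part 2, which gives the fundamental hom--tensor adjunction \(\hom(M^\vee \otimes N, L) \cong \hom(N, M \otimes L)\) directly from the rigidity data. The isomorphism sends \(f \colon M^\vee \otimes N \to L\) to the composite
\begin{align*}
N \xrightarrow{\lambda_N^{-1}} \mathbf{1} \otimes N \xrightarrow{i_M \otimes \id_N} (M \otimes M^\vee) \otimes N \xrightarrow{\alpha^{-1}} M \otimes (M^\vee \otimes N) \xrightarrow{\id_M \otimes f} M \otimes L,
\end{align*}
with inverse constructed analogously using \(e_M\); the two snake identities of Definition~\ref{sec:defrigid}, together with the pentagon and triangle axioms, yield mutual inverseness. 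Part 1 then follows by recognising \(M^\vee \otimes -\) and \(M \otimes -\) as an adjoint pair. To upgrade to exactness rather than just one-sided exactness, I would invoke that in a braided monoidal category rigidity of \(M\) transfers to \(M^\vee\) (with double dual isomorphic to \(M\)), giving the opposite adjunction \(M \otimes - \dashv M^\vee \otimes -\); both functors are therefore simultaneously left and right adjoints, hence exact. Exactness of \(- \otimes M\) follows by naturality of the braiding \(b\).

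Part 3 is an immediate combination of the previous two. For projective \(P\), \(\hom(M^\vee \otimes P, -) \cong \hom(P, M \otimes -)\) is a composition of the exact functors \(M \otimes -\) (Part 1) and \(\hom(P, -)\) (projectivity of \(P\)), so \(M^\vee \otimes P\) is projective. The injective statement is analogous but requires \(M^\vee \otimes -\) to be exact, which is why rigidity of \(M^\vee\) is explicitly assumed there.

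Part 4 is where I expect the main obstacle. The assumptions (a)--(c) are tailored to allow a homological argument: the object whose rigidity is sought admits a projective cover that is automatically rigid and injective. Concentrating on the case where \(L\) and \(N\) are rigid and we want \(M\) rigid, my plan is to construct \(M^\vee\) as the middle term of an exact sequence \(0 \to N^\vee \to M^\vee \to L^\vee \to 0\). Explicitly, I would choose a projective cover \(P \twoheadrightarrow M\) with kernel \(K\); by (a)--(c), \(P\) is rigid and injective, and the short exact sequences involving \(K, P, M, L, N\) can be dualised using Parts 1 and 3 to yield a candidate \(M^\vee\). The technical difficulty is not in producing \(M^\vee\) itself but in verifying the snake relations \eqref{eq:rigiditycond} for the evaluation and coevaluation built out of those of \(L\) and \(N\) and the chosen resolution; this requires a careful diagram chase whose rows are exact by Part 1 applied to the known rigid objects. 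The other two cases (\(L, M\) rigid; \(M, N\) rigid) are handled symmetrically, exchanging projective covers for injective envelopes.

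Finally, Part 5 is verified by direct construction: set \((M \otimes N)^\vee := N^\vee \otimes M^\vee\) and define
\begin{align*}
e_{M \otimes N} &:= e_N \circ (\id_{N^\vee} \otimes e_M \otimes \id_N) \circ (\text{associators}),\\
i_{M \otimes N} &:= (\text{associators}) \circ (\id_M \otimes i_N \otimes \id_{M^\vee}) \circ i_M.
\end{align*}
The snake identities for \(M \otimes N\) with dual \(N^\vee \otimes M^\vee\) then reduce, via the pentagon and triangle axioms, to the snake identities already known individually for \(M\) and \(N\). This is a direct but lengthy calculation rather than a genuine obstacle.
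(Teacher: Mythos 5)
The paper itself gives no proof of Proposition~\ref{sec:moncatprop}: it is quoted as background and deferred to the appendices of Kazhdan--Lusztig and to Joyal--Street, so there is no in-paper argument to compare against. Judged on its own merits, your treatment of parts 2, 3 and 5 is the standard one and is correct: the adjunction built from \(i_M\), \(e_M\) and the identities \eqref{eq:rigiditycond} gives part 2, composing exact functors gives part 3, and the explicit evaluation and coevaluation for \(M\otimes N\) give part 5. For part 1 note that the proposition is stated for a bare monoidal category, and a right dual alone only makes \(M\otimes-\) a right adjoint (left exact) and \(M^\vee\otimes-\) a left adjoint (right exact); you correctly see that two-sided exactness needs \(M^\vee\) to also serve as a left dual, and your appeal to the braiding is consistent with the paper's own footnote that left and right duals coincide in the braided setting, so this is an acceptable reading of the intended hypotheses rather than an error.

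Part 4 is where your proposal has a genuine gap. Producing the candidate dual is not the issue --- in this setting \(M^\vee=M^\ast\) already exists by weak rigidity and dualising the sequence is easy --- the crux is precisely the step you defer to ``a careful diagram chase'': constructing \(i_M\) and verifying the two identities \eqref{eq:rigiditycond}. These are equalities of specific morphisms, not statements that some map is an isomorphism, so they are not directly amenable to a five-lemma argument on exact rows. The mechanism that actually makes the two-out-of-three statement work (and the one used in the Kazhdan--Lusztig appendix the paper cites) is to reformulate rigidity of \(M\) as the invertibility of a canonical \emph{natural} morphism, for instance the map \(M^\vee\otimes Y\rightarrow (Y^\vee\otimes M)^\vee\) (equivalently, comparing \(M^\vee\otimes-\) with the internal hom functor supplied by weak rigidity). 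That condition is functorial in \(M\), holds for projective objects by hypothesis (c), and is stable under extensions by the five lemma applied to the exact rows coming from part 1; rigidity of the third object then follows. Without this reformulation your plan of ``verifying the snake relations for the evaluation and coevaluation built out of those of \(L\) and \(N\)'' has no clear way to proceed, because there is no canonical recipe for assembling \(i_M\) from \(i_L\) and \(i_N\) across a non-split extension.
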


\subsection{Fusion tensor products and their properties}
\label{sec:explicitdefoffusion}

Fusion plays a central role in analysing conformal field theories and is indeed the
central theme of this paper. Fusion describes the short distance expansion of two
fields on the level of representations. The fusion product \(M\fuse N\) of two \(V\)-modules
\(M\) and \(N\) is the smallest \(V\)-module in which all the fields appearing in the short distance expansions
-- of fields transforming in \(M\) with fields transforming in \(N\) -- transform.

There is a wealth of literature on fusion tensor products in both mathematics and physics. In the case of rational conformal field theory
the representation category is semi-simple and the theory of fusion tensor products is well established \cite{Tsuchiya:1987,DiFrancesco:1997nk,Huang:2008}.
If the conformal field theory is logarithmic, the representation category is not semi-simple. Fortunately there are computational methods in
physics one can fall back on
for defining fusion tensor products without assuming semi-simplicity \cite{Nahm:1994by,Gaberdiel:1993mt}.
For VOAs arising from affine Lie algebras, there exists a mathematically rigorous definition of fusion tensor products due to
Kazhdan and Lusztig \cite{Kazhdan:1994} that does not rely on
semi-simplicity. In the series of papers
\cite{Huang:2003za,Huang:2010pm,Huang:2010pn,Huang:2010pp,Huang:2010pq,Huang:2010pr,Huang:2010ps,Huang:2011fn,Huang:2011fp,Huang:2009xq,Huang:2009mj}
by Huang, Lepowsky and Zhang a non-meromorphic operator
product expansion and braided tensor category structure were
constructed for representation categories of VOAs satisfying natural,
general hypotheses, without the assumption of semi-simplicity.
In this paper we will state our definition of fusion in the spirit of 
\cite{Kazhdan:1994,Tsuchiya:1987,DiFrancesco:1997nk,Nahm:1994by,Gaberdiel:1993mt}
in the case of \(c_2\)-cofinite VOAs without assuming semi-simplicity.

We fix a \(c_2\)-cofinite VOA \(V\) and prepare some notation.
\begin{definition}
  We define the current Lie algebra on the Riemann sphere with punctures at \(0,1,\infty\) by 
  \begin{align}
    \mathfrak{g}^{\mathbb{P}}(V)=\frac{\bigoplus_{h=0}^{\infty}V[h]\otimes\mathbb{C}[z,z^{-1},(z-1)^{-1}]\d z^{1-h}}{
    \nabla(\bigoplus_{n=0}^{\infty}V[h]\otimes\mathbb{C}[z,z^{-1},(z-1)^{-1}]\d z^{-h})}\,,
  \end{align}
  where \(\nabla\) is defined as in \eqref{eq:currentconnection}
  \begin{align}
    \nabla(A\otimes f(z)\d z^{-h_A})=L_{-1}A\otimes f(z)\d
    z^{-h_A}+A\otimes\frac{\d f(z)}{\d z}\d z^{1-h_A}\,.
  \end{align}
  Then \(\mathfrak{g}^{\mathbb{P}}(V)\) has the structure of a Lie algebra given by
  \begin{align}
    &[A\otimes f(z)\d z^{1-h_A},B\otimes g(z)\d z^{1-h_B}]=\\\nonumber
    &\qquad\sum_{m=0}^\infty\frac{1}{m!} A_{m-h_A+1}B\otimes\frac{\d^m f(z)}{\d z^m}g(z)\d z^{m+2-h_A-h_B}\,,
  \end{align}
  for \(A\in V[h_A]\) and \(B\in V[h_B]\).
\end{definition}
 
For \(f(z)\in \mathbb{C}[z,z^{-1},(z-1)^{-1}]\) we denote the Laurent expansions at 0, 1 and infinity by
\(f_0(\xi_0)\in\mathbb{C}((\xi_0)),\ f_1(\xi_1)\in\mathbb{C}((\xi_1))\) and \(f_\infty(\xi_\infty)
\in\mathbb{C}((\xi_\infty^{-1}))\) respectively. For example for
\begin{align}
  f(z)=\frac{1}{z-1}
\end{align}
the expansions and radii of convergence are given by
\begin{align}
  f_0(\xi_0)&=-\sum_{n\geq0} \xi_0^n\,,&\xi_0&=z\,,\quad 1>|z|>0\,,\\\nonumber
  f_1(\xi_1)&=\xi_1^{-1}\,,&\xi_1&=z-1\,,\quad 1>|z-1|>0\,,\\\nonumber
  f_\infty(\xi_\infty)&=\sum_{n\geq 0}\xi_\infty^{-(n+1)}\,,&\xi_\infty&=z\,,\quad |z|>1\,.
\end{align}

We define Lie algebra homomorphisms
\begin{align}
  j_a^L:\mathfrak{g}^{\mathbb{P}}(V)&\rightarrow \mathfrak{g}^L\,,\quad a=0,1\,,\\\nonumber
  j_\infty^R:\mathfrak{g}^{\mathbb{P}}(V)&\rightarrow \mathfrak{g}^R\,,
\end{align}
where \(\mathfrak{g}^L\) and \(\mathfrak{g}^R\) are the left and right completions 
of \(\mathfrak{g}(V)\) defined in \eqref{eq:completionsofg}, by
\begin{align}\label{eq:localexpansions}
  j_a^L([A\otimes f(z)\d z^{1-h_A}])&=[A\otimes f_a(\xi_a)\d\xi_a^{1-h_A}]\,,\\\nonumber
  j_\infty^R([A\otimes f(z)\d z^{1-h_A}])&=[A\otimes f_\infty(\xi_\infty)
  \d \xi_\infty^{1-h_A}]\,.
\end{align}
The Lie algebra maps \(j_a^L\) and \(j_\infty^R\) have dense images.

The analogue of \(\mathfrak{g}_k(V)\) for the current Lie algebra on the Riemann sphere is 
given by
\begin{align}
  \mathfrak{g}_k^{\mathbb{P}}(V)=\mathrm{span}\{[A\otimes f(z)\d z^{1-h_A}]\in \mathfrak{g}^{\mathbb{P}}(V)|\ 
  \mathrm{ord}_\infty (f(z))\leq h_A-1-k\}\,,
\end{align}
where \(\mathrm{ord}_\infty(f(z))\) is the order of the pole of \(f(z)\) at infinity. The image of the map
\begin{align}
  \mathfrak{g}_k^\mathbb{P}(V)\overset{j_\infty^R}{\rightarrow}\mathfrak{g}^R_k\rightarrow\mathcal{F}_k(\mathcal{U}^R)
\end{align}
is dense and therefore the canonical map
\begin{align}
  \frac{\mathfrak{g}^\mathbb{P}(V)}{\mathfrak{g}_k^\mathbb{P}(V)} \rightarrow\frac{\mathfrak{g}^R}{\mathfrak{g}^R_k}
\end{align}
is an isomorphism, hence
\begin{align}
  \varprojlim_k\frac{\mathfrak{g}^\mathbb{P}(V)}{\mathfrak{g}_k^\mathbb{P}(V)} 
  \rightarrow\varprojlim_k\frac{\mathfrak{g}^R}{\mathfrak{g}^R_k}
  =\mathfrak{g}^R\,.
\end{align}

Consider the map
\begin{align}
  j_{1,0}=j_1^L\otimes\mathbbm{1}+\mathbbm{1}\otimes j_0^L:\mathfrak{g}^{\mathbb{P}}(V)\rightarrow \mathfrak{g}^L\otimes \mathbbm{1}+\mathbbm{1}\otimes \mathfrak{g}^L\,.
\end{align}
For any two \(V\)-modules \(M,N\), the vector space \(M\otimes N\) is a left \(\mathfrak{g}^{\mathbb{P}}(V)\)-module
by \(j_{1,0}\).

\begin{prop}
  For each \(k=0,1,2,\dots\)
  \begin{enumerate}
  \item \(\dim_{\mathbb{C}} M\otimes N/\mathfrak{g}_k^{\mathbb{P}}(V)(M\otimes
    N)<\infty\)
  \item The Lie algebra
    \(\mathfrak{g}^R=\varprojlim_k\mathfrak{g}^{\mathbb{P}}(V)/\mathfrak{g}_k^{\mathbb{P}}(V)\)
    acts continuously on the projective limit
    \begin{align}
      \overline{M\fuse N}=\varprojlim_k \frac{M\otimes
        N}{\mathfrak{g}^{\mathbb{P}}_{k}(V)(M\otimes N)}
    \end{align}
    by \(j_\infty^R:\mathfrak{g}^{\mathbb{P}}(V)\rightarrow \mathfrak{g}^R\).
  \end{enumerate}
  Furthermore by \(\mathfrak{g}^R\rightarrow \mathcal{U}^R\) the right
  completion \(\mathcal{U}^R\) of the current algebra acts continuously on
  \(\overline{M\fuse N}\).
\end{prop}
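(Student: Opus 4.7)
My plan is to address the three statements in turn. For part~(1), the strategy is to reduce the quotient $(M\otimes N)/\mathfrak{g}_k^{\mathbb{P}}(V)(M\otimes N)$ to a quotient of the finite-dimensional space $\mathcal{A}_K(M)\otimes_{\mathbb{C}}\mathcal{A}_K(N)$ for some $K=K(k)$. The key tool is, for each $A\in V[h_A]$ and each mode $A_p$ on the first tensor factor that one wishes to eliminate, to construct a rational function $f(z)\in\mathbb{C}[z,z^{-1},(z-1)^{-1}]$ whose Laurent expansion at $z=1$ produces the single term $\xi_1^{p+h_A-1}$ (so that $j_1^L$ realises $A_p$ plus lower-order modes), whose Taylor expansion at $z=0$ can be computed explicitly in the form $\sum_{q\geq 0}c_q\xi_0^q$, and for which $\mathrm{ord}_\infty(f)\leq h_A-1-k$. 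For $p\leq -k$ the elementary choice $f(z)=(z-1)^{p+h_A-1}$ already satisfies all three conditions, and more generally the conditions can be arranged by the partial-fractions decomposition on $\mathbb{P}^1$. The resulting identity
\begin{align*}
(A_pm)\otimes n' \equiv -\sum_{q\geq 0}c_q\, m\otimes (A_{q-h_A+1} n') \pmod{\mathfrak{g}_k^{\mathbb{P}}(V)(M\otimes N)}
\end{align*}
trades a ``deep'' mode on $M$ for a sum of modes on $N$ which, on any fixed vector $n'$, is a finite sum. Iterating this symmetrically between the two tensor factors and using $c_2$-cofiniteness to terminate, one reduces any representative to a lift of $\mathcal{A}_K(M)\otimes \mathcal{A}_K(N)$, which is finite-dimensional by the earlier proposition on $c_2$-cofinite VOAs.

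For part~(2), once~(1) is established, $\overline{M\fuse N}$ is a Hausdorff complete linear topological vector space, with a basis of zero-neighbourhoods given by the kernels of the canonical surjections onto the finite-dimensional quotients. To promote the action of $\mathfrak{g}^{\mathbb{P}}(V)$ on $M\otimes N$ by $j_{1,0}$ to a continuous action on $\overline{M\fuse N}$, I would first verify the filtration-compatibility
\begin{align*}
\mathfrak{g}_k^{\mathbb{P}}(V)\cdot \mathfrak{g}_{k'}^{\mathbb{P}}(V)(M\otimes N)\subset \mathfrak{g}_{k+k'}^{\mathbb{P}}(V)(M\otimes N)\,,
\end{align*}
which follows from the bracket formula displayed in the excerpt together with additivity of the $L_0$-grading. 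Passing to the inverse limit, $\mathfrak{g}^{\mathbb{P}}(V)$ acts continuously on $\overline{M\fuse N}$; by the isomorphism $\mathfrak{g}^{\mathbb{P}}(V)/\mathfrak{g}_k^{\mathbb{P}}(V)\xrightarrow{\sim}\mathfrak{g}^R/\mathfrak{g}_k^R$ already recorded, this action descends uniquely to a continuous action of $\mathfrak{g}^R=\varprojlim_k\mathfrak{g}^R/\mathfrak{g}_k^R$ via $j_\infty^R$.

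For the final statement, the extension to $\mathcal{U}^R$ combines density of $\mathfrak{g}^R\to\mathcal{U}^R$ with completeness of $\overline{M\fuse N}$. The Borcherds/OPE relations cutting out $\mathcal{U}^R$ as a quotient of $\overline{U(\mathfrak{g}(V))}$ are already satisfied by the local actions at $z=1$ and $z=0$ on $M$ and on $N$ separately (each of which is a $\mathcal{U}(V)$-module), and hence on $M\otimes N$ via $j_{1,0}$. Continuity of the $\mathfrak{g}^R$-action from~(2) therefore yields, by continuous extension, a unique continuous action of $\mathcal{U}^R$ on $\overline{M\fuse N}$.

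I expect the main obstacle to lie in part~(1): producing the rational functions $f$ explicitly with the three simultaneous prescriptions at $0,1,\infty$, and verifying that the resulting partial-fractions reduction terminates in finitely many steps with a uniform $K(k)$. The bookkeeping requires tracking the orders at all three punctures under iteration and invoking $c_2$-cofiniteness of $V$ at the precise moment when the recursion must be capped; parts~(2) and~(3) are then essentially formal consequences of~(1) together with the filtration-compatibility of the brackets and the density properties already established in the excerpt.
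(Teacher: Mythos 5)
The paper itself states this proposition without proof --- it is part of the foundational input quoted from the Huang--Lepowsky--Zhang/Miyamoto framework --- so there is no in-paper argument to compare against directly; the closest analogue is the proof of Proposition \ref{sec:Akestimate}, which establishes the essential content of part (1) for $k=1$ by exhibiting the surjection $M^s\otimes N^0\rightarrow\mathcal{A}_0(M\fuse N)$. Your parts (2) and (3) are essentially sound: the filtration compatibility $[\mathfrak{g}^{\mathbb{P}}_{k'}(V),\mathfrak{g}^{\mathbb{P}}_{k}(V)]\subset\mathfrak{g}^{\mathbb{P}}_{k+k'}(V)$ does follow from the bracket formula together with the $L_0$-grading, and once (1) is known the continuity and extension statements are formal.

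The genuine gap is in part (1), exactly at the point you defer to ``bookkeeping''. Your reduction target $\mathcal{A}_K(M)\otimes\mathcal{A}_K(N)$ is symmetric: on each factor you propose to eliminate all modes $A_{-j}$ with $j\geq K+1$ \emph{uniformly in} $h_A$. But eliminating $m\otimes A_{-j}n$ via the relation coming from $f=z^{-j+h_A-1}$ reintroduces on the other factor the modes $A_{i-(h_A-1)}$, $i\geq 0$, i.e.\ modes of depth up to $h_A-1$; since $h_A$ is unbounded over $V$, the reintroduced modes can be strictly deeper than the mode removed, the naive weight $\mathrm{wt}(m)+\mathrm{wt}(n)$ can \emph{increase}, and your recursion need not terminate. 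The argument only closes if on at least one factor one eliminates only modes $A_{-j}$ with $j\geq h_A$ (more generally $j\geq h_A+n-1$), for then the reintroduced modes have depth at most $h_A-1<j$, the weight strictly decreases, and lower-boundedness of the spectrum forces termination --- this is precisely the asymmetric target $M^s\otimes N^0$ of Proposition \ref{sec:Akestimate}, with $M^s$ a lift of $M/c_1(M)$ rather than of $\mathcal{A}_K(M)$. The price of this fix is that one must know $\dim_{\mathbb{C}} M/c_1(M)<\infty$ (and, for general $k$, finite-dimensionality of the quotients by the higher spaces $C_n(M)=\mathrm{span}\{A_{-j}m\,|\,j\geq h_A+n-1\}$); this is a separate, nontrivial consequence of $c_2$-cofiniteness (the Gaberdiel--Neitzke/Buhl/Miyamoto spanning-set theorems) which your sketch never isolates, and it is the precise point where $c_2$-cofiniteness enters --- not merely as a ``cap'' on an otherwise convergent recursion.
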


For each \(h\in\mathbb{C}\), let
\begin{align}
   M\fuse N[h]&=\{m\in \overline{M\fuse N}| \ \exists n\geq 1\ \text{s.t.}\ (L_0-h)^nm=0\}\,.
\end{align}
The fusion product of \(M\) and \(N\) is given by
\begin{align}
  M\fuse N=\bigoplus_{h\in \mathbb{C}}M\fuse N[h]\,.
\end{align}

\begin{prop}
  \begin{enumerate}
  \item The space \(M\fuse N\) is a \(V\)-module.
  \item For each \(k\geq 0\) we have the \(\mathbb{C}\)-linear isomorphisms
    \begin{align}
      \mathcal{A}_k(M\fuse N)=\frac{M\fuse N}{I_k(M\fuse N)}\cong
      \frac{M\otimes N}{\mathfrak{g}^{\mathbb{P}}_{k+1}(M\otimes N)}\,.
    \end{align}
  \end{enumerate}
\end{prop}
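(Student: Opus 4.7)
The plan is to deduce both statements from the definition of $M\fuse N$ as the direct sum of generalised $L_0$-eigenspaces of $\overline{M\fuse N}$, combined with the finiteness statement $\dim (M\otimes N)/\mathfrak{g}^{\mathbb{P}}_k(V)(M\otimes N)<\infty$ of the preceding proposition. For part (1) I would first note that $M\fuse N$ is automatically a $\mathcal{U}(V)$-submodule of $\overline{M\fuse N}$: any $P\in\mathcal{U}(V)[n]$, pulled back from $\mathcal{U}^R$, shifts $L_0$-weight by exactly $n$ and hence sends a finite sum of generalised $L_0$-eigenvectors to a finite sum of generalised $L_0$-eigenvectors. To exhibit a finite-dimensional $\mathcal{F}_0(\mathcal{U})$-invariant generating subspace, I would lift an $L_0$-weight-homogeneous basis of the finite-dimensional quotient $(M\otimes N)/\mathfrak{g}^{\mathbb{P}}_1(V)(M\otimes N)$ to a subspace $M_0\subset M\fuse N$; this lift lies in $M\fuse N$ and not merely in $\overline{M\fuse N}$ because the weight set $H$ inherited from $M$ and $N$ is discrete and bounded below, so homogeneous lifts have finite weight support. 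A standard weight-lowering argument using the negatively graded part of $\mathcal{U}(V)$ then produces every generalised $L_0$-eigenspace of $M\fuse N$ from $M_0$.

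For part (2), recall from the discussion preceding the proposition that $\mathcal{A}_k(X)=X/I_k(X)=X/\mathfrak{g}_{k+1}(V)(X)$ for every $V$-module $X$. It therefore suffices to construct a $\mathbb{C}$-linear isomorphism
\[
\Phi_k:\frac{M\otimes N}{\mathfrak{g}^{\mathbb{P}}_{k+1}(V)(M\otimes N)}\xrightarrow{\ \sim\ }\frac{M\fuse N}{\mathfrak{g}_{k+1}(V)(M\fuse N)}.
\]
I would obtain $\Phi_k$ from the canonical composition $M\otimes N\to\overline{M\fuse N}\to\overline{M\fuse N}/\mathfrak{g}_{k+1}(V)\overline{M\fuse N}$ and then identify the latter quotient with $(M\fuse N)/\mathfrak{g}_{k+1}(V)(M\fuse N)$ via the fact that $\mathfrak{g}_{k+1}(V)$ raises $L_0$-weight by at least $k+1$, so that only finitely many generalised $L_0$-eigenspaces survive the quotient and these all sit inside $M\fuse N$ itself. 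Well-definedness reduces to showing that $\mathfrak{g}^{\mathbb{P}}_{k+1}(V)(M\otimes N)$ is in the kernel, which follows from $j_\infty^R(\mathfrak{g}^{\mathbb{P}}_{k+1}(V))\subset\mathfrak{g}^R_{k+1}$; surjectivity follows from the density of the image of $M\otimes N$ in $\overline{M\fuse N}$ combined with the finite codimensionality of the target; and injectivity is built into the projective-limit description $\overline{M\fuse N}=\varprojlim_k(M\otimes N)/\mathfrak{g}^{\mathbb{P}}_{k+1}(V)(M\otimes N)$, since the $k$-th projection is exactly the map in question.

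The main obstacle is verifying that the subspace $\mathfrak{g}_{k+1}(V)\overline{M\fuse N}$ coincides with the image of $\mathfrak{g}^{\mathbb{P}}_{k+1}(V)(M\otimes N)$ in $\overline{M\fuse N}$; equivalently, that the two natural actions on $M\otimes N$, the one by $j_{1,0}$ used to define the quotient and the one by $j_\infty^R$ coming from $\mathcal{U}^R$, agree up to sign modulo the defining relations. This is the residue-sum identity underlying the whole fusion construction: for any global section $A\otimes f(z)\d z^{1-h_A}$ representing a class in $\mathfrak{g}^{\mathbb{P}}(V)$, the three local Laurent expansions at $0,1,\infty$ must sum to zero. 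Once this compatibility is in hand, the passage from $\overline{M\fuse N}$ back to $M\fuse N$ is routine, as every relevant quotient is finite-dimensional and concentrated in finitely many $L_0$-weights, so the algebraic direct sum and the topological projective limit produce the same quotient.
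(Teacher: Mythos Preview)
The paper does not actually supply a proof of this proposition: it is stated as a structural fact about the fusion construction, with the justification deferred to the Huang--Lepowsky--Zhang papers cited around Theorem~\ref{sec:tensorcat} and the announced work \cite{Hashimoto:2012}. So there is no paper-side argument to compare against.

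That said, your outline is broadly on the right track and captures the essential mechanism. A couple of remarks. For part~(1), the delicate point you gloss over is not the $\mathcal{U}(V)$-stability of the generalised $L_0$-eigenspaces (which is indeed automatic from the grading) but the existence of a \emph{finite-dimensional} $\mathcal{F}_0(\mathcal{U})$-invariant generating subspace. Lifting a basis of the $k=0$ quotient gives a finite set, but $\mathcal{F}_0(\mathcal{U})$-invariance is not immediate from that lift alone; one really needs that the weight set of $M\fuse N$ is bounded below and discrete (which in turn requires the $c_2$-cofiniteness hypothesis on $V$) together with the finite-dimensionality of each generalised eigenspace. You allude to this but do not quite pin it down.

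For part~(2), your identification of the obstacle is correct: the whole argument hinges on the compatibility of the $j_{1,0}$-action defining the quotient with the $j_\infty^R$-action defining $\mathcal{U}^R$ on $\overline{M\fuse N}$, and this is exactly the global residue identity. Once that is in place, the isomorphism is essentially formal from the projective-limit description, as you say. One small inaccuracy: you phrase the obstacle as checking that ``$\mathfrak{g}_{k+1}(V)\overline{M\fuse N}$ coincides with the image of $\mathfrak{g}^{\mathbb{P}}_{k+1}(V)(M\otimes N)$'', but the latter has zero image at level $k$ by construction; what one actually needs is that the $j_\infty^R$-action of $\mathfrak{g}^R_{k+1}$ kills the level-$k$ quotient, which follows from the density of $j_\infty^R(\mathfrak{g}^{\mathbb{P}}_{k+1}(V))$ in $\mathfrak{g}^R_{k+1}$ together with the finite-dimensionality of that quotient.
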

Most notably we have the \(A_0(V)\)-module isomorphism
\begin{align}
  \mathcal{A}_{0}(M\fuse N)\cong\frac{M\otimes N}{\mathfrak{g}_1^{\mathbb{P}}(M\otimes N)}\,.
\end{align}

\begin{thm}\label{sec:tensorcat}
  \begin{enumerate}
  \item The triplet \((V\mathrm{-mod},\fuse, V)\) is a braided
    monoidal category with unit object \(V=\mathbf{1}\).
  \item For any \(V\)-module \(M\), the contravariant functor from \(V\)-mod to
    \(\mathbb{C}\)-Vec the category of complex vector spaces
    \begin{align}
      F_M:V\mathrm{-mod}&\rightarrow \mathbb{C}\mathrm{-Vec}\\\nonumber
      N&\mapsto F_M(N)=\hom_{V\mathrm{-mod}}(N\fuse M,V^\ast)\,,
    \end{align}
    where \(V^\ast\) is the contragredient of \(V\), is represented by \(M^\ast\) the contragredient of \(M\),
    {\it i.e.}
    \begin{align}
      F_M(N)\cong \hom_{V\mathrm{-mod}}(N,M^\ast)\,.
    \end{align}
  \end{enumerate}
\end{thm}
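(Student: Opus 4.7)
The plan is to handle the two parts separately. For part (1), since Huang, Lepowsky and Zhang have already constructed a braided monoidal structure on $V$-mod under the general hypotheses covering $c_2$-cofinite VOAs, I would identify the fusion product $\fuse$ defined here, via the projective limit $\overline{M \fuse N} = \varprojlim_k (M \otimes N)/\mathfrak{g}^{\mathbb{P}}_k(V)(M \otimes N)$, with their $P(z)$-tensor product at $z = 1$. This reduces to exhibiting a universal property: for each $V$-module $X$, a natural bijection between $\hom_{V\text{-mod}}(M \fuse N, X)$ and the space of $P(1)$-intertwining operators with domain $M \otimes N$, whose image naturally lies in $\overline{X}$. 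The projective-limit presentation of $\fuse$ encodes precisely the Jacobi identity of such intertwiners through the three punctures at $0, 1, \infty$ defining $\mathfrak{g}^{\mathbb{P}}(V)$, so once the two tensor products are matched as functors, the associator, unit isomorphisms, braiding and the pentagon, triangle and hexagon axioms are inherited from Huang-Lepowsky-Zhang.

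For part (2), I would produce the natural isomorphism $\hom(N \fuse M, V^\ast) \cong \hom(N, M^\ast)$ by unwinding both sides into the same space of invariant trilinear forms on $N \otimes M \otimes V$. By the definition of the contragredient, a $V$-module map $\varphi : N \fuse M \to V^\ast$ is equivalent to a bilinear pairing $\tilde\varphi : (N \fuse M) \times V \to \mathbb{C}$ satisfying $\tilde\varphi(Px, v) = \tilde\varphi(x, \sigma(P) v)$ for all $P \in \mathcal{U}(V)$. Pulling $\tilde\varphi$ back along the canonical dense map $N \otimes M \to \overline{N \fuse M}$ and using the projective-limit presentation, such pairings correspond to $\mathbb{C}$-trilinear forms on $N \otimes M \otimes V$ that are annihilated by $\mathfrak{g}^{\mathbb{P}}(V)$ acting through $j_0^L$ on $N$, $j_1^L$ on $M$ and $j_\infty^R$ composed with $\sigma$ on $V$. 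Because the three punctures $0, 1, \infty$ on $\mathbb{P}^1$ play symmetric roles under the automorphism group, one may relabel so that the $V$-insertion sits at $\infty$; evaluating the resulting trilinear form at the vacuum $\Omega \in V$, which generates $V$ as a $\mathcal{U}(V)$-module, extracts a bilinear pairing $N \otimes M \to \mathbb{C}$, and the remaining $\mathfrak{g}^{\mathbb{P}}(V)$-invariance at $0$ and $1$ translates precisely into the condition that the adjoint map $N \to M^\ast$ is $\mathcal{U}(V)$-linear. Naturality in $N$ is then immediate from the construction.

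The main obstacle I expect is analytic rather than algebraic: one must verify that a bilinear form defined on the image of $N \otimes M$ and annihilated by every filtration piece $\mathfrak{g}^{\mathbb{P}}_k(V)(N \otimes M)$ extends continuously to $\overline{N \fuse M}$ and then descends to the genuine generalised $L_0$-eigenspace decomposition $N \fuse M = \bigoplus_h N \fuse M[h]$, rather than only to its completion. This requires careful bookkeeping with the filtrations $\mathcal{F}_k(\mathcal{U})$ and $\mathcal{F}^k(\mathcal{U})$, the density of $\mathcal{U}(V)$ in $\mathcal{U}^R$, and the finite-dimensionality of $(M \otimes N)/\mathfrak{g}^{\mathbb{P}}_k(V)(M \otimes N)$ established in the preceding proposition. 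Once continuity and descent to $L_0$-finite vectors are in place, the $S_3$-symmetry of the three-punctured sphere and the identification of $V$-insertions with the trivial coefficient are essentially formal, and the two parts of the theorem fall out together.
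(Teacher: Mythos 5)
This theorem is not proved in the paper at all: immediately after its statement the authors write that it ``was proved in the series of papers \dots by Huang, Lepowsky and Zhang'', with a footnote locating the precise statement (Theorem 4.13 of the cited HLZ paper, and Section 5.2 of another for \(V=\mathcal{W}_p\)), and they mention a new proof in preparation. So there is no in-paper argument to compare yours against; the honest comparison is between your sketch and the act of citation. Your part (1) is in the same spirit --- reduce to HLZ --- but the step you describe as a reduction, namely matching the projective-limit fusion product \(\varprojlim_k (M\otimes N)/\mathfrak{g}^{\mathbb{P}}_{k}(V)(M\otimes N)\) with the HLZ \(P(z)\)-tensor product, is itself the entire mathematical content and is genuinely nontrivial: HLZ define \(M\boxtimes_{P(z)}N\) inside the full dual \((M\otimes N)^{*}\) via compatibility and local grading-restriction conditions, not as a quotient, and showing the two constructions agree requires dualising the finite-dimensional quotients carefully and checking that the resulting \(L_0\)-finite part carries the same universal property with respect to intertwining maps. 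Treating this as a bookkeeping step understates where the difficulty lies; the equivalence of the NGK-style quotient construction with the HLZ construction is a theorem in its own right.

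For part (2), your outline via trilinear forms is the standard route, but the phrase ``the three punctures play symmetric roles, so one may relabel'' conceals the two real inputs. First, permuting the insertion at \(\infty\) with one at \(0\) or \(1\) is not a mere relabelling of \(\mathfrak{g}^{\mathbb{P}}(V)\): it is the adjoint isomorphism \(I\binom{L}{N,M}\cong I\binom{M^{*}}{N,L^{*}}\) between spaces of intertwining operators, whose construction uses the \(\sigma\)-twist (the \(e^{L_1}\) and \((-1)^{h}\) factors in the contragredient action) and convergence of the relevant formal series; it does not follow formally from M\"obius covariance of the punctured sphere. Second, the final step --- that an element of \(I\binom{M^{*}}{N,V}\) is determined by evaluation on \(\Omega\), giving \(I\binom{M^{*}}{N,V}\cong\hom_{V\text{-mod}}(N,M^{*})\) --- needs the creation property and the \(L_{-1}\)-derivative axiom, not just that \(\Omega\) generates \(V\). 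Your closing paragraph correctly identifies the continuity/descent issue for the completion \(\overline{N\fuse N}\) versus the \(L_0\)-finite part, but as written the proposal is a plan whose two pivotal steps are asserted rather than carried out; since the authors themselves delegate exactly these points to the HLZ papers, the safest conclusion is that your sketch reproduces the shape of the external proof without supplying the parts that make it work.
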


Theorem \ref{sec:tensorcat} was proved in the series of papers
\cite{Huang:2003za,Huang:2010pm,Huang:2010pn,Huang:2010pp,Huang:2010pq,Huang:2010pr,Huang:2010ps,Huang:2011fn,Huang:2011fp,Huang:2009xq,Huang:2009mj}
(originally announced in \cite{Huang:2003za}) by Huang, Lepowsky and
Zhang.\footnote{Theorem 4.13 in \cite{Huang:2009mj} states that Theorem
\ref{sec:tensorcat} holds for any \(c_2\)-cofinite VOA \(V\) satisfying \(\dim V[0]=1, \dim
V[n]=0, n<0\). Theorem \ref{sec:tensorcat} for the case \(V=\mathcal{W}_p\) was
explicitly spelled out in \cite[Section 5.2]{Huang:2009xq}.}
A new proof of Theorem \ref{sec:tensorcat} for \(c_2\)-cofinite VOAs on Riemann
surfaces of arbitrary genus is in preparation \cite{Hashimoto:2012}.

Unfortunately the definition of \(M\fuse N\) is rather difficult to work with, because 
even though the image of canonical map
\begin{align}
  M\otimes N\rightarrow \overline{M\fuse N}
\end{align}
is dense, it generally does not lie in \(M\fuse N\).

However for each \(k\geq0\) we can make use of the
isomorphism
\begin{align}
  \mathcal{A}_k(M\fuse N)\cong\frac{M\otimes N}{\mathfrak{g}^{\mathbb{P}}_{k+1}(V)(M\otimes N)}\,.
\end{align}
By analysing these quotients, we can study  \(M\fuse N\) level for level.
For any given element \(m\otimes n\in M\otimes N\), we will denote the class
that it represents in
\(\mathcal{A}_k(M\fuse N)\) by \([m\otimes n]\).

As we shall see in the sequel, for the purposes of this paper it will be
sufficient to make statements about \(\mathcal{A}_0(M\fuse N)\) and in one case \(\mathcal{A}_1(M\fuse N)\). 
As a vector space \(\mathfrak{g}^{\mathbb{P}}_{1}(V)\) is spanned by elements of the form
\begin{align}
  [v\otimes z^{-n+h-1}\d z^{1-h}], \quad v\in V[h]
\end{align}
for \(n\geq 0\) and
\begin{align}
  [v\otimes (z-1)^{-m+h-1}\d z^{1-h}], \quad v\in V[h]
\end{align}
for \(m\geq h\).
From the expansions defined above it therefore follows that in \(\mathcal{A}_0(M\dot\otimes N)\)
for \(1\leq n\leq h-1\) we have the relations
\begin{align}\label{eq:lowdiagaction}
  &j_{1,0}([v\otimes z^{-n+h-1}\d z^{1-h}])=\\\nonumber
  &\qquad\qquad\sum_{k=0}^{h-1-n}\binom{h-1-n}{k}v_{k-(h-1)}\otimes\mathbbm{1}+\mathbbm{1}\otimes v_{-n}=0\,,
\end{align}
and for \(m\geq h\)
\begin{align}\label{eq:diagaction}
  &j_{1,0}([v\otimes z^{-m+h-1}\d z^{1-h}])=\\\nonumber
  &\qquad\qquad\sum_{k\geq0}\binom{m-h+k}{m-h}(-1)^kv_{k-(h-1)}\otimes\mathbbm{1}+\mathbbm{1}\otimes v_{-m}=0\,,\\\nonumber
  &j_{1,0}([v\otimes (z-1)^{-m+h-1}\d z^{1-h}])=\\\nonumber
  &\qquad\qquad v_{-m}\otimes\mathbbm{1}+\sum_{k\geq0}\binom{m-h+k}{m-h}(-1)^{h-1-m}\mathbbm{1}\otimes v_{k-(h-1)}=0\,.
\end{align}
The action of the zero modes is given by
\begin{align}
  j_{1,0}(v_0)&=j_{1,0}([v\otimes z^{h-1}\d z^{1-h}])\\\nonumber
  &=\sum_{k=0}^{h-1}\binom{h-1}{k}v_{k-(h-1)}\otimes\mathbbm{1}+\mathbbm{1}\otimes v_0\,.
\end{align}
For the generators of the Virasoro algebra this means
\begin{align}\label{eq:vircomult}
  L_{-1}\otimes \mathbbm{1}&\simeq -\mathbbm{1}\otimes L_{-1}\\\nonumber
  L_{-n}\otimes \mathbbm{1}&\simeq \sum_{j=0}^\infty\binom{n-2+j}{n-2}(-1)^n\mathbbm{1}\otimes L_{j-1}\\\nonumber
  \mathbbm{1}\otimes L_{-n}&\simeq-\sum_{j=0}^\infty \binom{n-2+j}{n-2}(-1)^j L_{j-1}\otimes\mathbbm{1}\,,
\end{align}
for \(n\geq 2\) and
\begin{align}\label{eq:L0action}
  j_{1,0}(L_0)=L_{-1}\otimes\mathbbm{1}+L_0\otimes\mathbbm{1}+\mathbbm{1}\otimes L_0\,.
\end{align}

To aid us in computing \(\mathcal{A}_0(M\fuse N)\), we make use of the special and zero mode subspaces in
definition \ref{sec:subspaces} to state the following proposition due to Nahm \cite{Nahm:1994by}.
\begin{prop}\label{sec:Akestimate}
  Let \(M\) and \(N\) be \(V\)-modules. Then the canonical \(\mathbb{C}\)-linear map
  \begin{align}
    M^s\otimes N^0\rightarrow \mathcal{A}_0(M\fuse N)\rightarrow 0\,.
  \end{align}
  is a surjective. 
\end{prop}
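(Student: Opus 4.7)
The plan is to show $M \otimes N = M^s \otimes N^0 + \mathfrak{g}_1^{\mathbb{P}}(V)(M \otimes N)$, so that every class in $\mathcal{A}_0(M \fuse N)$ is represented by an element of $M^s \otimes N^0$. I read the relations (\ref{eq:lowdiagaction})--(\ref{eq:diagaction}) as two rewriting moves on $\mathcal{A}_0(M \fuse N)$. \emph{Move} (A), from the second line of (\ref{eq:diagaction}), rewrites any $[v_{-b} m' \otimes n]$ with $v \in V[h]$ and $b \geq h$ (so that $v_{-b} m' \in c_1(M)$) as a finite sum of terms $[m' \otimes v_{k-(h-1)} n]$ for $k \geq 0$; it clears the $c_1(M)$-part of the left-hand factor at the price of inserting action on the right. \emph{Move} (B), combining the first line of (\ref{eq:diagaction}) (for $a \geq h$) and the shallow relation (\ref{eq:lowdiagaction}) (for $1 \leq a \leq h-1$), rewrites any $[m \otimes v_{-a} n']$ with $v \in V[h]$ and $a \geq 1$ (so $v_{-a} n' \in \mathfrak{g}_1(V) \cdot N$) as a finite sum of $[v_{k-(h-1)} m \otimes n']$.

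The reduction alternates the two moves. Given $[m \otimes n]$, decompose $n = n^0 + \sum_i v^{(i)}_{-a_i} n_i$ via $N = N^0 \oplus \mathfrak{g}_1(V) N$ and apply (B) to each $[m \otimes v^{(i)}_{-a_i} n_i]$, pushing the right-hand factor into $N^0$. Next, decompose $m = m^s + \sum_j v^{(j)}_{-b_j} m'_j$ via $M = M^s \oplus c_1(M)$ and apply (A) to each $[v^{(j)}_{-b_j} m'_j \otimes n^0]$, pushing the left-hand factor into $M^s$. The term $[m^s \otimes n^0]$ already lies in the image of $M^s \otimes N^0$; the residual terms have their ``bad'' action displaced to the opposite side and are fed back into the same procedure.

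The main obstacle is termination, since (A) and (B) shift the $L_0$-weights in opposite directions: (A) strictly lowers $w_M$ by $b \geq h \geq 1$ but can raise $w_N$ by up to $h-1$, while (B) strictly lowers $w_N$ by $a \geq 1$ but may raise $w_M$. Termination is secured by combining several inputs from Proposition \ref{sec:c2props} and the structure of the subspaces: both $M$ and $N$ have $L_0$-weights in a discrete set bounded below; every generator of $c_1(M)$ and of $\mathfrak{g}_1(V) \cdot N$ strictly raises the $L_0$-weight, so $M[w_M^{\min}] \subset M^s$ and $N[w_N^{\min}] \subset N^0$, providing the base of the induction; and $\mathcal{A}_0(M \fuse N)$ is a finite-dimensional $A_0(V)$-module whose $L_0$-spectrum is bounded, so any term produced with total weight exceeding this bound necessarily vanishes. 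A careful double induction --- on $w_M$ for (A)-reductions and on $w_N$ for (B)-reductions --- then terminates and exhibits $[m \otimes n]$ as the class of an element of $M^s \otimes N^0$.
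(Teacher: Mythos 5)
Your two rewriting moves are exactly the paper's manipulations \eqref{eq:moveright} and \eqref{eq:moveleft}, and the alternating reduction is the paper's strategy; the problem lies in how you close the termination argument. Your third input --- that \(\mathcal{A}_0(M\fuse N)\) has bounded \(L_0\)-spectrum and hence ``any term produced with total weight exceeding this bound necessarily vanishes'' --- is false. The \(L_0\)-action on the fusion quotient is \(j_{1,0}(L_0)=L_{-1}\otimes\mathbbm{1}+L_0\otimes\mathbbm{1}+\mathbbm{1}\otimes L_0\) as in \eqref{eq:L0action}, so the class \([m\otimes n]\) is not an \(L_0\)-eigenvector of eigenvalue \(w_M+w_N\); a representative of large total weight is merely equal, in the quotient, to a combination of representatives of smaller total weight, not to zero. (Concretely, in the later computation of \(\mathcal{A}_0(X_1^-\fuse X_1^-)\) the classes \([L_{-1}v_1^{\epsilon_1}\otimes v_1^{\epsilon_2}]\) are a priori nontrivial combinations of eigenvectors of eigenvalues \(0\) and \(2p-1\), not automatically zero.) Since you invoke precisely this claim to control the fact that move (B) ``may raise \(w_M\)'', your double induction is not well-founded as written.

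The missing ingredient, which is what the paper's proof actually runs on, is quantitative and is visible in \eqref{eq:diagaction}: in move (A) the mode \(v_{-b}\) with \(b\geq h\) is traded for modes \(v_{k-(h-1)}\), \(k\geq 0\), acting on the right, so the left factor loses weight \(b\) while the right factor gains at most \(h-1<b\), and the total weight \(w_M+w_N\) strictly decreases; in move (B) the mode \(v_{-a}\) is traded for modes of index at least \(-a\) acting on the left, so the left factor gains at most the weight \(a\) that the right factor loses, and the total weight does not increase. Consequently (B) strictly decreases \(w_N\), which is bounded below, so only finitely many applications of (B) are needed to land in \(M\otimes N^0\); and each application of (A), together with the subsequent (B)'s, strictly decreases the weight of the left-hand factor (equivalently the total weight), which is also bounded below. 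This yields termination without any appeal to the spectrum of \(\mathcal{A}_0(M\fuse N)\). With that observation substituted for your vanishing claim, your argument coincides with the paper's.
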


\begin{proof}
  Let \(m\otimes n\in M\otimes N\), \(A\in V[h_A]\), \(B\in V[h_B]\), \(k\geq h_A\) and \(\ell>0\).
  We introduce two kinds manipulations by using the formulae in~\eqref{eq:diagaction}
  \begin{enumerate}
  \item Moving modes to the right
    \begin{align}\label{eq:moveright}
      [A_{-k}m\otimes n]=-\sum_{j\geq 0}\binom{k-h_A+j}{k-h_A}[m\otimes A_{j-(h_A-1)}n]\,.
    \end{align}
    Note that \(A_{-k}\) is replaced by modes with a mode number that is greater than \(-k\), {\it i.e.} the grading is
    lowered.
  \item Moving modes to the left
    \begin{align}\label{eq:moveleft}
      [m\otimes B_{-\ell}n]=-\sum_{j\geq 0}\binom{h_B-1-\ell}{j}[B_{j-(h_B-1)}m\otimes n]\,.
    \end{align}
    Note that \(B_{-\ell}\) is replaced by modes with a mode number that is greater than or equal to \(-\ell\), {\it i.e.} the grading is
    lowered or stays the same.
  \end{enumerate}

  Since the image of the canonical map \(\mathfrak{g}_k(V)\rightarrow \mathcal{F}_1(V)\) is dense and
  \(\mathcal{F}_1(V)(M)=I_0(M)\), any element \(n\) in \(N\) is represented by \(n=x\cdot n_0\) for
  some \(x\in\mathfrak{g}_1(V)\) and \(n_0\in {N}^0\). Then by using formula \eqref{eq:moveleft}, the class \([m\otimes n]\) can be represented
  by \(m^\prime\otimes n_0\) for some \(m^\prime\in M\). Consider \(m\in c_1(M)\) and \(n_0\in {N}^0\), we can assume that \(m\) is
  homogeneous such that \(m\in c_1(M)[h]\) for some \(h\). By definition we can assume that \(m\) has the form
  \begin{align}
    m=A_{-k}m_0
  \end{align}
  for some \(m_0\in M^s\) and \(A\in V[h_A],\ k\geq h_A\). Then by using formula \eqref{eq:moveright}
  \begin{align}
    [m\otimes n_0]=-\sum_{j\geq 0}\binom{k-h_A+j}{k-h_A}[m_0\otimes A_{j-(h_A-1)}n_0]\,.
  \end{align}
  For each summand we use again the fact that the class \([m_0\otimes A_{j-(h_A-1)}n_0]\)
  can be represented by an element \(m^\prime_0\otimes n_0^\prime\in M\otimes {N}^0\). If we decompose \(m_0^\prime\) into homogeneous
  summands, then the weights of the individual summands will all be less then the original weight \(h\)
  of \(m\). Because the weights are bounded from below, a finite number of applications of the formulae
  \eqref{eq:moveright} and \eqref{eq:moveleft} will yield a representative in \({M}^s\otimes {N}^0\) 
  for any class \([m\otimes n]\).
\end{proof}

Before we end this section on the fusion product we consider the relation between the
fusion product \(\fuse_{V}\) of a VOA \(V\) and the fusion product \(\fuse_{V^\prime}\)
of a subVOA \(V^\prime\subset V\).
\begin{prop}\label{sec:subfusion}
  Let \(V^\prime\) be a \(c_2\)-cofinite subVOA of the VOA \(V\).
  Let \(M\) and \(N\) be \(V\)-modules, then they are also \(V^\prime\)-modules and
  there exits a surjective \(V^\prime\)-module map
  \begin{align}
    M\fuse_{V^\prime}N\rightarrow M\fuse_{V}N\,.
  \end{align}
\end{prop}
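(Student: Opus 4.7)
The plan is to exploit the level-wise characterization of the fusion product, namely the identification
\[
\mathcal{A}_k(M\fuse N)\cong\frac{M\otimes N}{\mathfrak{g}^{\mathbb{P}}_{k+1}(V)(M\otimes N)}\,,
\]
together with the fact that the data entering the definition of \(\mathfrak{g}^{\mathbb{P}}\) and of its filtration by the subspaces \(\mathfrak{g}_k^{\mathbb{P}}\) are natural in the VOA. The inclusion \(V^\prime\hookrightarrow V\) induces a canonical graded Lie algebra homomorphism \(\iota:\mathfrak{g}^{\mathbb{P}}(V^\prime)\rightarrow\mathfrak{g}^{\mathbb{P}}(V)\) because an element \([A\otimes f(z)\d z^{1-h_A}]\) with \(A\in V^\prime[h_A]\subset V[h_A]\) defines a class for either VOA, and the defining condition \(\mathrm{ord}_\infty(f(z))\leq h_A-1-k\) depends only on the rational coefficient; so \(\iota\) preserves the filtrations, \(\iota(\mathfrak{g}_k^{\mathbb{P}}(V^\prime))\subset\mathfrak{g}_k^{\mathbb{P}}(V)\).

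Since \(M\) and \(N\) are by hypothesis \(V\)-modules, the diagonal map \(j_{1,0}\) on \(M\otimes N\) factors compatibly through \(\iota\); that is, \(j_{1,0}^{V^\prime}=j_{1,0}^{V}\circ\iota\). Consequently
\[
\mathfrak{g}_{k+1}^{\mathbb{P}}(V^\prime)\cdot(M\otimes N)\ \subset\ \mathfrak{g}_{k+1}^{\mathbb{P}}(V)\cdot(M\otimes N)\,,
\]
so I obtain, for each \(k\geq 0\), a canonical surjective \(\mathbb{C}\)-linear map
\[
\phi_k:\ \mathcal{A}_k(M\fuse_{V^\prime}N)\ \twoheadrightarrow\ \mathcal{A}_k(M\fuse_{V}N)\,,
\]
and these are compatible with the projections \(\mathcal{A}_{k+1}\rightarrow\mathcal{A}_k\) on both sides.

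Next I pass to the inverse limit. Using \(\overline{M\fuse N}=\varprojlim_k\mathcal{A}_k(M\fuse N)\), the compatible family \(\{\phi_k\}\) assembles into a continuous \(\mathbb{C}\)-linear map
\[
\overline{\phi}:\ \overline{M\fuse_{V^\prime}N}\ \twoheadrightarrow\ \overline{M\fuse_{V}N}\,,
\]
which is surjective because each \(\phi_k\) is surjective and the transition maps are surjective (so the Mittag--Leffler condition is trivially satisfied on the finite-dimensional \(\mathcal{A}_k\)). Naturality of \(j_\infty^R\) under \(\iota\) shows that \(\overline{\phi}\) intertwines the \(\mathcal{U}^R(V^\prime)\)-actions, where on the right \(\mathcal{U}^R(V^\prime)\) acts via the canonical map \(\mathcal{U}^R(V^\prime)\rightarrow\mathcal{U}^R(V)\). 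Restricting \(\overline{\phi}\) to generalised \(L_0\)-eigenspaces (which on both sides are finite dimensional and coincide with the corresponding eigenspaces of the fusion products themselves) yields the desired \(V^\prime\)-module map
\[
M\fuse_{V^\prime}N\ \longrightarrow\ M\fuse_{V}N\,.
\]
Surjectivity on each eigenspace is obtained from surjectivity of \(\phi_k\) for any \(k\) large enough that the eigenspace of the given weight already injects into \(\mathcal{A}_k\).

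The main subtlety will be the eigenspace-level surjectivity: one must argue that a preimage can be chosen inside \(M\fuse_{V^\prime}N\) itself rather than in its completion. This is handled by noting that the finite-dimensional eigenspace \((M\fuse_{V}N)[h]\) sits inside some \(\mathcal{A}_k\) once \(k\) is chosen larger than the weights of all elements obstructing its identification, and surjectivity of \(\phi_k\) there lifts every vector to an element of \(\mathcal{A}_k(M\fuse_{V^\prime}N)\) whose generalised \(L_0\)-eigencomponent of weight \(h\) provides the required preimage in \(M\fuse_{V^\prime}N\).
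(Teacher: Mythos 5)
Your proposal is correct and follows essentially the same route as the paper: both use the canonical inclusion of the filtered current Lie algebras induced by \(V'\subset V\) to obtain level-wise surjections \(\mathcal{A}_k(M\fuse_{V'}N)\rightarrow\mathcal{A}_k(M\fuse_V N)\), and then exploit the stabilisation of generalised \(L_0\)-eigenspaces for large \(k\) to descend to a surjective \(V'\)-module map on the fusion products themselves. The paper's version is terser (it does not spell out the inverse-limit step or the lifting of preimages), but the substance is identical.
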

\begin{proof}
  Since
  \begin{align}
    \mathfrak{g}_k(V^\prime)\subset \mathfrak{g}_k(V)
  \end{align}
  there is a canonical surjection of \(A_k(V^\prime)\)-modules
  \begin{align}\label{eq:subfusesurj}
    \mathcal{A}_k(M\fuse_{V^\prime}N)=\frac{M\otimes N}{\mathfrak{g}^\mathbb{P}_{k+1}(V^\prime)}
    \rightarrow \frac{M\otimes N}{\mathfrak{g}^\mathbb{P}_{k+1}(V)}=\mathcal{A}_k(M\fuse_{V}N)\,.
  \end{align}
  Note that for sufficiently large \(k\), a given generalised \(L_0\)-eigenspace is stable,
  {\it i.e.}
  \begin{align}
    \mathcal{A}_k(M\fuse_{V}N)[h]=\mathcal{A}_{k+1}(M\fuse_{V}N)[h]=
    (M\fuse_{V}N)[h],\ k>>0\,.
  \end{align}
  Therefore the surjection \eqref{eq:subfusesurj} implies a surjection
  \begin{align}
    M\fuse_{V^\prime}N[h]\rightarrow M\fuse_{V}N[h]
  \end{align}
  between generalised \(L_0\)-eigenspaces. This can be repeated for all values of \(h\) and 
  therefore there exists a surjective \(V^{\prime}\)-module map
  \begin{align}
    M\fuse_{V^\prime}N\rightarrow M\fuse_{V}N\,.
  \end{align}
\end{proof}

\begin{prop}
  Let \(M\) be a \(V\)-module, then the covariant functors \(M\fuse -\) and \(-\fuse M\) are
  right exact.
\end{prop}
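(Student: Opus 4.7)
The plan is to deduce right exactness level by level via the isomorphism
\[
\mathcal{A}_k(M\fuse N)\;\cong\;\frac{M\otimes N}{\mathfrak{g}_{k+1}^{\mathbb{P}}(V)(M\otimes N)},
\]
and then pass to the genuine fusion product by taking generalised $L_0$-eigenspaces. For concreteness I treat $M\fuse -$; the argument for $-\fuse M$ is identical up to the braiding.

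Suppose one is given an exact sequence $A\to B\to C\to 0$ in $V$-mod. First I would apply the ordinary vector space tensor functor $M\otimes_{\mathbb{C}}-$, which is exact, obtaining the right exact sequence
\[
M\otimes A\longrightarrow M\otimes B\longrightarrow M\otimes C\longrightarrow 0
\]
of $\mathfrak{g}^{\mathbb{P}}(V)$-modules via $j_{1,0}$. Next I would quotient by the action of $\mathfrak{g}^{\mathbb{P}}_{k+1}(V)$. Because the maps involved are $\mathfrak{g}^{\mathbb{P}}(V)$-equivariant, the image of $\mathfrak{g}^{\mathbb{P}}_{k+1}(V)(M\otimes B)$ under the surjection $M\otimes B\twoheadrightarrow M\otimes C$ is precisely $\mathfrak{g}^{\mathbb{P}}_{k+1}(V)(M\otimes C)$; the standard snake-type computation then gives the right exact sequence
\[
\mathcal{A}_k(M\fuse A)\longrightarrow \mathcal{A}_k(M\fuse B)\longrightarrow \mathcal{A}_k(M\fuse C)\longrightarrow 0
\]
for every $k\geq 0$.

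The last step is to upgrade right exactness at each truncation level to right exactness of $M\fuse -$ itself. Fix a weight $h\in\mathbb{C}$. By the finiteness statements in Proposition~\ref{sec:c2props} together with the definition $(M\fuse N)[h]=\{m\in\overline{M\fuse N}\mid (L_0-h)^n m=0\text{ for some }n\}$, the generalised $L_0$-eigenspaces of $M\fuse N$ at weight $h$ stabilise inside $\mathcal{A}_k(M\fuse N)[h]$ for all $k$ sufficiently large (depending only on $h$ and the finite set of highest weights of $N$). Since all morphisms in the sequence are $L_0$-equivariant and each generalised $L_0$-eigenspace is finite dimensional, restricting the right exact sequence of $\mathcal{A}_k$'s to weight $h$ preserves right exactness; combined with the stabilisation this yields
\[
(M\fuse A)[h]\longrightarrow (M\fuse B)[h]\longrightarrow (M\fuse C)[h]\longrightarrow 0.
\]
Taking the direct sum over $h\in H$ gives the required right exactness of $M\fuse -$.

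The main obstacle I anticipate is the passage from the truncations $\mathcal{A}_k$ to the honest modules $M\fuse N$: a projective limit of right exact sequences need not be right exact in general. What saves us is that only finitely many weights contribute to each generalised $L_0$-eigenspace and each such eigenspace is finite dimensional, so the limit is essentially a direct sum of finite-dimensional right exact pieces rather than an inverse limit with a nontrivial $\varprojlim{}^1$. Once this stabilisation is cleanly stated, the rest of the argument is a standard piece of homological bookkeeping.
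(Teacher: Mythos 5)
Your argument is correct and follows essentially the same route as the paper: both pass through the identification of $\mathcal{A}_k(M\fuse N)$ with the quotient $M\otimes N/\mathfrak{g}^{\mathbb{P}}_{k+1}(V)(M\otimes N)$, observe right exactness there, and then use the stabilisation of each generalised $L_0$-eigenspace for large $k$ to conclude weight by weight. The only difference is that you spell out the equivariance and quotient bookkeeping that the paper dismisses with ``which is clearly exact,'' which is a welcome clarification rather than a deviation.
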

\begin{proof}
  We prove the proposition for \(M\fuse -\). The proof for \(-\fuse M\) follows analogously.

  Let \(A,B,C\in V\)-mod satisfy the exact sequence
  \begin{align}
    0\rightarrow A\rightarrow B\rightarrow C\rightarrow 0\,.
  \end{align}
  Then the sequence
  \begin{align}
    M\fuse A\rightarrow M\fuse B\rightarrow M\fuse C\rightarrow 0\,.
  \end{align}
  is exact if the restriction to the generalised \(L_0\)-eigenspaces
  \begin{align}
    M\fuse A[h]\rightarrow M\fuse B[h]\rightarrow M\fuse C[h]\rightarrow 0
  \end{align}
  is exact.   Because for sufficiently large \(k\) the generalised \(L_0\)-eigenspaces for fixed generalised
  eigenvalue \(h\) are stable
  under taking \(\mathcal{A}_k\) quotients, we consider the sequence
  \begin{align}
    \frac{M\otimes A}{\mathfrak{g}^\mathbb{P}_{k}(V)(M\otimes A)}[h]\rightarrow 
    \frac{M\otimes B}{\mathfrak{g}^\mathbb{P}_{k}(V)(M\otimes B)}[h]\rightarrow
    \frac{M\otimes C}{\mathfrak{g}^\mathbb{P}_{k}(V)(M\otimes C)}[h]\rightarrow 0\,,
  \end{align}
  which is clearly exact.
\end{proof}

\subsection{Algebra morphisms between products of modules}
\label{sec:intertwiners}

We have defined the fusion tensor product in \(V\)-mod. Now we introduce the concept of vertex operators
in a conformal field theory on \(\mathbb{P}\) associated to \(V\)-mod, by extending the notions in \cite{Tsuchiya:1987}.

For a \(V\)-module \(M\) we have defined the topological completion
\begin{align}
  \overline{M}=\prod_{h\in H}M[h]\,.
\end{align}
For any two \(V\)-modules \(M,N\) we denote by \(\hom_{\mathbb{C}}^c(\overline{M},\overline{N})\), the space
of continuous \(\mathbb{C}\)-linear maps from \(\overline{M}\) to \(\overline{N}\). Then we have a 
\(\mathbb{C}\)-linear isomorphism
\begin{align}
  \hom_{\mathcal{U}(V)}(M,N)\cong \hom_{\mathcal{U}^R}(\overline{M},\overline{N}).
\end{align}
Now for the \(V\)-modules \(L,M,N\) consider the complex vector spaces
\begin{align}
  \hom_{\mathcal{U}(V)}(M\fuse N,L)\cong\hom_{\mathcal{U}^R}(\overline{M\fuse N},\overline{L})\,.
\end{align}
We know that 
\begin{align}
  \overline{M\fuse N}=\varprojlim_k\frac{M\otimes N}{\mathfrak{g}_{k+1}^\mathbb{P}(M\otimes N)}\,.
\end{align}
So there exists an injective \(\mathbb{C}\)-linear map
\begin{align}
  \hom_{\mathcal{U}(V)}(M\fuse N,L)\rightarrow \hom_{\mathbb{C}}^c(\overline{M\fuse N},\overline{L})
  \rightarrow \hom_{\mathbb{C}}(M\otimes N,\overline{L})\,.
\end{align}
The following proposition characterises the image of this map.
\begin{prop}
  For \(\psi\in\hom_{\mathbb{C}}(M\otimes N,\overline{L})\) the necessary and sufficient condition for \(\psi\) to lie in the 
  image of the map from \(\hom_{\mathcal{U}(V)}(M\fuse N,L)\) is that for all \(f(z)\in\mathbb{C}[z,z^{-1},(z-1)^{-1}]\),
  \(A\in V[h_A]\), \(m\in M\) and \(n\in N\)
  \begin{align}
    j_\infty^R([A\otimes f(z) \d z^{1-h_A}])(\psi(m)n)
    =&
    \ \psi(j_1^L([A\otimes f(z) \d z^{1-h_A}])m)n\\\nonumber
    &+\psi(m)j_0^L([A\otimes f(z) \d z^{1-h_A}])n\,,
  \end{align}
  where we denote
  \begin{align}
    \psi:m\otimes n\mapsto \psi(m)n\,.
  \end{align}
\end{prop}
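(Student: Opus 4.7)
The plan is to prove both directions using a single observation: by the preceding proposition the $\mathcal{U}^R$-action on $\overline{M\fuse N}$ is obtained by lifting elements of $\mathfrak{g}^R=\varprojlim_k\mathfrak{g}^\mathbb{P}(V)/\mathfrak{g}_k^\mathbb{P}(V)$ back to $\mathfrak{g}^\mathbb{P}(V)$ and letting them act on representatives in $M\otimes N$ via $j_{1,0}$. Concretely, for any $X\in\mathfrak{g}^\mathbb{P}(V)$ and $m\otimes n\in M\otimes N\hookrightarrow\overline{M\fuse N}$ this forces the identity
\begin{align*}
j_\infty^R(X)\cdot[m\otimes n]=[j_{1,0}(X)(m\otimes n)]=[j_1^L(X)m\otimes n]+[m\otimes j_0^L(X)n]
\end{align*}
in each finite-dimensional quotient $\mathcal{A}_k(M\fuse N)$, and hence in $\overline{M\fuse N}$.

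For necessity, I would start from $\Psi\in\hom_{\mathcal{U}(V)}(M\fuse N,L)$, pass to its continuous extension $\overline\Psi:\overline{M\fuse N}\to\overline L$ using the isomorphism $\hom_{\mathcal{U}(V)}(M,N)\cong\hom_{\mathcal{U}^R}(\overline M,\overline N)$, and observe that $\psi(m)n=\overline\Psi(m\otimes n)$ where $m\otimes n$ is viewed inside $\overline{M\fuse N}$. Applying $\mathcal{U}^R$-linearity of $\overline\Psi$ to the displayed identity then yields the Ward identity immediately.

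For sufficiency, given $\psi:M\otimes N\to\overline L$ satisfying the Ward identity, I would argue in three steps. First, show $\psi$ descends to a continuous map $\Psi:\overline{M\fuse N}\to\overline L$: for $X\in\mathfrak{g}_k^\mathbb{P}(V)$ the Ward identity rewrites $\psi(j_{1,0}(X)(m\otimes n))$ as $j_\infty^R(X)(\psi(m)n)$, which lies in $\mathcal{F}_k(\mathcal{U}^R)\overline L$ because $j_\infty^R$ carries $\mathfrak{g}_k^\mathbb{P}(V)$ into $\mathfrak{g}_k^R\subset\mathcal{F}_k(\mathcal{U}^R)$; thus $\psi$ induces compatible maps $\mathcal{A}_k(M\fuse N)\to\overline L/\mathcal{F}_{k+1}(\mathcal{U}^R)\overline L$, and the inverse limit produces $\Psi$. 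Second, $\mathcal{U}^R$-linearity of $\Psi$ is precisely the Ward identity, extended from $\mathfrak{g}^\mathbb{P}(V)$ to all of $\mathcal{U}^R$ using the density of $\mathfrak{g}^\mathbb{P}(V)\to\mathfrak{g}^R\to\mathcal{U}^R$. Third, since $L_0\in\mathcal{U}^R$ and $\Psi$ is $\mathcal{U}^R$-linear, $\Psi$ preserves generalised $L_0$-eigenspaces, and since $\overline L[h]=L[h]$, restricting to $M\fuse N=\bigoplus_{h}M\fuse N[h]$ lands in $L$ and gives the desired $\mathcal{U}(V)$-linear map.

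The main obstacle I expect is the first step of sufficiency: aligning the filtration $\mathfrak{g}_k^\mathbb{P}(V)(M\otimes N)$ on the source with the filtration $\mathcal{F}_k(\mathcal{U}^R)\overline L$ on the target carefully enough that the induced quotient maps actually assemble into a well-defined morphism of inverse systems. Beyond that, everything is formal: the Ward identity itself is the exact compatibility one needs, and the density statements already recorded in the paper handle the passage from $\mathfrak{g}^\mathbb{P}(V)$ all the way up to $\mathcal{U}^R$.
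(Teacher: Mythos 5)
The paper states this proposition without proof (it is one of several structural statements in Section \ref{sec:intertwiners} recorded without argument), so there is nothing to compare against; your proposal has to stand on its own, and in outline it does. The necessity direction is correct and essentially forced: the $\mathfrak{g}^R$-action on $\overline{M\fuse N}$ is \emph{defined} so that $j_\infty^R(X)$ acts on the class of $m\otimes n$ as the class of $j_{1,0}(X)(m\otimes n)$, and $\mathcal{U}^R$-linearity of the continuous extension $\overline\Psi$ then reads off the Ward identity. The sufficiency direction is also sound, and you correctly identify the only point needing care, namely the descent of $\psi$ through the filtration. A clean way to close that gap is to argue weight by weight rather than filtration by filtration: compose $\psi$ with the projection $\pi_h:\overline L=\prod_{h'}L[h']\rightarrow L[h]$; since $j_\infty^R$ sends $\mathfrak{g}_{k+1}^{\mathbb{P}}(V)$ into $\mathcal{F}_{k+1}(\mathcal{U}^R)$, which raises generalised $L_0$-weight by at least $k+1$, and the weights of $L$ are bounded below, the Ward identity shows that $\pi_h\circ\psi$ annihilates $\mathfrak{g}_{k+1}^{\mathbb{P}}(V)(M\otimes N)$ once $k$ is large relative to $h$; hence $\pi_h\circ\psi$ factors through $\mathcal{A}_k(M\fuse N)$ for $k\gg0$, and assembling over $h$ gives the continuous map $\Psi:\overline{M\fuse N}\rightarrow\overline L$. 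Your remaining steps — $\mathcal{U}^R$-linearity by density of the subalgebra generated by $\mathfrak{g}^R$, restriction to generalised $L_0$-eigenspaces to land in $L$, and the check that the resulting morphism maps back to $\psi$ (which uses density of $M\fuse N$ in $\overline{M\fuse N}$) — are all correct.
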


Now consider the complex algebras \(\mathbb{C}[z,z^{-1},y,y^{-1},(z-y)^{-1}]\) and
\(\mathcal{R}=\mathbb{C}[y,y^{-1}]\). An element \(f(z,y)\in \mathbb{C}[z,z^{-1},y,y^{-1},(z-y)^{-1}]\)
can be thought of as a rational function on \(\mathbb{P}\times\mathbb{P}\). We consider
three domains.
\begin{enumerate}
\item \(U_0=\{(z,y)\in \mathbb{P}\times\mathbb{P}||y|>|z|>0\}\),
\item \(U_y=\{(z,y)\in \mathbb{P}\times\mathbb{P}||y|>|z-y|>0\}\),
\item \(U_\infty=\{(z,y)\in \mathbb{P}\times\mathbb{P}||z|>|y|>0\}\),
\end{enumerate}
and define local coordinates \(\xi_a,y\) on \(U_a\), \(a=0,y,\infty\) by defining
\(\xi_0=z\), \(\xi_y=z-y\) and \(\xi_\infty=z\). Then we can define the algebra homomorphisms
\begin{align}
  j_a^L:\mathbb{C}[z,z^{-1},y,y^{-1},(z-y)^{-1}]\rightarrow \mathcal{R}((\xi_a))\,,\quad a=0,y\,,
\end{align}
and
\begin{align}
  j_\infty^R:\mathbb{C}[z,z^{-1},y,y^{-1},(z-y)^{-1}]\rightarrow \mathcal{R}((\xi_\infty^{-1}))
\end{align}
by expanding \(f(z)\) on the open sets \(U_a\) by the local coordinates \((\xi_a,y)\). We
denote \(j_a^L(f)=f_a(\xi_a;y)\) for \(a=0,y\) and \(j_\infty^R(f)=f_\infty(\xi_\infty;y)\). Then
we can define the current Lie algebra \(\mathfrak{g}^\mathbb{P}_{\mathcal{R}}(V)\) over \(\mathcal{R}\) by
\begin{align}
  \mathfrak{g}^\mathbb{P}_{\mathcal{R}}(V)=\frac{\bigoplus_{h=0}^{\infty}V[h]\otimes\mathbb{C}[z,z^{-1},y,y^{-1},(z-y)^{-1}]\d z^{1-h}}{
    \nabla(\bigoplus_{h=0}^{\infty}V[h]\otimes\mathbb{C}[z,z^{-1},y,y^{-1},(z-y)^{-1}]\d z^{-h})}\,,
\end{align}
where \(\nabla\) is defined as in \eqref{eq:currentconnection}
\begin{align}
  \nabla(A\otimes f(z,y)\d z^{-h})=L_{-1}A\otimes f(z,y)\d z^{-h}+A\otimes\frac{\d f(z,y)}{\d z}\d z^{1-h}\,.
\end{align}
Then \(\mathfrak{g}^{\mathbb{P}}_{\mathcal{R}}(V)\) has the structure of a Lie Algebra given by
\begin{align}
  &[A\otimes f(z,y)\d z^{1-h_A},B\otimes g(z,y)\d z^{1-h_B}]=\\\nonumber
  &\qquad\sum_{m=0}^\infty\frac{1}{m!} A_{m-h_A+1}B\otimes\frac{\d^m f(z,y)}{\d z^m}g(z,y)\d z^{m+2-h_A-h_B}\,,
\end{align}
Let
\begin{align}
  j_a^L:\mathfrak{g}^\mathbb{P}_{\mathcal{R}}(V)&\rightarrow \mathcal{R}\otimes \mathfrak{g}^L\,,\quad a=0,y\,,\\\nonumber
  j_\infty^R:\mathfrak{g}^\mathbb{P}_{\mathcal{R}}(V)&\rightarrow \mathcal{R}\otimes \mathfrak{g}^R\,,
\end{align}
be the Lie algebra homomorphisms over \(\mathcal{R}\) defined
in the same way as in \eqref{eq:localexpansions}.

\begin{definition}
  For \(V\)-modules \(L,M,N\) a \(\hom_{\mathbb{C}}(M\otimes N,\overline{L})\)-valued holomorphic function \(\psi(y)\) on
  \(\mathbb{C}^\ast\) -- which may be multi-valued -- is called a vertex operator of type \(\binom{M}{L,N}\) if it
  satisfies the following two conditions.
  \begin{enumerate}
  \item For \(f(z)\in \mathbb{C}[z,z^{-1},y,y^{-1},(z-y)^{-1}]\), \(A\in V[h_A]\), 
    \(m\in M\) and \(n\in N\) we have
    \begin{align}
      &j_\infty^R([A\otimes f(z,y)\d z^{1-h_A}])(\psi(m;y)n)=\\\nonumber
      &\quad \psi(j_y^L([A\otimes f(z,y)\d z^{1-h_A}]m;y))n
      +\psi(m;y)j_0^L([A\otimes f(z,y)\d z^{1-h_A}])n\,.
    \end{align}
  \item For \(m\in M\) and \(n\in N\)
    \begin{align}
      \frac{\d}{\d y}\psi(m;y)n=\psi(L_{-1}m;y)n\,.
    \end{align}
  \end{enumerate}
\end{definition}
We denote by \(I_{L,N}^M\) the complex vector space of vertex operators of type
\(\binom{M}{L,N}\). By taking \(y=1\) we can define linear maps
\begin{align}
  I_{L,N}^M&\rightarrow \hom_{\mathbb{C}}(M\otimes N,\overline{L})\\\nonumber
  \psi(-;y)&\mapsto \psi(- ;1)\,.
\end{align}
Then we have the following theorem.
\begin{thm}
  The image of the map \(\psi(- ;y)\mapsto \psi(- ;1)\) is contained in the image of the injection
  \begin{align}
    \hom_{\mathcal{U}(V)}(M\fuse N,L)\rightarrow \hom_{\mathbb{C}}(M\otimes N,\overline{L})\,,
  \end{align}
  and the two images are equal.
\end{thm}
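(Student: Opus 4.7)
The plan is to prove the two inclusions of images inside $\hom_{\mathbb{C}}(M\otimes N,\overline{L})$ separately: that of $\psi(-;y)\mapsto\psi(-;1)$ on $I^M_{L,N}$ and that of the injection from $\hom_{\mathcal{U}(V)}(M\fuse N,L)$ described in the previous proposition.

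The forward inclusion is essentially specialization. For any $f(z)\in\mathbb{C}[z,z^{-1},(z-1)^{-1}]$ there exists a lift $F(z,y)\in\mathbb{C}[z,z^{-1},y,y^{-1},(z-y)^{-1}]$ with $F(z,1)=f(z)$ (for example $F(z,y)=f(z-y+1)$), and the local expansions $j_0^L(F)$, $j_y^L(F)$, $j_\infty^R(F)$ at the three punctures specialize at $y=1$ to the expansions $j_0^L(f)$, $j_1^L(f)$, $j_\infty^R(f)$ appearing in the proposition characterizing the image of $\hom_{\mathcal{U}(V)}(M\fuse N,L)\hookrightarrow\hom_{\mathbb{C}}(M\otimes N,\overline{L})$. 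Hence condition~(1) of the vertex operator definition, evaluated at $y=1$, reduces exactly to that characterizing condition, and $\psi(-;1)$ lies in the image.

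For the reverse direction, given $\tilde\phi$ in the image of $\hom_{\mathcal{U}(V)}(M\fuse N,L)$, I would construct $\psi(-;y)\in I^M_{L,N}$ with $\psi(-;1)=\tilde\phi$ as follows. The $L_{-1}$-derivative axiom $\frac{\d}{\d y}\psi(m;y)n=\psi(L_{-1}m;y)n$ is a first-order ODE in $y$; prescribing initial data at $y=1$ to be $\tilde\phi$ determines $\psi(-;y)$ uniquely. The ODE is solvable because $L_{-1}$ raises generalized $L_0$-weights by one and $\overline{L}$ accommodates infinite sums over weights; the explicit solution has the form $y^{L_0^L}\tilde\phi(y^{-L_0^M}\cdot)$ plus logarithmic corrections coming from the nilpotent part of $L_0$, giving a multi-valued holomorphic function of $y$ on $\mathbb{C}^*$ for which the $L_{-1}$-derivative axiom holds tautologically.

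The hard part is verifying condition~(1) of the vertex operator definition at general $y$. My strategy is to introduce, for each $\mathfrak{a}=[A\otimes F(z,y)\d z^{1-h_A}]\in\mathfrak{g}^\mathbb{P}_\mathcal{R}(V)$, the discrepancy
\[
D_\mathfrak{a}(y;m,n):=j_\infty^R(\mathfrak{a})\psi(m;y)n-\psi(j_y^L(\mathfrak{a})m;y)n-\psi(m;y)\,j_0^L(\mathfrak{a})n,
\]
which vanishes at $y=1$ by the forward argument applied to $\tilde\phi$, and then to show $D_\mathfrak{a}\equiv 0$. Differentiating in $y$ using both the $L_{-1}$-derivative of $\psi$ and the derivation structure of $\partial_y$ on $\mathfrak{g}^\mathbb{P}_\mathcal{R}(V)$ yields a closed linear ODE system on the $D_\mathfrak{a}$'s, so uniqueness with vanishing initial data forces $D_\mathfrak{a}\equiv 0$. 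The main bookkeeping hurdle is verifying that $\partial_y$-derivatives of the $y$-dependent expansion $j_y^L$ at the moving puncture reorganize into other $j_y^L$- and $j_\infty^R$-type terms; this relies on the fact that $\partial_y$ stabilizes $\mathbb{C}[z,z^{-1},y,y^{-1},(z-y)^{-1}]$ together with the Lie bracket relations, especially those involving the Virasoro generators, in $\mathfrak{g}^\mathbb{P}_\mathcal{R}(V)$.
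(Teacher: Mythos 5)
The paper does not actually prove this theorem: it is stated as a known fact, with the surrounding text deferring to the Huang--Lepowsky--Zhang construction of intertwining operators and conformal blocks, so there is no internal argument to measure yours against. Judged on its own terms, your outline follows the standard route for results of this type: specialisation at \(y=1\) for one inclusion, and reconstruction of the \(y\)-dependence from \(\tilde\phi\) by \(L_0\)-conjugation/\(L_{-1}\)-transport, followed by propagation of the compatibility condition from \(y=1\) to all \(y\), for the other. The architecture is sound, but it is a plan rather than a proof, and one detail is wrong.

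Concretely: (i) your example lift \(F(z,y)=f(z-y+1)\) does not land in \(\mathbb{C}[z,z^{-1},y,y^{-1},(z-y)^{-1}]\) (take \(f(z)=z^{-1}\)); the correct lift uses the partial-fraction decomposition of \(f\), sending \(z^{n}\mapsto z^{n}\) and \((z-1)^{-m}\mapsto(z-y)^{-m}\), after which the three expansions \(j_0^L,j_y^L,j_\infty^R\) do specialise at \(y=1\) to \(j_0^L,j_1^L,j_\infty^R\) as you need. (ii) The explicit solution of the \(L_{-1}\)-ODE must conjugate both tensor factors, \(\psi(m;y)n=y^{L_0}\tilde\phi(y^{-L_0}m)(y^{-L_0}n)\) (with logarithms from the nilpotent part of \(L_0\)); checking that this solves \(\tfrac{\d}{\d y}\psi(m;y)n=\psi(L_{-1}m;y)n\) already uses the \(y=1\) compatibility condition for \([T\otimes z\,\d z^{-1}]\), namely \(L_0\tilde\phi(m)n=\tilde\phi((L_0+L_{-1})m)n+\tilde\phi(m)L_0n\), and this input should be made explicit. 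Moreover the ODE with data at \(y=1\) is an infinite triangular system \(\partial_y\psi(L_{-1}^{j}m;y)n=\psi(L_{-1}^{j+1}m;y)n\), so uniqueness only holds once holomorphy in \(y\) is imposed. (iii) Most substantially, the assertion that the discrepancies \(D_{\mathfrak{a}}\) satisfy a closed first-order system in \(y\) is exactly where the content of the theorem sits: one must verify that \(\partial_y\) applied to \(j_y^L(\mathfrak{a})\) and \(j_\infty^R(\mathfrak{a})\), combined with the \(L_{-1}\)-derivative axiom and the bracket with \([T\otimes\d z^{-1}]\) in \(\mathfrak{g}^{\mathbb{P}}_{\mathcal{R}}(V)\), reorganises into \(D_{\partial_y\mathfrak{a}}\) plus terms of the same shape. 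As written this is claimed rather than demonstrated, so the reverse inclusion remains open in your write-up.
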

By the above vertex operator one can define \(N\)-point conformal blocks
and prove the validity of the associativity and braiding constraints as was
shown in \cite{Huang:2003za}.

We will revisit these concepts for one special case when proving the rigidity of the 
\(\mathcal{W}_p\)-module \(X_2^+\). We also make note of a slight abuse of notation we will be using.
When considering an element \(\psi\in\hom_{\mathcal{U}(V)}(M\fuse N,L)\) and \(m\in M\),
\(n\in N\) we will identify \(m\otimes n\) with \(\psi(m)n\),
when there is no chance of confusion.

\section{The abelian category \(\mathcal{W}_p\)-mod}
\label{sec:defwpmod}
In this section we briefly review the structure of \(\mathcal{W}_p\)-mod as an abelian category following \cite{Nagatomo:2009xp}.

\subsection{\boldmath General properties of \(\mathcal{W}_p\)-mod\unboldmath}

For \(p\geq2\) the VOA \(\mathcal{W}_p\) is generated by the identity \(\mathbbm{1}\), the energy momentum tensor \(T(z)\) and three weight
\(2p-1\) primary fields \(W^\epsilon(z)\), where \(\epsilon=\pm,0\) labels \(\text{sl}_2\)-charges. The central charge of the theory is
given by
\begin{align}
  c_p=1-6\frac{(p-1)^2}{p}\,.
\end{align}
It has been shown in \cite{Adamovic:2007er} that \(\mathcal{W}_p\) is \(c_2\)-cofinite.

As an abelian category \(\mathcal{W}_p\)-mod decomposes into a \(\mathbb{C}\)-linear sum of abelian subcategories
\begin{align}
  \mathcal{W}_p\text{-mod}=\bigoplus_{s=0}^p\mathcal{C}_s\,,
\end{align}
where for \(1\leq s\leq p\) the \(\mathcal{C}_s\) are full abelian subcategories of \(\mathcal{W}_p\)-mod. For \(s\neq s^\prime\) and
\(M\in \mathcal{C}_s,\ M^\prime\in\mathcal{C}_{s^\prime}\) the spaces \(\text{Ext}_{\mathcal{W}_p}^i(M,M^\prime)=0\) for any \(i\in\mathbb{Z}\), in particular
\(\hom_{\mathcal{W}_p}(M,M^\prime)=\text{Ext}^0_{\mathcal{W}_p}(M,M^\prime)=0\).
The two subcategories \(\mathcal{C}_0\) and \(\mathcal{C}_p\) are semi-simple and contain one simple
object each
\begin{align}
  X_p^+&\in \text{obj}(\mathcal{C}_p)\,,&X_p^-&\in\text{obj}(\mathcal{C}_0)\,,
\end{align}
which are projective in \(\mathcal{C}_p\) and \(\mathcal{C}_0\) respectively as well as
in \(\mathcal{W}_p\)-mod. We will therefore occasionally also denote these modules by \(P_p^\epsilon=X_p^\epsilon\).
For \(1\leq s\leq p-1\) the subcategories \(\mathcal{C}_s\)
are not semi-simple. They contain two simple objects each
\begin{align}
  X_s^+,\ X_{p-s}^-\in\text{obj}(\mathcal{C}_s)\,.
\end{align}
We denote the projective covers of \(X_s^+\) and \(X_{p-s}^-\) by \(P_s^+\) and \(P_{p-s}^-\) respectively.
They are characterised by socle series\footnote{
A socle series of a module \(M\) is a filtration of submodules \(S_1(M)\subseteq\cdots S_n(M)=M\) such that \(S_1(M)\) is the
maximal semi-simple submodule of \(M\) and \(S_i(M)/S_{i-1}(M)\) is the maximal semi-simple submodule of \(S_{i+1}/S_{i-1}(M)\).
}
of length 3
\begin{align}
  &X_s^+=S_0(P_s^+)\subset S_1(P_s^+) \subset S_2(P_s^+)=P_s^+\\\nonumber
  &X_{p-s}^-=S_0(P_{p-s}^-)\subset S_1(P_{p-s}^-) \subset S_2(P_{p-s}^-)=P_{p-s}^-\,,
\end{align}
such that
\begin{align}
  S_1(P_s^+)/S_0(P_s^+)&=2X_{p-s}^-&S_2(P_s^+)/S_1(P_s^+)&=X_{s}^+\\\nonumber
  S_1(P_{p-s}^-)/S_0(P_{p-s}^-)&=2X_{s}^+&S_2(P_{p-s}^-)/S_1(P_{p-s}^-)&=X_{p-s}^-\,.
\end{align}
Both \(P_s^+\) and \(P_{p-s}^-\) each have two occurrences of
\(X_s^+\) and \(X_{p-s}^-\) as subquotients and
therefore they have identical characters.

The simple and the projective modules of \(\mathcal{W}_p\)-mod are all self-contragredient, {\it i.e.} 
\({X_s^\epsilon}^\ast=X_s^\epsilon\) and \({P_s^\epsilon}^\ast=P_s^\epsilon\) for \(1\leq s\leq p,\ \epsilon=\pm\). In particular the vacuum representation, \(X_1^+\) which is the
tensor unit, is self contragredient. Therefore by proposition
\ref{sec:moncatprop} it follows that \((\mathcal{W}_p\mathrm{-mod},\fuse,X_1^+)\)
is weakly rigid and that for each \(\mathcal{W}_p\)-module \(M\) the weakly
rigid dual \(M^\vee\) coincides with the contragredient \(M^\ast\).

For \(1\leq s\leq p-1\) the subcategories \(\mathcal{C}_s\) also contain 6 families of 
indecomposable modules characterised by socle series of length 2. For \(d\geq 1\) these are summarised in the table below.
\begin{center}
  \begin{tabular}{c|c|c|c|c|c|c}
    &\footnotesize{\(G_{s,d}^+\)}&\footnotesize{\(G_{p-s,d}^-\)}&\footnotesize{\(H_{s,d}^+\)}&\footnotesize{\(H_{p-s,d}^-\)}
    &\footnotesize{\(I_{s,d}^+(\lambda)\)}&\footnotesize{\(I_{p-s,d}^-(\lambda)\)}\\\hline
    \footnotesize{\(S_1/S_0\)}&\footnotesize{\((d+1) X_s^+\)}&\footnotesize{\((d+1) X_{p-s}^-\)}&\footnotesize{\(d X_s^+\)}&\footnotesize{\(d X_{p-s}^-\)}
    &\footnotesize{\(d X_s^+\)}&\footnotesize{\(d X_{p-s}^-\)}\\\hline
    \footnotesize{\(S_2/S_1\)}&\footnotesize{\(d X_{p-s}^-\)}&\footnotesize{\(d X_{s}^+\)}&\footnotesize{\((d+1) X_{p-s}^-\)}&\footnotesize{\((d+1) X_{s}^+\)}
    &\footnotesize{\(d X_{p-s}^-\)}&\footnotesize{\(d X_{s}^+\)}\\\hline
  \end{tabular}
\end{center}
Note that \(I_{s,d}^+(\lambda)\) and \(I_{p-s,d}^-(\lambda)\) are not uniquely characterised by their socle series alone.
To each of the two series there corresponds a continuous family of inequivalent indecomposable modules parametrised by 
\(\lambda\in \mathbb{P}\).

The simple modules \(X_s^\pm\) can be decomposed into direct sums of simple Virasoro modules
\begin{align}
  \label{eq:virdecomp}
  X_s^+&=\bigoplus_{m=1}^\infty (2m-1)\mathcal{L}_{h_{2m-1,s}}\,,\\\nonumber
  X_s^-&=\bigoplus_{m=1}^\infty 2m\mathcal{L}_{h_{2m,s}}\,,
\end{align}
where \(\mathcal{L}_{h_{r,s}}\) is the highest weight irreducible Virasoro module of weight
\begin{align}
  h_{r,s}=\frac{1}{4p}\left((rp-s)^2-(p-1)^2\right)\,,
\end{align}
therefore the weights of \(X_s^+\) and \(X_s^-\), which we will denote by \(h_s^+\) and \(h_s^-\), are \(h_{1,s}\) and \(h_{2,s}\) respectively.

The dimension of the highest weight spaces \(X_s^+[h_s^+]\) is 1
and the dimension of the highest weight spaces \(X_s^-[h_s^-]\) is
2. We fix non-zero vectors \(u_s\in X_s^+[h_s^+]\) and we fix a basis
\(v_s^+,v_s^-\) of \(X_s^-[h_s^-]\) which satisfies the following
conditions
\begin{align}
  W_0^+v_s^+&=0,&W_0^- v_s^-&=0\,.
\end{align}
Then \(v_s^+\) and \(v_s^-\) are universally determined up to
constants.

We have the following results.
\begin{enumerate}
\item The Virasoro submodules \(\mathcal{U}(\mathcal{L})u_s\) are
  isomorphic to the irreducible Virasoro modules
  \(\mathcal{L}_{h_{1,s}}\) and the Virasoro submodules
  \(\mathcal{U}(\mathcal{L})v_s^\mu\) are isomorphic to the
  irreducible Virasoro modules \(\mathcal{L}_{h_{2,s}}\).
\item The action of the first few modes of the \(W\)-fields on the
  highest weight vector \(u_s\) of \(X_s^+,\ 1\leq s\leq p\) is
  given by
  \begin{align}
    W_{-k}^\epsilon u_s&=0\qquad \epsilon=0,\pm,\quad k< 2p-s\,.
  \end{align}
  
\item The action of the first few modes of the \(W\)-fields on the highest weight
  vectors \(v_s^\mu\) of \(X_s^-,\ 1\leq s\leq p,\ \mu=\pm\) is then
  given by
  \begin{align}\label{eq:X1Waction}
    W_{-k}^\epsilon v_s^\mu\left\{
      \begin{array}{ll}
        =0&\epsilon=\mu\\
        \in \mathcal{U}(\mathcal{L})v_s^\mu&\epsilon=0\\
        \in \mathcal{U}(\mathcal{L})v_s^{-\mu}&\epsilon=-\mu
      \end{array}
    \right.\qquad \epsilon=\pm,0\quad k<3p-s\,.
  \end{align}
\end{enumerate}

After defining \(\mathcal{W}_p\)-mod, we now turn to \(A_0(\mathcal{W}_p)\)-mod -- the category of
zero mode quotients of \(\mathcal{W}_p\)-modules.
We define elements of \(A_0(\mathcal{W}_p)\)-mod, in the following way
\begin{align}
  \overline{X}_s^\epsilon=\mathcal{A}_0(X_s^\epsilon)\,,\quad 1\leq s\leq p,\ \epsilon=\pm\,.
\end{align}
Then the \(\overline{X}_s^\epsilon\) are simple objects in \(A_0(\mathcal{W}_p)\)-mod and any
simple object of \(A_0(\mathcal{W}_p)\)-mod is isomorphic to one of the objects \(\overline{X}_s^\epsilon\).

Just like \(\mathcal{W}_p\)-mod, \(A_0(\mathcal{W}_p)\)-mod also decomposes into a \(\mathbb{C}\)-linear
direct sum of abelian subcategories
\begin{align}
  A_0(\mathcal{W}_p)\text{-mod}=\bigoplus_{s=1,\epsilon=\pm}^p\overline{\mathcal{C}}_s^{\,\epsilon}.
\end{align}
The subcategories \(\overline{\mathcal{C}}_p^{\,+}\) and \(\overline{\mathcal{C}}_s^{\,-}, 1\leq s\leq p\) are semi-simple and
each contain one simple module
\begin{align}
  \overline{X}_p^{\,+}&\in\text{obj}(\overline{\mathcal{C}}_p^{\,+})\,,
  &\overline{X}_s^{\,-}&\in\text{obj}(\overline{\mathcal{C}}_s^{\,-})\,.
\end{align}
For \(1\leq s\leq p-1\) the subcategories \(\overline{\mathcal{C}}_s^{\,+}\) are not semi-simple. In addition to one
simple module
\begin{align}
  \overline{X}_s^{\,\epsilon},\in\text{obj}(\overline{\mathcal{C}}_s^{\,\epsilon})\,,
\end{align}
they also contain
and one reducible but indecomposable module \(\widetilde X_s^{\,+}\) that is the projective cover of \(\overline{X}_s^{\,+}\) and satisfies the
exact sequence
\begin{align}
  0\longrightarrow \overline{X}_s^{\,+}\longrightarrow \widetilde{X}_s^{\,+}\longrightarrow \overline{X}_s^{\,+}\longrightarrow 0\,.
\end{align}

The image of the indecomposable \(\mathcal{W}_p\)-modules in \(A_0(\mathcal{W}_p)\)-mod is
\begin{center}
  \begin{tabular}{lll}
    \footnotesize{\(\mathcal{A}_0(X_s^\pm)=\overline{X}_s^\pm\)}&\footnotesize{\(\mathcal{A}_0(I_{s,d}^+(\lambda))=d\overline{X}_{s}^+\)}&
    \footnotesize{\(\mathcal{A}_0(G_{s,d}^-)=(d+1)\overline{X}_s^-\oplus d \overline{X}_{p-s}^+\)}\\\nonumber
    \footnotesize{\(\mathcal{A}_0(G_{s,d}^+)=(d+1)\overline{X}_s^+\)}&\footnotesize{\(\mathcal{A}_0(P_s^+)=\widetilde{X}_s^+\)}&
    \footnotesize{\(\mathcal{A}_0(H_{s,d}^-)=d\overline{X}_s^-\oplus (d+1)\overline{X}_{p-s}^+\)}\\\nonumber
    \footnotesize{\(\mathcal{A}_0(H_{s,d}^+)=d\overline{X}_s^+\)}&\footnotesize{\(\mathcal{A}_0(P_s^-)=\overline{X}_s^-\oplus 2\overline{X}_{p-s}^+\)}&
    \footnotesize{\(\mathcal{A}_0(I_{s,d}^-(\lambda))=d\overline{X}_s^-\oplus d\overline{X}_{p-s}^+\)}\,.
  \end{tabular}
\end{center}
As one can see from the above table the indecomposable structure of \(A_0(\mathcal{W}_p)\) is much simpler than that of
\(\mathcal{W}_p\)-mod as only the images of \(P_s^+\) are non-semi-simple.

The detailed \(\mathcal{W}_p\)-module structure of \(X_1^-\) and \(X_2^+\) is crucial to calculations. We have
the following results.
\begin{prop}\label{sec:zeromodesubspaces}
  As complex vector spaces the \(\mathcal{A}_0\) quotients of simple 
  \(\mathcal{W}_p\)-modules satisfy:
  \begin{enumerate}
  \item For \(1\leq s \leq p\), \(\dim\mathcal{A}_0(X_s^+)=1\) and the space is spanned by the equivalence class represented by \(u_s\)
    and therefore has conformal weight \(h_s^+\).
  \item For \(1\leq s \leq p\), we fix the zero mode subspace \((X_s^+)^0\) to be spanned by the highest weight vector \(u_s\).
  \item For \(1\leq s\leq p\), \(\dim\mathcal{A}_0(X_s^-)=2\) and the space is spanned by the equivalence classes represented by the 
    two highest weight vectors \(v_s^\epsilon,\ \epsilon=\pm\) and therefore has conformal weight \(h_s^-\). 
  \item For \(1\leq s \leq p\), we fix the zero mode subspace \((X_s^-)^0\) to be spanned by the two highest weight vectors 
    \(v_s^\epsilon,\ \epsilon=\pm\).
  \end{enumerate}
\end{prop}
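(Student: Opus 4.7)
The plan is to compute $\mathcal{A}_0(X_s^\pm) = X_s^\pm / \mathfrak{g}_1(\mathcal{W}_p) X_s^\pm$ by a direct weight-space analysis. Every element $g \in \mathfrak{g}(\mathcal{W}_p)[d]$ with $d \geq 1$ raises the $L_0$-weight by $d$, so $\mathfrak{g}_1(\mathcal{W}_p) X_s^\pm$ is concentrated in generalised $L_0$-eigenspaces of weight strictly greater than the minimal weight $h_s^\pm$. Consequently the minimal-weight space $X_s^\pm[h_s^\pm]$ intersects $\mathfrak{g}_1(\mathcal{W}_p) X_s^\pm$ trivially and injects into $\mathcal{A}_0(X_s^\pm)$, producing the lower bounds $\dim \mathcal{A}_0(X_s^+) \geq 1$ and $\dim \mathcal{A}_0(X_s^-) \geq 2$ since $\dim X_s^+[h_s^+] = 1$ and $\dim X_s^-[h_s^-] = 2$.

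For the matching upper bounds, simplicity of $X_s^\pm$ yields $X_s^+ = \mathcal{U}(\mathcal{W}_p) u_s$ and $X_s^- = \mathcal{U}(\mathcal{W}_p) v_s^+ + \mathcal{U}(\mathcal{W}_p) v_s^-$. I would PBW-reorder an arbitrary monomial in the generators $L_n, W_n^\epsilon$ so that strictly positive modes appear on the right; these annihilate $u_s$ and $v_s^\mu$ automatically, since no vectors of weight below $h_s^\pm$ exist in $X_s^\pm$. The surviving monomials are products of zero modes and strictly negative modes, and any occurrence of a strictly negative mode on the left places the whole monomial in $\mathfrak{g}_1(\mathcal{W}_p) X_s^\pm$ and hence vanishes in $\mathcal{A}_0(X_s^\pm)$. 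Every class is therefore represented by a polynomial in the zero modes $L_0, W_0^\epsilon$ applied to one of the highest weight vectors.

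On $u_s$ one has $L_0 u_s = h_s^+ u_s$ and $W_0^\epsilon u_s = 0$, the latter being the $k=0$ case of the cited annihilation (valid since $0 < 2p-s$ for $1 \leq s \leq p$), so every such polynomial is a scalar multiple of $u_s$, giving $\dim \mathcal{A}_0(X_s^+) = 1$ at weight $h_s^+$. On $v_s^\mu$ the $k=0$ instances of \eqref{eq:X1Waction} give $W_0^\mu v_s^\mu = 0$, $W_0^0 v_s^\mu \in \mathcal{U}(\mathcal{L}) v_s^\mu$, and $W_0^{-\mu} v_s^\mu \in \mathcal{U}(\mathcal{L}) v_s^{-\mu}$; since these results still have $L_0$-weight $h_s^-$, they must lie in the one-dimensional weight-$h_s^-$ subspace of the irreducible Virasoro submodule $\mathcal{L}_{h_{2,s}}$, namely $\mathbb{C} v_s^\mu$ or $\mathbb{C} v_s^{-\mu}$ respectively. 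The zero modes therefore preserve $\mathbb{C} v_s^+ \oplus \mathbb{C} v_s^-$, so $\mathcal{A}_0(X_s^-)$ is spanned by $[v_s^+], [v_s^-]$ and has dimension exactly $2$ at weight $h_s^-$. Items 2 and 4 of the proposition then follow at once: weight-disjointness gives the direct-sum decomposition $X_s^\pm = (X_s^\pm)^0 \oplus \mathfrak{g}_1(\mathcal{W}_p) X_s^\pm$ for the indicated choices $(X_s^+)^0 = \mathbb{C} u_s$ and $(X_s^-)^0 = \mathbb{C} v_s^+ \oplus \mathbb{C} v_s^-$, and the restriction of the quotient map to each summand is an isomorphism.

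The only mildly delicate step is handling the completion underlying $\mathcal{U}(\mathcal{W}_p)$: the action of $\mathfrak{g}_1(\mathcal{W}_p)$ on an element is in general a limit of finite combinations of modes. Because every generalised $L_0$-eigenspace of $X_s^\pm$ is finite-dimensional and the spectrum is bounded below by $h_s^\pm$, the reduction to any fixed target weight involves only finitely many PBW commutators, so the formal rearrangement above is rigorous. This is the sole point requiring care in what is otherwise an essentially mechanical verification.
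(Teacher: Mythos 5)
The paper gives no proof of Proposition \ref{sec:zeromodesubspaces}; it is stated as part of the review of the structure of \(\mathcal{W}_p\)-mod and \(A_0(\mathcal{W}_p)\)-mod imported from the cited classification, so your self-contained derivation is a genuine addition rather than a parallel of an existing argument. It is essentially correct, and it runs in the same spirit as the paper's own proof of Proposition \ref{sec:Akestimate}: identify \(\mathcal{A}_0(M)\) with \(M/\mathfrak{g}_1(\mathcal{W}_p)M\), observe that \(\mathfrak{g}_1\) strictly raises the \(L_0\)-weight so the lowest weight space injects into the quotient (lower bound), and push modes to the left or right to show it also surjects (upper bound). One imprecision is worth repairing: PBW reordering does not close on monomials in the generator modes \(L_n,W_n^\epsilon\) alone, since commutators such as \([W^\epsilon_m,W^\delta_n]\) produce modes of composite fields \((W^\epsilon_kW^\delta)_j\); the rearrangement should be carried out in \(U(\mathfrak{g}(\mathcal{W}_p))\) with arbitrary homogeneous modes \(A_n\), \(A\in\mathcal{W}_p[h_A]\), whose image is dense degreewise as the paper records. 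The conclusion is unaffected, because the only property of the surviving zero modes you actually need is that any \(A_0\) preserves the \(L_0\)-weight and therefore stabilises the lowest weight spaces \(X_s^+[h_s^+]=\mathbb{C}u_s\) and \(X_s^-[h_s^-]=\mathbb{C}v_s^+\oplus\mathbb{C}v_s^-\); in particular the explicit facts \(W_0^\epsilon u_s=0\) and \eqref{eq:X1Waction} are not needed here (they become essential only in the later fusion computations). Your closing remark about the completion is the right point to flag and is adequately resolved by the finite-dimensionality of the weight spaces together with the density of \(\mathfrak{g}(\mathcal{W}_p)\) in \(\mathcal{U}(\mathcal{W}_p)\).
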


\begin{prop}\label{sec:X1special}
  The two copies of \(\mathcal{L}_{h_{2,1}}\) in the simple module \(X_1^-\)
  each contain
  a null vector at level 2
  \begin{align}\label{eq:X1null}
    (L_{-1}^2-pL_{-2})v_1^\mu=0\,,\quad \mu=\pm\,,
  \end{align}
  and a well defined choice for the special subspace \((X_1^-)^s\) is given  by
  \begin{align}
    (X_1^-)^s=\bigoplus_{j=0}^1\bigoplus_{\mu=\pm}\mathbb{C}L_{-1}^jv_1^\mu\,.
  \end{align}
\end{prop}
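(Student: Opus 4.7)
The proposition has two parts: the level-2 null vector identity, and the verification that the stated four-dimensional subspace is a valid choice of special subspace. These are largely independent and I would treat them in that order.

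For the null vector, I would work inside the Virasoro Verma module \(V(c_p,h_{2,1})\). By the Virasoro decomposition \eqref{eq:virdecomp} the submodules \(\mathcal{U}(\mathcal{L})v_1^\mu\subset X_1^-\) are isomorphic to the irreducible quotient \(\mathcal{L}_{h_{2,1}}\), so every singular vector in the Verma cover projects to \(0\). Writing a candidate level-\(2\) singular vector as \((L_{-1}^2+aL_{-2})v\), the condition \(L_1\cdot=0\) forces \(a=-(4h+2)/3\); at \(h=h_{2,1}=(3p-2)/4\) this gives \(a=-p\). The remaining condition \(L_2\cdot=0\) is the Kac-determinant identity \((4h+2)(4h+c/2)=18h\), a direct check at \(c=c_p=1-6(p-1)^2/p\). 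Hence \((L_{-1}^2-pL_{-2})v_1^\mu=0\) in \(X_1^-\).

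For the special subspace I would show that the canonical map \((X_1^-)^s\to X_1^-/c_1(X_1^-)\) is an isomorphism. For \emph{injectivity}, note that \(c_1(X_1^-)\) is \(L_0\)-graded and spanned by \(A_{-n}m\) with \(n\ge h_A\); since \(\mathcal{W}_p\) is generated in weights \(0,2,2p-1\) and \(\mathcal{W}_p[1]=0\), every such element has weight at least \(h_1^-+2\). Hence \(c_1(X_1^-)\) has trivial intersection with \(X_1^-[h_1^-]\) and \(X_1^-[h_1^-+1]\). The vectors \(v_1^+,v_1^-\) are linearly independent in the highest weight space by definition, and \(L_{-1}v_1^+,L_{-1}v_1^-\) lie in distinct direct summands \(\mathcal{U}(\mathcal{L})v_1^\pm\cong\mathcal{L}_{h_{2,1}}\) (which have a one-dimensional, nonzero level-\(1\) subspace since there is no level-\(1\) Virasoro null vector at \(h_{2,1}\)). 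Together with the weight grading this gives independence modulo \(c_1(X_1^-)\).

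For \emph{surjectivity}, I would argue by induction on \(L_0\)-weight. Modulo \(c_1(X_1^-)\) only the modes \(A_{-n}\) with \(1\le n<h_A\) survive, which for \(\mathcal{W}_p\) leaves \(L_{-1}\) and \(W^\epsilon_{-k}\) for \(1\le k\le 2p-2\). When acting directly on \(v_1^\mu\), the condition \eqref{eq:X1Waction} applies since \(k\le 2p-2<3p-1\), so \(W^\epsilon_{-k}v_1^\mu\) lies in some \(\mathcal{U}(\mathcal{L})v_1^{\mu'}\). For iterated products I would push \(W\)-modes to the right past Virasoro modes using the primary commutator \([L_m,W^\epsilon_n]=((2p-2)m-n)W^\epsilon_{m+n}\); the rightmost \(W^\epsilon_{-k}\) then falls on \(v_1^\mu\) and reduces as above, while commutator remainders either lie in \(c_1(X_1^-)\) (when the index becomes \(\le -(2p-1)\)) or are handled by the inductive hypothesis since they strictly lower the weight of the remaining factor acting on \(v_1^\mu\). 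Any homogeneous element is thus congruent modulo \(c_1(X_1^-)\) to a linear combination of Virasoro descendants \(L_{-1}^jv_1^\mu\), and the null vector identity then shows \(L_{-1}^jv_1^\mu\equiv pL_{-1}^{j-2}L_{-2}v_1^\mu\pmod{c_1(X_1^-)}\) for \(j\ge 2\); since \(c_1(X_1^-)\) is preserved by \(L_{-1}\) (a consequence of \([L_{-1},A_{-n}]=(n-h_A+1)A_{-n-1}\)) and \(L_{-2}v_1^\mu\in c_1(X_1^-)\), all \(j\ge 2\) contributions vanish.

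The main obstacle is the combinatorial control in the surjectivity step: the bound \eqref{eq:X1Waction} only directly constrains \(W^\epsilon_{-k}v_1^\mu\), not \(W^\epsilon_{-k}\) acting on higher Virasoro descendants, so one must carefully track both weight and mode-index when commuting, verifying that the reduction procedure terminates. The null vector of the first paragraph is essential: without it, infinitely many \(L_{-1}^jv_1^\mu\) would remain independent modulo \(c_1(X_1^-)\), contradicting the claimed 4-dimensionality.
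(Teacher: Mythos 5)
Your proposal is correct, and it is worth noting that the paper itself offers no proof of this proposition: it is stated as part of the structural summary of \(X_1^-\) imported from the classification of \(\mathcal{W}_p\)-mod and the free-field realisation, so there is no argument in the text to compare against. Your two-part argument supplies exactly what is omitted, using only ingredients the paper sets up nearby. The null-vector computation is the standard level-\(2\) Kac singular-vector check, and the numbers work out: \(h_{2,1}=(3p-2)/4\) gives \(4h+2=3p\), hence \(a=-p\), and \((4h+2)(4h+c_p/2)=3p\cdot 3(3p-2)/(2p)=18h\), so the singular vector exists in the Verma module and dies in the irreducible summands \(\mathcal{U}(\mathcal{L})v_1^\mu\cong\mathcal{L}_{h_{2,1}}\) of \eqref{eq:virdecomp}. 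For the special subspace, your injectivity argument correctly exploits \(\mathcal{W}_p[1]=0\) (so \(c_1(X_1^-)\) sits in weights \(\geq h_1^-+2\)), and your surjectivity argument is Nahm's reduction: only the modes \(L_{-1}\) and \(W^\epsilon_{-k}\), \(1\leq k\leq 2p-2\), survive modulo \(c_1\), the bound \(k\leq 2p-2<3p-1\) places \(W^\epsilon_{-k}v_1^\mu\) back inside the Virasoro summands by \eqref{eq:X1Waction}, and the null vector \eqref{eq:X1null} truncates the \(L_{-1}\)-string at \(j=1\) because \(L_{-2}v_1^\mu\in c_1\) and \(L_{-1}\) preserves \(c_1\). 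The one place you compress the argument --- termination of the rewriting for iterated monomials, where \(c_1\) is not a submodule and congruences must be propagated through further mode applications --- is handled by the double induction on weight and leftmost mode that you indicate, so the gap is one of bookkeeping rather than of substance.
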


\begin{prop}\label{sec:X2special}
  The Virasoro submodule \(\mathcal{L}_{h_{1,2}}\) of the simple module \(X_2^+\) contains a null vector
  at level 2
  \begin{align}
    (L_{-1}^2-\tfrac1pL_{-2})u_2=0
  \end{align}
  and a well defined choice for the special subspace \((X_2^+)^s\) is given by
  \begin{align}
    (X_2^+)^s=\bigoplus_{j=0}^1\mathbb{C}L_{-1}^ju_2\oplus\bigoplus_{\epsilon=0,\pm}\mathbb{C}W_{-2p+2}^\epsilon u_2\,.
  \end{align}
\end{prop}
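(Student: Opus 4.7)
The proof splits into establishing the level-two null vector and verifying the decomposition for the special subspace. For the null vector, the Virasoro submodule $\mathcal{U}(\mathcal{L})u_2 \subseteq X_2^+$ is isomorphic to the simple module $\mathcal{L}_{h_{1,2}}$ with $h_{1,2} = (3-2p)/(4p)$ at central charge $c_p = 1-6(p-1)^2/p$. A direct Virasoro computation gives $L_1 v = (4h_{1,2}+2+3a)L_{-1}u_2$ and $L_2 v = (6h_{1,2} + a(4h_{1,2}+c_p/2))u_2$ for the candidate $v = (L_{-1}^2 + aL_{-2})u_2$; the first condition fixes $a = -1/p$ using $2h_{1,2}+1 = 3/(2p)$, and with this value the second expression simplifies to $(9-6p)/(2p) - (9-6p)/(2p) = 0$ after substituting $c_p$. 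Since $\mathcal{L}_{h_{1,2}}$ is simple this singular vector must vanish.

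For the special subspace, set $S = \bigoplus_{j=0}^1 \mathbb{C}\, L_{-1}^j u_2 \oplus \bigoplus_{\epsilon=0,\pm} \mathbb{C}\, W^\epsilon_{-(2p-2)} u_2$ and show that the canonical map $S \to X_2^+/c_1(X_2^+)$ is a $\mathbb{C}$-linear isomorphism. Since $\mathcal{A}_0(X_2^+) = \overline{X}_2^+$ is one-dimensional and represented by $u_2$ (Proposition~\ref{sec:zeromodesubspaces}), and since $c_1(X_2^+) \subseteq \mathfrak{g}_1(\mathcal{W}_p)\cdot X_2^+$ by definition, there is a short exact sequence
\[
0 \longrightarrow \mathfrak{g}_1(\mathcal{W}_p)\cdot X_2^+ / c_1(X_2^+) \longrightarrow X_2^+/c_1(X_2^+) \longrightarrow \overline{X}_2^+ \longrightarrow 0.
\]
The ``move modes to the right'' manipulation~\eqref{eq:moveright}, combined with $W^\epsilon_{-k}u_2 = 0$ for $k < 2p-2$ and the null vector $L_{-1}^2 u_2 = \tfrac{1}{p}L_{-2}u_2 \in c_1(X_2^+)$, reduces every element of $\mathfrak{g}_1(\mathcal{W}_p)\cdot X_2^+$ modulo $c_1(X_2^+)$ to a $\mathbb{C}$-linear combination of the classes of $L_{-1}u_2$ and $W^\epsilon_{-(2p-2)}u_2$, establishing surjectivity of $S \to X_2^+/c_1(X_2^+)$.

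For injectivity, the summands of $S$ sit at three distinct $L_0$-weights $h_{1,2}$, $h_{1,2}+1$, and $h_{3,2} = h_{1,2}+2p-2$, so it suffices to check the claim weight by weight. At $h_{1,2}$ and $h_{1,2}+1$ any representative of $c_1(X_2^+)$ would be $A_{-n}m$ with $m$ of weight strictly less than $h_{1,2}$, which is impossible in $X_2^+$. At weight $h_{3,2}$, the Virasoro decomposition~\eqref{eq:virdecomp} writes $X_2^+[h_{3,2}]$ as the sum of the three-dimensional primary space of the three copies of $\mathcal{L}_{h_{3,2}}$ and the level-$(2p-2)$ Virasoro descendants of $u_2$ in $\mathcal{L}_{h_{1,2}}$; the null vector, together with the fact that every word of length $\geq 2$ in $L_{-1}$ contains $L_{-1}^2$, forces all of these descendants to lie in $c_1(X_2^+)$, while the three vectors $W^\epsilon_{-(2p-2)}u_2$ realise the primary subspace and are linearly independent since they carry distinct $\mathfrak{sl}_2$-charges $\epsilon \in \{+,0,-\}$.

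The main obstacle lies in the kernel reduction in the second paragraph: one must verify that for any $m \in X_2^+$ and any $1 \leq k \leq 2p-2$ the class of $W^\epsilon_{-k}m$ reduces to the prescribed form modulo $c_1(X_2^+)$. This requires commuting $W^\epsilon_{-k}$ past every negative mode in a presentation of $m$, and checking using the $\mathcal{W}_p$-algebra commutation relations (obtained from the $W$-$W$ OPE) that every commutator produced either lies in $c_1(X_2^+)$ or is reducible to the listed basis via further applications of~\eqref{eq:moveright}.
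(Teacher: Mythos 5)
The paper itself offers no proof of Proposition \ref{sec:X2special}: it is quoted as part of the review of the abelian structure of \(\mathcal{W}_p\)-mod following \cite{Nagatomo:2009xp}, so your attempt has to be judged on its own terms. Your first paragraph is correct and complete: the coefficients \((4h+2+3a)\) and \((6h+a(4h+c/2))\) are the right ones, the first fixes \(a=-1/p\), the second then vanishes at \(h_{1,2}=(3-2p)/(4p)\) and \(c_p=1-6(p-1)^2/p\), and simplicity of \(\mathcal{L}_{h_{1,2}}\) kills the singular vector.

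The special-subspace half has a genuine gap, concentrated exactly where you place your ``main obstacle'' but of a different nature than you describe. First, a technical misstep: equation \eqref{eq:moveright} is a relation in \(\mathcal{A}_0(M\fuse N)\), i.e.\ in a quotient of a tensor product of two modules by the diagonal action of \(\mathfrak{g}_1^{\mathbb{P}}(V)\); it says nothing about the quotient \(M/c_1(M)\) of a single module and cannot be invoked there. Second, and more seriously, the reduction does not close on commutation relations and the Virasoro null vector alone. Consider \(L_{-1}W^\epsilon_{-(2p-2)}u_2\): since \([L_{-1},W^\epsilon_{-(2p-2)}]=0\) (the coefficient \(-(n+h_A-1)\) vanishes at \(n=-(2p-2)\), \(h_A=2p-1\)), this equals \(W^\epsilon_{-(2p-2)}L_{-1}u_2\), which is not of the form \(A_{-n}m\) with \(n\geq h_A\) and sits at weight \(h_{3,2}+1\), where your proposed \(S\) contains nothing. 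For the proposition to hold this vector must lie in \(c_1(X_2^+)\), which forces a relation of the type
\begin{align*}
  W^\epsilon_{-(2p-1)}u_2=\lambda\, L_{-1}W^\epsilon_{-(2p-2)}u_2+(\text{Virasoro descendants of } u_2)\,,\qquad \lambda\neq 0\,,
\end{align*}
a statement about which level-one vector above the new primaries in \eqref{eq:virdecomp} is actually reached by a genuine \(c_1\)-element; this is a module-specific singular-vector fact, not a consequence of the \(W\)--\(W\) OPE or of commuting modes past one another. The same issue recurs for \(W^{\epsilon}_{-m}W^{\epsilon'}_{-n}u_2\) with \(m,n\leq 2p-2\), which reach weights up to \(h_{1,2}+4p-4\). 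These relations are precisely what the free field realisation (or the explicit structure of \(X_2^+\) in \cite{Nagatomo:2009xp}) supplies, and without them your spanning argument is incomplete. The injectivity argument --- weight by weight, with \(c_1(X_2^+)\) vanishing below level \(2\) and contained in the Virasoro descendants of \(u_2\) at level \(2p-2\) --- is essentially sound, granted the standard fact that the three vectors \(W^\epsilon_{-(2p-2)}u_2\) have nonzero, independent primary components.
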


\subsection{The free field realisation of \(\mathcal{W}_p\)}

One can explicitly construct \(\mathcal{W}_p\) as a subVOA of a free field VOA \(V_L\) on a lattice by the method of screening operators.
The free field VOA is constructed by means of the Heisenberg algebra
\begin{align}
  \mathfrak{a}=\mathbb{C}\mathbbm{1}\oplus\bigoplus_{n\in\mathbb{Z}}\mathbb{C}a_n\,
\end{align}
as well as an operator \(\hat a\),
satisfying the commutation relations
\begin{align}
  [a_m,\hat a]&=\delta_{m,0}\,,&[a_m,a_n]=m\delta_{m,-n}\,.
\end{align}
The Heisenberg algebra acts on Fock spaces \(\mathcal{F}^\lambda\) generated by a state \(|\lambda\rangle,\ \lambda\in\mathbb{C}\)
\begin{align}
  a_m|\lambda\rangle=\lambda\delta_{m,0}|\lambda\rangle\,,\quad m\geq 0\,.
\end{align}
For the free field VOA \(V_L\) we restrict the charges \(\lambda\) of \(\mathcal{F}^\lambda\)
to a rescaled \(A_1\) root lattice \(L\) and its dual \(L^\vee \)
\begin{align}
  L&=\mathbb{Z}\alpha_+\,,&L^\vee&=\mathbb{Z}\frac{\alpha_-}{2}\,,
\end{align}
where \(\alpha_+=\sqrt{2p}\) and \(\alpha_-=-\sqrt{\tfrac2p}\). 
The theory contains a single free bosonic field
\begin{align}
  \phi(z)=\hat a + a_0\log z +\sum_{n\neq 0}\frac{a_n}{-n}z^{-n}
\end{align}
that satisfies the OPE
\begin{align}
  \phi(z)\phi(w)\sim \log(z-w)\,.
\end{align}
The energy momentum tensor is given by
\begin{align}
  T(z)=\frac12:(\partial\phi(z))^2:+\frac{\alpha_++\alpha_-}{2}\partial\phi(z)\,,
\end{align}
where \(:\ :\) indicates normal ordering, {\it i.e.} arranging the Heisenberg operators in ascending
order from left to right according to their index with \(\hat a\) on the very left.
Calculating the OPE of \(T\) with itself, one reproduces the
the central charge
\begin{align}
  c_p=1-6\frac{(p-1)^2}{p}
\end{align}
of \(\mathcal{W}_p\). The primary fields are given by
\begin{align}
  V_\mu(z)=:e^{\mu\phi(z)}:\,,
\end{align}
where \(\mu\in L^\vee\) and the weight of \(V_\mu(z)\) is
\begin{align}
  h_\mu=\frac12\mu(\mu-(\alpha_++\alpha_-))\,.
\end{align}
The OPE of two primary fields is given by
\begin{align}
  V_\mu(z)V_\nu(w)=(z-w)^{\mu\cdot\nu}:V_\mu(z)V_\nu(w):\,.
\end{align}

The VOA \(V_L\) contains the fields \(\mathbbm{1},\ T(z)\) and \(V_\mu(z)\) for \(\mu\in L\) but not 
\(V_{\nu}(z)\) for \(\nu\in L^\vee\setminus L\). The representation category \(V_L\)-mod is semi-simple with
\(2p\) simple modules \(\mathcal{V}_{[\lambda]},\ [\lambda]\in L^\vee/L\).
For later calculations it will prove useful to parametrise the classes \([\lambda]\in L^\vee/L\) by
\begin{align}
  [r,s]=:[\alpha_{r,s}]=\left[\frac{1-r}{2}\alpha_++\frac{1-s}{2}\alpha_-\right]\,\quad r,s\in \mathbb{Z}\,,
\end{align}
where
\begin{align}
  [r+1,s+p]=[r,s]\,.
\end{align}
The \(V_L\)-modules decompose into infinite sums of Fock spaces
\begin{align}
  \mathcal{V}_{[\alpha_{r,s}]}=\bigoplus_{n\in\mathbb{Z}}\mathcal{F}^{\alpha_{r+2n,s}}\,.
\end{align}

The \(V_L\)-theory contains two weight 1 primary fields that can be used as screening operators
\begin{align}
  Q_+(z)&=V_{\alpha_+}(z)\,,&Q_-(z)&=V_{\alpha_-}(z)\,.
\end{align}
The \(\mathcal{W}_p\) VOA is realised by screening with \(Q_-(z)\)
\begin{align}
  \mathcal{W}_p=\ker \left(\oint \d z\,Q_-(z):\mathcal{V}_{[1,1]}\rightarrow \mathcal{V}_{[1,-1]}\right)\,.
\end{align}

As \(\mathcal{W}_p\)-modules the simple \(V_L\)-modules \(\mathcal{V}_{[1,p]}\) and \(\mathcal{V}_{[2,p]}\) are isomorphic to
\(X_p^+\) and \(X_p^-\) respectively. The remaining \(2p-2\) simple \(V_L\)-modules are reducible as
\(\mathcal{W}_p\)-modules and form Felder complexes \cite{Fel:1989}
\begin{align}
  \begin{tikzpicture}[baseline=(1)]
    \path
    (0,0) node (1) {\(\cdots\)}
    ++(2,0) node (2) {\(\mathcal{V}_{[\alpha_{1,s}]}\)}
    ++(3,0) node (3) {\(\mathcal{V}_{[\alpha_{2,p-s}]}\)}
    ++(3,0) node (4) {\(\mathcal{V}_{[\alpha_{1,s}]}\)}
    ++(2,0) node (5) {\(\cdots\)\ .};
    \draw[->,>=latex] (1) -- (2);
    \draw[->,>=latex] (2) -- (3) node[above,midway] {\(Q_-^{(s)}\)};
    \draw[->,>=latex] (3) -- (4) node[above,midway] {\(Q_-^{(p-s)}\)};
    \draw[->,>=latex] (4) -- (5);
  \end{tikzpicture} 
\end{align}
where \(Q_-^{(a)}\) is the zero mode of a rather complicated \(a\)-fold product of \(Q_-(z)\)
whose details need not concern us here \cite{Tsuchiya:1986}. The simple \(\mathcal{W}_p\)-modules \(X_s^\pm\) for
\(1\leq s\leq p-1\) are equivalent to the kernels and images of \(Q^{(a)}\)
\begin{align}
  &X_s^+=\ker \left(Q_-^{(s)}:\, \mathcal{V}_{[1,s]}\rightarrow\mathcal{V}_{[2,p-s]}\right)
  =\im\left( Q_-^{(p-s)}:\,\mathcal{V}_{[2,p-s]}\rightarrow \mathcal{V}{[1,s]}\right)\\\nonumber
  &X_{p-s}^-=\ker \left(Q_-^{(p-s)}:\, \mathcal{V}_{[2,p-s]}\rightarrow\mathcal{V}_{[1,s]}\right)
  =\im\left( Q_-^{(s)}:\, \mathcal{V}_{[1,s]}\rightarrow \mathcal{V}{[2,p-s]}\right)\,.
\end{align}
\begin{prop}\label{sec:WtoV}
  The screening operator maps \(Q^{(a)}\) induce surjective \(\mathcal{W}_p\)-module maps
  \begin{align}
    \mathcal{V}_{[1,s]}&\rightarrow X_{p-s}^{-}\,,& \mathcal{V}_{[2,s]}&\rightarrow X_{p-s}^{+}\,,\\\nonumber
    |\alpha_{-1,s}\rangle &\mapsto v^+_{p-s}& |\alpha_{0,s}\rangle &\mapsto u_{p-s}
  \end{align}
  for \(1\leq s\leq p-1\) as well as injective \(\mathcal{W}_p\)-module maps
  \begin{align}
    X_s^{+}&\rightarrow \mathcal{V}_{[1,s]}\,,& X_s^{-}&\rightarrow \mathcal{V}_{[2,s]}\,,\\\nonumber
    u_s&\mapsto |\alpha_{1,s}\rangle& v_s^-&\mapsto |\alpha_{2,s}\rangle
  \end{align}
  for \(1\leq s\leq p\).
\end{prop}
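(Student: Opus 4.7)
The existence of all four morphisms is immediate from the kernel/image identifications for the Felder complex recorded just before the proposition. The injection $X_s^+\hookrightarrow\mathcal{V}_{[1,s]}$ is the inclusion coming from $X_s^+=\ker(Q_-^{(s)}\colon\mathcal{V}_{[1,s]}\to\mathcal{V}_{[2,p-s]})$; applying the substitution $s\leftrightarrow p-s$ to the identity $X_{p-s}^-=\ker(Q_-^{(p-s)}\colon\mathcal{V}_{[2,p-s]}\to\mathcal{V}_{[1,s]})$ produces $X_s^-=\ker(Q_-^{(s)}\colon\mathcal{V}_{[2,s]}\to\mathcal{V}_{[1,p-s]})$, giving the second injection. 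The cases $s=p$ of both injections are covered by the isomorphisms $\mathcal{V}_{[1,p]}\cong X_p^+$ and $\mathcal{V}_{[2,p]}\cong X_p^-$ already noted. The two surjections are obtained analogously as the corestrictions of $Q_-^{(s)}$ to its image, one directly (from $X_{p-s}^-=\im(Q_-^{(s)}\colon\mathcal{V}_{[1,s]}\to\mathcal{V}_{[2,p-s]})$) and one after the substitution $s\leftrightarrow p-s$.

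The content of the proposition is then to pin down the four highest weight images. Using $h_{\alpha_{r,s}}=h_{r,s}$ a short calculation gives
\[
  h_{\alpha_{1,s}}=h_s^+,\qquad h_{\alpha_{2,s}}=h_s^-,\qquad h_{\alpha_{-1,s}}=h_{p-s}^-,\qquad h_{\alpha_{0,s}}=h_{p-s}^+,
\]
so in each case the named Fock vacuum sits at the correct conformal weight to be (or to map to) a highest weight vector of the relevant $X_{s'}^{\pm}$. For the two $u$-type statements the top space $X_{s'}^+[h_{s'}^+]$ is one-dimensional, so this weight match alone forces the correspondence up to a scalar, which I would absorb into the normalization of $u_s$, respectively $u_{p-s}$. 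For the surjection $\mathcal{V}_{[2,s]}\twoheadrightarrow X_{p-s}^+$ I would additionally check that $|\alpha_{0,s}\rangle$ does not lie in the kernel $X_s^-$ (its conformal weight matches $u_{p-s}$ rather than any vector of $X_s^-$), so its image is indeed nonzero.

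The $v$-type statements require an extra step because $X_{s'}^-[h_{s'}^-]$ is two-dimensional, and its basis vectors $v_{s'}^{\pm}$ are distinguished only by the conditions $W_0^{\pm}v_{s'}^{\mp}=0$. Weight matching alone only places the image of $|\alpha_{2,s}\rangle$ inside $\mathbb{C}v_s^+\oplus\mathbb{C}v_s^-$; to conclude it is a nonzero scalar multiple of $v_s^-$ I would use the free-field expression of the triplet currents $W^\pm(z)$ as elements of $\mathcal{V}_{[1,1]}$ commuting with $Q_-$, together with the fact that these currents shift Heisenberg charges by $\pm\alpha_+$, to verify directly that $W_0^-|\alpha_{2,s}\rangle=0$. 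The companion claim $Q_-^{(s)}|\alpha_{-1,s}\rangle\in\mathbb{C}v_{p-s}^+$ follows from the same type of computation, checking $W_0^+\,Q_-^{(s)}|\alpha_{-1,s}\rangle=0$. The remaining scalar freedoms are absorbed into the normalizations of $v_s^-$ and $v_{p-s}^+$. The main technical obstacle I anticipate is this last charge-tracking step: the explicit form of $W^\pm$ via the commutant of the long screening requires careful bookkeeping of signs and cocycle conventions to make sure it is $W_0^-$ (and not $W_0^+$) that annihilates $|\alpha_{2,s}\rangle$, whereas all other parts of the proof reduce to formal consequences of the Felder complex data together with elementary weight arithmetic.
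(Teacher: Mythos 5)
You should note at the outset that the paper itself gives no proof of this proposition; it is presented as a direct consequence of the Felder-complex kernel/image identifications recorded immediately before it, which are in turn cited from the literature. Your reconstruction is the natural one and is essentially correct: the four maps are exactly the inclusions of kernels and corestrictions to images (with the \(s\leftrightarrow p-s\) relabelling), and the weight computations \(h_{\alpha_{1,s}}=h_s^+\), \(h_{\alpha_{2,s}}=h_s^-\), \(h_{\alpha_{0,s}}=h_{p-s}^+\), \(h_{\alpha_{-1,s}}=h_{p-s}^-\) check out. Moreover, the charge-tracking step you anticipate as the main obstacle is easier than you fear: once one fixes the convention that \(W^{\pm}\) shifts the Heisenberg charge by \(\pm\alpha_+\) (i.e.\ \(r\mapsto r\mp 2\)), the identities \(W_0^-|\alpha_{2,s}\rangle=0\) and \(W_0^+\,Q_-^{(s)}|\alpha_{-1,s}\rangle=0\) are forced by weight bookkeeping alone, since \(W_0^-\) sends \(\mathcal{F}^{\alpha_{2,s}}\) into \(\mathcal{F}^{\alpha_{4,s}}\), whose minimal conformal weight \(h_{2,s}+3p-s\) exceeds \(h_{2,s}\), and similarly for \(W_0^+\) acting on \(\mathcal{F}^{\alpha_{-1,-s}}\); no cocycle signs enter beyond the overall labelling convention.

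Two points are genuinely missing from your write-up, though both are fillable. First, before the condition \(W_0^-|\alpha_{2,s}\rangle=0\) can identify \(|\alpha_{2,s}\rangle\) with \(v_s^-\), you must check that \(|\alpha_{2,s}\rangle\in X_s^-=\ker Q_-^{(s)}\) at all; this follows from the same arithmetic, as \(Q_-^{(s)}\) sends \(\mathcal{F}^{\alpha_{2,s}}\) to \(\mathcal{F}^{\alpha_{2,-s}}\), whose minimal weight is \(h_{2,s}+2s\). Second, and less trivially, you must show \(Q_-^{(s)}|\alpha_{-1,s}\rangle\neq 0\). Here weight arithmetic does not suffice: the image would sit at level \(s\) of \(\mathcal{F}^{\alpha_{-1,-s}}\), which is a nonzero weight space, and the weight \(h_{-1,s}=h_{1,s}+s\) is in general also a weight of \(X_s^+=\ker Q_-^{(s)}\), so you cannot exclude \(|\alpha_{-1,s}\rangle\in\ker\) this way. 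One clean fix: \(X_{p-s}^-\) is graded by Heisenberg charge, and its two-dimensional top already contains \(|\alpha_{2,p-s}\rangle\) in charge \(\alpha_{2,p-s}\), so exactly one dimension of the top lies in \(\mathcal{F}^{\alpha_{0,p-s}}=\mathcal{F}^{\alpha_{-1,-s}}\); since \(X_{p-s}^-=\im Q_-^{(s)}\) and the only weight-\(h_{-1,s}\) vectors of \(\mathcal{F}^{\alpha_{-1,s}}\) are multiples of \(|\alpha_{-1,s}\rangle\), that dimension must be \(\mathbb{C}\,Q_-^{(s)}|\alpha_{-1,s}\rangle\), which is therefore nonzero. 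Alternatively one simply invokes Felder's structure theory for the action of \(Q_-^{(s)}\) on Fock modules, which is what the paper implicitly does by citation.
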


\subsection{The \(V_L\) fusion product}
\label{sec:vlfusion}

We recall some well known facts about the \(V_L\)-mod fusion product that will be relevant
to our calculations below.
The tuple \((V_L\mathrm{-mod},\fuse_{V_L},\mathcal{V}_{[1,1]})\) defines a braided monoidal category,
{\it i.e.} there is a well defined fusion product of \(V_L\)-modules
\begin{align}
  \mathcal{V}_{[s_1,r_1]}\fuse_{V_L}\mathcal{V}_{[s_2,r_2]}=\mathcal{V}_{[s_1+s_2-1,r_1+r_2-1]}\,.
\end{align}

The free field VOA \(V_L\) contains another subVOA
\((\mathcal{F}^{\alpha_{1,1}},|0\rangle,T,Y)\)
 other than \(\mathcal{W}_p\) called the Heisenberg VOA.\footnote{This is the only appearance of a non-\(c_2\)-cofinite VOA in this paper.}
 The current algebra \(\mathcal{U}(\mathcal{F}^{\alpha_{1,1}})\)
is given by the universal enveloping algebra \(\mathcal{U}(\mathfrak{a})\) of the Heisenberg algebra, while the
simple objects of \(\mathcal{F}^{\alpha_{1,1}}\) are given by 
\(\mathcal{F}^{\lambda},\lambda\in\mathbb{C}\)
\begin{prop}\label{sec:Vfusion}
  For \(\lambda_1,\lambda_2\in\mathbb{C}\) the fusion product in \(\mathcal{F}^{\alpha_{1,1}}\)-mod
  is given by
  \begin{align}
    \mathcal{F}^{\lambda_1}\fuse_{\mathcal{F}^{\alpha_{1,1}}}\mathcal{F}^{\lambda_2}
    &\stackrel{\cong}{\rightarrow}
    \mathcal{F}^{\lambda_1+\lambda_2}\\\nonumber
    |\lambda_1\rangle\otimes|\lambda_2\rangle&\mapsto |\lambda_1+\lambda_2\rangle
  \end{align}
  and for \((r_1,s_1),(r_2,s_2)\in\mathbb{Z}^2\)
  the following diagram commutes
\begin{align}
  \begin{tikzpicture}[baseline=(base)]
    \path
    (0,0) node (1) {\(\mathcal{F}^{\alpha_{r_1,s_1}}\fuse_{\mathcal{F}^{\alpha_{1,1}}}\mathcal{F}^{\alpha_{r_2,s_2}}\)}
    ++(0,-2) node (2) {\(\mathcal{V}_[r_1,s_1]\fuse_{V_L}\mathcal{V}_{[r_2,s_2]}\)}
    ++(5,2) node (3) {\(\mathcal{F}^{\alpha_{r_1+r_2-1,s_1+s_2-1}}\)}
    ++(0,-2) node (4) {\(\mathcal{V}_{[r_1+r_2-1,s_1+s_2-1]}\)};
    \node (0,-1.5) (base) {};
    \draw[->,>=latex] (1) -- (2) node[right,midway] {};
    \draw[->,>=latex] (2) -- (4) node[right,midway] {};
    \draw[->,>=latex] (1) -- (3) node[above,midway] {};
    \draw[->,>=latex] (3) -- (4) node[above,midway] {};
  \end{tikzpicture}
\end{align}
where 
the vertical arrows are injective \(\mathcal{F}^{\alpha_{1,1}}\)-module maps and
the horizontal arrows are a \(\mathcal{F}^{\alpha_{1,1}}\) and a \(V_L\)-isomorphism
respectively.
\end{prop}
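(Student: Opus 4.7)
The isomorphism in the first half of the proposition is the standard statement that free-boson Fock modules fuse by charge addition; my plan is to deduce it from the \(\mathcal{A}_0\)-machinery of Section~\ref{sec:explicitdefoffusion} combined with the explicit vertex operator \(V_{\lambda_2}(z)=\,:\!e^{\lambda_2\phi(z)}\!:\).

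First I would identify the small quotient. For every Fock module \(\mathcal{F}^\lambda\) the highest-weight vector \(|\lambda\rangle\) is annihilated by all \(a_n\) with \(n>0\) and generates \(\mathcal{F}^\lambda\) freely under the \(a_{-n}\) with \(n\ge 1\); consequently both the special subspace and the zero-mode subspace collapse to \(\mathbb{C}|\lambda\rangle\). Proposition~\ref{sec:Akestimate} then bounds \(\mathcal{A}_0(\mathcal{F}^{\lambda_1}\fuse\mathcal{F}^{\lambda_2})\) by the single class \([|\lambda_1\rangle\otimes|\lambda_2\rangle]\), and the zero-mode coproduct rule \(j_{1,0}(a_0)=a_0\otimes\mathbbm{1}+\mathbbm{1}\otimes a_0\) exhibits this class as an \(a_0\)-eigenvector of charge \(\lambda_1+\lambda_2\).

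Next I would produce a non-zero Heisenberg-module map from the fusion product to \(\mathcal{F}^{\lambda_1+\lambda_2}\). The operator \(V_{\lambda_2}(z)\), with leading behaviour \(V_{\lambda_2}(z)|\lambda_1\rangle=z^{\lambda_1\lambda_2}(|\lambda_1+\lambda_2\rangle+\text{descendants})\), satisfies the axioms of a vertex operator of type \(\binom{\mathcal{F}^{\lambda_2}}{\mathcal{F}^{\lambda_1+\lambda_2},\mathcal{F}^{\lambda_1}}\) in the sense of Section~\ref{sec:intertwiners}, and therefore corresponds to an \(\mathcal{F}^{\alpha_{1,1}}\)-linear homomorphism \(\mathcal{F}^{\lambda_1}\fuse\mathcal{F}^{\lambda_2}\to\mathcal{F}^{\lambda_1+\lambda_2}\) sending \(|\lambda_1\rangle\otimes|\lambda_2\rangle\) to \(|\lambda_1+\lambda_2\rangle\). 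Combined with the previous step, this proves that \(\mathcal{A}_0\) of the source is \emph{exactly} one-dimensional and that the constructed map is surjective, since the target is cyclically generated by \(|\lambda_1+\lambda_2\rangle\). Comparing the graded characters -- each side is cyclic, with a unique top vector of charge \(\lambda_1+\lambda_2\) on which the negative Heisenberg modes act freely -- then upgrades surjectivity to an isomorphism.

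For the commutative diagram the key observation is that \(V_{\alpha_{r_2,s_2}}(z)\) is intrinsic to the lattice VOA \(V_L\): exactly the same operator is simultaneously a Heisenberg intertwiner between Fock summands and a \(V_L\)-intertwiner between \(\mathcal{V}_{[r_1,s_1]}\) and \(\mathcal{V}_{[r_1+r_2-1,s_1+s_2-1]}\), using the identity \(\alpha_{r_1,s_1}+\alpha_{r_2,s_2}=\alpha_{r_1+r_2-1,s_1+s_2-1}\). The maps it induces between fusion products therefore fit into the square, and both routes send \(|\alpha_{r_1,s_1}\rangle\otimes|\alpha_{r_2,s_2}\rangle\) to \(|\alpha_{r_1+r_2-1,s_1+s_2-1}\rangle\); cyclicity of the fusion products over the respective current algebras then forces commutativity. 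The main obstacle is that \(\mathcal{F}^{\alpha_{1,1}}\) is not \(c_2\)-cofinite, so the general results of Section~\ref{sec:deffusion} do not apply verbatim; one must verify that the \(\mathcal{A}_k\)-filtration and the intertwiner characterisation of \(\hom_{\mathcal{U}(V)}(M\fuse N,L)\) remain valid for Fock modules, which they do because each Fock space has finite-dimensional \(L_0\)-eigenspaces and a cyclic highest-weight generator. Once that bookkeeping is granted, the explicit free-field calculation above supplies everything else.
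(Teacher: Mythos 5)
The paper offers no proof of this proposition to compare against: it sits in Section~\ref{sec:vlfusion} under the explicit disclaimer that the authors are merely recalling well-known facts about the \(V_L\) and Heisenberg fusion products. Judged on its own, your reconstruction follows the standard route and most of it is sound: both the special and zero-mode subspaces of a Fock module do collapse to \(\mathbb{C}|\lambda\rangle\), so Proposition~\ref{sec:Akestimate} together with \(j_{1,0}(a_0)=a_0\otimes\mathbbm{1}+\mathbbm{1}\otimes a_0\) pins \(\mathcal{A}_0(\mathcal{F}^{\lambda_1}\fuse\mathcal{F}^{\lambda_2})\) down to at most the single class \([|\lambda_1\rangle\otimes|\lambda_2\rangle]\) of charge \(\lambda_1+\lambda_2\), and the normal-ordered exponential supplies the nonzero intertwiner forcing equality and surjectivity onto \(\mathcal{F}^{\lambda_1+\lambda_2}\). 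The identity \(\alpha_{r_1,s_1}+\alpha_{r_2,s_2}=\alpha_{r_1+r_2-1,s_1+s_2-1}\) and the cyclicity argument for the square are also correct.

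The one step that is not yet an argument is ``comparing the graded characters \(\ldots\) upgrades surjectivity to an isomorphism.'' The character of \(\mathcal{F}^{\lambda_1}\fuse\mathcal{F}^{\lambda_2}\) is exactly what you do not know a priori, and a one-dimensional \(\mathcal{A}_0\) does not by itself bound the higher graded pieces: \(\mathcal{A}_0\) is only right exact, so the kernel of your surjection onto \(\mathcal{F}^{\lambda_1+\lambda_2}\) could in principle be invisible in degree zero. You need a genuine upper bound, obtained either by running the Nahm estimate level by level on the quotients \(\mathcal{A}_k\), or more cheaply by noting that \(a_0\) acts semisimply on \(\mathcal{F}^{\lambda_1}\otimes\mathcal{F}^{\lambda_2}\) with single eigenvalue \(\lambda_1+\lambda_2\), hence on the fusion product, and that a charge-\(\mu\) Heisenberg module generated by a one-dimensional \(\mathcal{A}_0\) is a quotient of the irreducible Fock module \(\mathcal{F}^{\mu}\). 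Two smaller omissions: you prove commutativity of the square but not the asserted injectivity of its vertical arrows, which follows from the first half together with the decomposition \(\mathcal{V}_{[r,s]}=\bigoplus_n\mathcal{F}^{\alpha_{r+2n,s}}\) applied summand by summand; and your flag about \(\mathcal{F}^{\alpha_{1,1}}\) failing \(c_2\)-cofiniteness is apt but left unresolved --- though the paper itself (see its own footnote on this point) is no more careful there than you are.
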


\section{\boldmath Computing certain fusion products in \(\mathcal{W}_p\)-mod \unboldmath}
\label{sec:calcX1andX2}
In this section we apply the methods explained above to analyse the 
monoidal structure of \(\mathcal{W}_p\)-mod.

\subsection{The fusion rules and rigidity of \(X_1^-\)}

In this section we analyse the fusion products of \(X_1^-\) with simple modules and prove the rigidity
of \(X_1^-\).
\begin{thm}\label{sec:X1msquare}
  \begin{enumerate}
  \item The fusion square of \(X_1^-\) is
    \begin{align}
      X_1^-\fuse X_1^-=X_1^+\,.
    \end{align}
  \item \(X_1^-\) is rigid and self-dual.
  \end{enumerate}
\end{thm}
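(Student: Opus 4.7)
My plan is to first establish the fusion rule $X_1^- \fuse X_1^- = X_1^+$ by computing the zero-mode quotient $\mathcal{A}_0(X_1^- \fuse X_1^-)$, and then to bootstrap rigidity from this identification together with the self-contragrediency of $X_1^-$. By Proposition \ref{sec:Akestimate}, there is a surjective $\mathbb{C}$-linear map $(X_1^-)^s \otimes (X_1^-)^0 \twoheadrightarrow \mathcal{A}_0(X_1^- \fuse X_1^-)$; by Propositions \ref{sec:zeromodesubspaces} and \ref{sec:X1special} the source is eight-dimensional, spanned by the classes $L_{-1}^j v_1^\mu \otimes v_1^\nu$ for $j \in \{0,1\}$ and $\mu,\nu \in \{+,-\}$. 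I would apply (i) the diagonal Virasoro action \eqref{eq:vircomult}, (ii) the level-two null vector $(L_{-1}^2 - pL_{-2}) v_1^\mu = 0$ from Proposition \ref{sec:X1special}, and (iii) the action of the zero modes $W^\epsilon_0$ of the $W$-fields combined with the mode behaviour \eqref{eq:X1Waction} via \eqref{eq:lowdiagaction} and \eqref{eq:diagaction}. Taken together, these relations should force all eight generators to be proportional and concentrated at $L_0$-weight $0$, leaving $\mathcal{A}_0(X_1^- \fuse X_1^-)$ at most one-dimensional.

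Consulting the table of $\mathcal{A}_0$-images of all indecomposable $\mathcal{W}_p$-modules, the only indecomposable $M$ with $\mathcal{A}_0(M) = \overline{X}_1^+$ is $X_1^+$ itself, so once the reduction is sharp we obtain $X_1^- \fuse X_1^- \cong X_1^+$. Nonvanishing of the $\mathcal{A}_0$-class is guaranteed by Theorem \ref{sec:tensorcat}, which supplies $\hom(X_1^- \fuse X_1^-, X_1^+) \cong \hom(X_1^-, (X_1^-)^\ast) = \hom(X_1^-, X_1^-) = \mathbb{C}$ via the self-contragrediency $(X_1^-)^\ast = X_1^-$, and hence yields a canonical nonzero evaluation $e_{X_1^-}: X_1^- \fuse X_1^- \to X_1^+$. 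The same theorem then gives $(X_1^-)^\vee = (X_1^-)^\ast = X_1^-$, so $X_1^-$ is weakly rigid and self-dual.

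For full rigidity I take a nonzero coevaluation $i_{X_1^-}: X_1^+ \to X_1^- \fuse X_1^-$; this is unique up to scale since $\hom(X_1^+, X_1^- \fuse X_1^-) \cong \mathbb{C}$. The composite appearing in the first identity of \eqref{eq:rigiditycond} is then an endomorphism of the simple module $X_1^-$ and hence a scalar by Schur's lemma. I would verify this scalar is nonzero by tracking the action of $i_{X_1^-}$ and $e_{X_1^-}$ on the highest-weight vector $v_1^+$ through the isomorphism $\mathcal{A}_0(X_1^- \fuse X_1^-) \cong \overline{X}_1^+$ of the first paragraph, and then rescale $i_{X_1^-}$ so that the scalar becomes $1$. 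The second identity in \eqref{eq:rigiditycond} follows from the first by the self-duality just established.

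The main obstacle is the reduction in the first paragraph: the null vector alone only relates the $j=1$ and $j=0$ generators within a single $\mu$ sector, so the $W^\epsilon$-zero-mode relations --- which by \eqref{eq:X1Waction} carry the $\mu=+$ sector into the $\mu=-$ sector --- are indispensable for the full collapse. If the reduction left a higher-dimensional quotient, the table of $\mathcal{A}_0$-images would permit other indecomposables in $\mathcal{C}_1$ (such as $P_1^+$, whose image is the two-dimensional $\widetilde{X}_1^+$, or the $G_{1,d}^+$ family) and the identification would fail.
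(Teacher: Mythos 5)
Your overall strategy for part 1 --- bound $\mathcal{A}_0(X_1^-\fuse X_1^-)$ from above by the surjection from $(X_1^-)^s\otimes(X_1^-)^0$, then establish non-triviality --- matches the paper's Steps 1 and 3, and your non-triviality argument via $\hom(X_1^-\fuse X_1^-,X_1^+)\cong\hom(X_1^-,(X_1^-)^\ast)=\mathbb{C}$ is a clean alternative to the paper's explicit free-field construction of a map into $\mathcal{V}_{[1,1]}$. However, there is a genuine gap at the point where you identify the module. Your claim that ``the only indecomposable $M$ with $\mathcal{A}_0(M)=\overline{X}_1^+$ is $X_1^+$ itself'' is false: reading off the paper's table, $\mathcal{A}_0(H_{1,1}^+)=\overline{X}_1^+$ and $\mathcal{A}_0(I_{1,1}^+(\lambda))=\overline{X}_1^+$ as well. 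These modules are generated by a single weight-$0$ vector but contain copies of $X_{p-1}^-$ \emph{below} it, i.e.\ the weight-$1$ generators of those constituents are descendants of the weight-$0$ vector and hence invisible to $\mathcal{A}_0$. The functor $\mathcal{A}_0$ controls only the top of a module, never its socle, so knowing $\mathcal{A}_0(X_1^-\fuse X_1^-)$ is one-dimensional of weight $0$ and nonzero leaves $X_1^+$, $H_{1,1}^+$ and the whole $\mathbb{P}^1$-family $I_{1,1}^+(\lambda)$ on the table. Your closing paragraph shows you sensed the danger but looked only at indecomposables with \emph{larger} $\mathcal{A}_0$-image ($P_1^+$, $G_{1,d}^+$); the actual threat comes from modules with the \emph{same} $\mathcal{A}_0$-image but a larger socle.

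This is precisely why the paper inserts its Step 2: a separate computation showing $\dim\mathcal{A}_1(X_1^-\fuse X_1^-)[1]=0$. Since $h_{p-1}^-=1$, any copy of $X_{p-1}^-$ sitting at depth one below the weight-$0$ generator would contribute weight-$1$ states to $\mathcal{A}_1$, and the paper rules this out by exhibiting a spanning set of $\mathcal{A}_1$ on which $L_0$ acts with eigenvalues $0$, $2p-1$ and $2p$ only. This computation genuinely uses the relations of $\mathfrak{g}_2^{\mathbb{P}}(V)$ (in particular that $L_{-1}^2$, but not $L_{-1}$, acts trivially on $\mathcal{A}_1$) and cannot be extracted from the $\mathcal{A}_0$ data you work with; without it the identification $X_1^-\fuse X_1^-=X_1^+$, and hence everything in part 2 that depends on $e_{X_1^-}$ and $i_{X_1^-}$ being isomorphisms, does not follow. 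As a minor remark on part 2: once $X_1^-\fuse X_1^-=X_1^+$ is actually established, the non-vanishing of your Schur scalar is immediate because every map in the composite \eqref{eq:rigiditycond} is then an isomorphism (the paper's argument); no tracking of $v_1^+$ through $\mathcal{A}_0$ is needed.
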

\begin{proof}[Sketch of the proof]
  We prove the the theorem in three steps.
  \begin{enumerate}
  \item We prove that there exists a surjection of \(A_0(\mathcal{W}_p)\)-modules
    \begin{align}
      \mathcal{A}_0(X_1^+)\rightarrow \mathcal{A}_0(X_1^-\fuse X_1^-)\,.
    \end{align}
  \item We prove that \(\dim \mathcal{A}_1(X_1^-\fuse X_1^-)[1]=0\).
  \item We prove the existence of a non-trivial \(\mathcal{W}_p\)-module map
    \begin{align}
      X_1^-\fuse X_1^-\rightarrow \mathcal{V}_{[2,p-1]}\,.
    \end{align}
  \end{enumerate}
  Step 1 implies that \(X_1^-\fuse X_1^-\) is a (possibly trivial) highest weight module 
  generated by a state of conformal weight 0. Since \(h_{p-1}^-=1\) step 2 excludes the possibility
  of \(X_{p-1}^-\) being a submodule of \(X_1^-\fuse X_1^-\). Step 3 implies that \(X_1^-\fuse X_1^-\)
  is non-trivial and since the only non-trivial submodule of \(\mathcal{V}_{[2,p-1]}\), generated by a
  state of conformal weight 0 is \(X_1^+\), it follows that \(X_1^-\fuse X_1^-=X_1^+\).
  
  The rigidity of \(X_1^-\) follows by choosing
  \begin{align}
    \text{id}_{X_1^+}&=e_{X_1^-}: X_1^-\fuse X_1^-\rightarrow X_1^+\\\nonumber
    \text{id}_{X_1^+}&=i_{X_1^-}: X_1^+\rightarrow X_1^-\fuse X_1^-
  \end{align}
  and the fact that therefore all the maps appearing in definition \ref{sec:defrigid}
  are isomorphisms.
\end{proof}

\begin{proof}[Proof of step 1]
  As in proposition \ref{sec:X1special} we choose
  \begin{align}
    (X_1^-)^s=\bigoplus_{j=0}^1\bigoplus_{\epsilon=\pm}\mathbb{C}L_{-1}^jv_1^\epsilon
  \end{align}
  and as in proposition \ref{sec:zeromodesubspaces} we choose
  \begin{align}
    (X_1^-)^0=\bigoplus_{\epsilon=\pm}\mathbb{C}v_1^\epsilon\,.
  \end{align}

  Using the formulae \eqref{eq:vircomult} as well as the null vector in proposition \ref{sec:X1special}
  we can compute the action of \(L_0\) on the classes represented by the elements of \((X_1^-)^s\otimes (X_1^-)^0\).
  \begin{align}
    [v_1^{\epsilon_1}\otimes v_1^{\epsilon_2}]&\mapsto 2h_1^-[v_1^{\epsilon_1}\otimes v_1^{\epsilon_2}]+[L_{-1}v_1^{\epsilon_1}\otimes v_1^{\epsilon_2}]\\\nonumber
    [L_{-1}v_1^{\epsilon_1}\otimes v_1^{\epsilon_2}]&\mapsto (2h_1^-+1)[v_1^{\epsilon_1}\otimes v_1^{\epsilon_2}]
    +[L_{-1}^2v_1^{\epsilon_1}\otimes v_1^{\epsilon_2}]\\\nonumber
    &\phantom{\mapsto}\,=ph_1^-[v_1^{\epsilon_1}\otimes v_1^{\epsilon_2}]+(2h_1^-+1-p)[L_{-1}v_{\epsilon_1}\otimes v_1^{\epsilon_2}]\,.
  \end{align}
  Thus for each pair \(\epsilon_1,\epsilon_2\) we can represent \(L_0\) by
  \begin{align}
    L_0\cong\left(
      \begin{array}{cc}
        2h_1^-&ph_1^-\\
        1&2h_1^-+1-p
      \end{array}
    \right)
  \end{align}
  on the basis \(L_{-1}^jv_1^{\epsilon_1}\otimes v_1^{\epsilon_2},\ j=0,1\). The eigenvalues of this matrix are \(h_1^+=0\) and \(2p-1\).

  Next we will determine a lower bound on the dimension of the kernel of the surjection
  \begin{align}\label{eq:X1squaresurj}
    (X_1^-)^s\otimes (X_1^-)^0\rightarrow \mathcal{A}_0(X_1^-\fuse X_1^-)\,.
  \end{align}
  Because \(A_0(\mathcal{W}_p)\)-mod does not contain any module with eigenvalue \(2p-1\) eigenvectors, the eigenvectors
  \begin{align}
    v_1^{\epsilon_1}\otimes v_1^{\epsilon_2}+\tfrac{2}{3p-2}L_{-1}v_1^{\epsilon_1}\otimes v_1^{\epsilon_2}\,,
  \end{align}
  corresponding to the eigenvalue \(2p-1\), must lie in the kernel of the surjection \eqref{eq:X1squaresurj}.
  
  From formula \eqref{eq:X1Waction} illustrating the action of \(W\)-field modes on \(X_1^-\) we know
  \begin{align}
    L_{-1}v_1^\epsilon=C_\epsilon\cdot W^\epsilon_{-1}v_1^{-\epsilon}\qquad \epsilon=\pm
  \end{align}
  for some constant \(C_\epsilon\). 
  This implies that \(L_{-1}v_1^\epsilon\otimes v_1^\epsilon,\ \epsilon=\pm\) lies in the kernel of \eqref{eq:X1squaresurj}, because
  \begin{align}
    &[L_{-1}v_1^{\epsilon}\otimes v_1^{\epsilon}]=C_\epsilon\cdot [W^{\epsilon}_{-1}v_1^{-\epsilon}\otimes v_1^{\epsilon}]\\\nonumber
    &\qquad= C_\epsilon\cdot\sum_{j=0}^{2p-3}\binom{2p-3}{j}(-1)^j[v_1^{-\epsilon}\otimes W^{\epsilon}_{j-(2p-2)}v_1^{\epsilon}]=0\,.
  \end{align}
  Finally 
  \begin{align}
    &[L_{-1}v_1^{\epsilon}\otimes v_1^{-\epsilon}]=C_\epsilon\cdot [W^{\epsilon}_{-1}v_1^{-\epsilon}\otimes v_1^{-\epsilon}]\\\nonumber
    &\qquad= C_\epsilon\cdot\sum_{j=0}^{2p-3}\binom{2p-3}{j}(-1)^j[v_1^{-\epsilon}\otimes W^{\epsilon}_{j-(2p-2)}v_1^{-\epsilon}]\\\nonumber
    &\qquad=A_\epsilon [L_{-1}v_1^{-\epsilon}\otimes v_1^{\epsilon}]+B_\epsilon[v_1^{-\epsilon}\otimes v_1^{\epsilon}]\,,
  \end{align}
  for some constants \(A_\epsilon\) and \(B_\epsilon\), since by the action of the \(W\)-modes \eqref{eq:X1Waction} 
  \(W^\epsilon_{j-(2p-2)}v_1^{-\epsilon}\in\mathcal{L}_{h_{2,1}}^\epsilon\).
  This implies that some non-trivial linear combination of \([L_{-1}v_1^{+}\otimes v_1^{-}]\) and 
  \([L_{-1}v_1^{-}\otimes v_1^{+}]\) lies in the kernel of \eqref{eq:X1squaresurj}.
  Therefore the kernel of \eqref{eq:X1squaresurj} is at least 7 dimensional
  and \(\mathcal{A}_0(X_1^-\fuse X_1^-)\) is at most one dimensional. If \(\mathcal{A}_0(X_1^-\fuse X_1^-)\) is indeed non-trivial, then the eigenvalue
  of \(L_0\) is 0.
\end{proof}

\begin{proof}[Proof of step 2]  
  We know that the image of the action of the \(W\)-field modes \(W^\mu_{-k}\) on the
  highest weight vectors \(v_1^\epsilon\) lies in the Virasoro submodules generated by \(v_1^+\) and \(v_1^-\)
  for \(k<3p-1\). Therefore
  a spanning set of representatives for \(\mathcal{A}_1(X_1^-\fuse X_1^-)\) can be chosen from
  Virasoro descendants of \(v_1^{\epsilon_1}\otimes v_1^{\epsilon_2},\ \epsilon_1=\pm,\epsilon_2=\pm\). Also since the relations \eqref{eq:vircomult} for
  Virasoro modes still hold for \(n\geq 2\), we can restrict the spanning set of representatives for
  \(\mathcal{A}_1(X_1^-\fuse X_1^-)\) to \(L_{-1}\)-descendants of \(v_1^{\epsilon_1}\otimes v_1^{\epsilon_2}\). Finally because of the null vector \eqref{eq:X1null}
  at level 2 of \(X_1^-\) we have the following surjection of complex vector spaces
  \begin{align}
    \bigoplus_{\substack{i=0\\j=0}}^1\bigoplus_{\substack{\epsilon_1=\pm\\\epsilon_2=\pm}}
    \mathbb{C}[L_{-1}^iv_1^{\epsilon_1}\otimes L_{-1}^jv_1^{\epsilon_2}]\rightarrow 
    \mathcal{A}_1(X_1^-\fuse X_1^-)\,.
  \end{align}
  Because the image of the canonical Lie algebra homomorphism
  \begin{align}
    \mathfrak{g}(\mathcal{W}_p)\rightarrow \mathcal{U}(\mathcal{W}_p)
  \end{align}
  is dense, we know that the image of \(L_{-1}^2\) lies in \(\mathfrak{g}_2(\mathcal{W}_p)(X_1^-\fuse X_1^-)\)
  and that \(L_{-1}^2\) therefore acts trivially on \(\mathcal{A}_1(X_1^-\fuse X_1^-)\) even if
  \(L_{-1}\) does not. This implies the relation
  \begin{align}
    &(j_{1,0}([T\otimes 1\cdot\d z^{-1}]))^2[v_1^{\epsilon_1}\otimes v_1^{\epsilon_2}]\\\nonumber
    &\qquad=[L_{-1}v_1^{\epsilon_1}\otimes v_1^{\epsilon_2}]+ 2 [L_{-1}v_1^{\epsilon_1}\otimes L_{-1}v_1^{\epsilon_2}]
    +[v_1^{\epsilon_1}\otimes L_{-1}v_1^{\epsilon_2}]=0
  \end{align}
  We therefore take \([v_1^{\epsilon_1}\otimes v_1^{\epsilon_2}],\ [L_{-1}v_1^{\epsilon_1}\otimes v_1^{\epsilon_2}]\) and \([v_1^{\epsilon_1}\otimes L_{-1}v_1^{\epsilon_2}]\) as a spanning set for \(\mathcal{A}_1(X_1^{-1}\fuse X_1^{-1})\)
  and compute the action of \(L_0\)
  \begin{align}
    [v_1^{\epsilon_1}\otimes v_1^{\epsilon_2}]&\mapsto\left(\frac32p-1\right)[v_1^{\epsilon_1}\otimes v_1^{\epsilon_2}]+[L_{-1}v_1^{\epsilon_1}\otimes v_1^{\epsilon_2}]\\\nonumber
    [L_{-1}v_1^{\epsilon_1}\otimes v_1^{\epsilon_2}]&\mapsto\left(\frac32p-1\right)\frac{p}{2}[v_1^{\epsilon_1}\otimes v_1^{\epsilon_2}]+\frac32p[L_{-1}v_1^{\epsilon_1}\otimes v_1^{\epsilon_2}]
    +p[v_1^{\epsilon_1}\otimes L_{-1}v_1^{\epsilon_2}]\\\nonumber
    [v_1^{\epsilon_1}\otimes L_{-1}v_1^{\epsilon_2}]&\mapsto-\left(\frac32p-1\right)\frac{p}{2}[v_1^{\epsilon_1}\otimes v_1^{\epsilon_2}]+\frac{p}{2}[L_{-1}v_1^{\epsilon_1}\otimes v_1^{\epsilon_2}]
    +p[v_1^{\epsilon_1}\otimes L_{-1}v_1^{\epsilon_2}]\,.
  \end{align}
  As a matrix \(L_0\) is represented by
  \begin{align}
    \left(
      \begin{array}{ccc}
        \left(\frac32p-1\right)&\left(\frac32p-1\right)\frac{p}{2}&-\left(\frac32p-1\right)\frac{p}{2}\\
          1&\frac32p&\frac{p}{2}\\
          0&p&p
      \end{array}
    \right)
  \end{align}
  on the basis \(v_1^{\epsilon_1}\otimes v_1^{\epsilon_2},\ L_{-1}v_1^{\epsilon_1}\otimes v_1^{\epsilon_2}\)
  and \(v_1^{\epsilon_1}\otimes L_{-1}v_1^{\epsilon_2}\)
  and the eigenvalues of this matrix are \(0,\ 2p-1\) and \(2p\), none of which are \(h_{p-1}^-=1\).
\end{proof}

\begin{proof}[Proof of step 3]
  According to proposition \ref{sec:WtoV} there exists an injective \(\mathcal{W}_p\)-module
  map
  \begin{align}
    X_1^-&\rightarrow \mathcal{V}_{[2,1]}\\\nonumber
    v^-_1&\mapsto|\alpha_{2,1}\rangle\,.
  \end{align}
  By this map and proposition \ref{sec:subfusion}
  there exits a non-trivial \(\mathcal{W}_p\)-module map
  \begin{align}
    X_1^-\fuse X_1^-&\rightarrow \mathcal{V}_{[2,1]}\fuse_{V_L}\mathcal{V}_{[2,1]}\\\nonumber
    v_1^-\otimes v_1^-&\mapsto |\alpha_{2,1}\rangle\otimes|\alpha_{2,1}\rangle\,.
  \end{align}
  By proposition \ref{sec:Vfusion} there exits a \(V_L\)-module isomorphism
  \begin{align}
    \mathcal{V}_{[2,1]}\fuse_{V_L}\mathcal{V}_{[2,1]}&\rightarrow \mathcal{V}_{[3,1]}\cong\mathcal{V}_{[1,1]}\\\nonumber
    |\alpha_{2,1}\rangle\otimes|\alpha_{2,1}\rangle&\mapsto |\alpha_{3,1}\rangle\,.
  \end{align}
  By composing these two maps we have constructed a non-trivial \(\mathcal{W}_p\)-module map
  \begin{align}
    X_1^-\fuse X_1^-\rightarrow \mathcal{V}_{[1,1]}\,.
  \end{align}
\end{proof}

\begin{thm}\label{sec:X1fusion}
  The fusion rules of \(X_1^-\) with simple modules is given by
  \begin{align}
    X_1^-\fuse X_s^\epsilon=X_s^{-\epsilon}\quad 1\leq s\leq p,\ \epsilon=\pm\,.
  \end{align}
\end{thm}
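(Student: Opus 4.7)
The plan is to exploit the rigidity of $X_1^-$ established in Theorem \ref{sec:X1msquare}, which makes $X_1^-\fuse-$ an auto-equivalence of $\mathcal{W}_p\text{-mod}$, and then identify $X_1^-\fuse X_s^+$ explicitly using the free field realisation, deducing the $\epsilon=-$ case formally.

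First I would argue that $X_1^-\fuse X_s^\epsilon$ is simple for every $s$ and $\epsilon$. Because $X_1^-$ is rigid, Proposition \ref{sec:moncatprop}(1) makes $X_1^-\fuse-$ an exact endofunctor of $\mathcal{W}_p\text{-mod}$. Combining the fusion square $X_1^-\fuse X_1^-\cong X_1^+$ from Theorem \ref{sec:X1msquare} with associativity and the unit isomorphism yields a natural isomorphism $X_1^-\fuse(X_1^-\fuse M)\cong M$ for every $\mathcal{W}_p$-module $M$. Hence $X_1^-\fuse-$ is its own inverse and therefore an auto-equivalence, and in particular it carries simple modules to simple modules.

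Second, for $\epsilon=+$ I would produce a non-trivial $\mathcal{W}_p$-module map $X_1^-\fuse X_s^+\to\mathcal{V}_{[2,s]}$ by mimicking Step 3 in the proof of Theorem \ref{sec:X1msquare}. Proposition \ref{sec:WtoV} provides the injections $X_1^-\hookrightarrow\mathcal{V}_{[2,1]}$, $v_1^-\mapsto|\alpha_{2,1}\rangle$, and $X_s^+\hookrightarrow\mathcal{V}_{[1,s]}$, $u_s\mapsto|\alpha_{1,s}\rangle$. Functoriality of $\fuse_{\mathcal{W}_p}$ applied to these injections, followed by the surjection of Proposition \ref{sec:subfusion} and the Heisenberg isomorphism $\mathcal{V}_{[2,1]}\fuse_{V_L}\mathcal{V}_{[1,s]}=\mathcal{V}_{[2,s]}$ from Proposition \ref{sec:Vfusion}, produces a composed $\mathcal{W}_p$-module map sending the class of $v_1^-\otimes u_s$ to $|\alpha_{2,s}\rangle\in\mathcal{V}_{[2,s]}$, which is non-zero. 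Combined with the simplicity of $X_1^-\fuse X_s^+$ the map is injective, and its image is the $\mathcal{W}_p$-submodule of $\mathcal{V}_{[2,s]}$ generated by $|\alpha_{2,s}\rangle$. Since Proposition \ref{sec:WtoV} identifies $X_s^-$ with the submodule of $\mathcal{V}_{[2,s]}$ containing $|\alpha_{2,s}\rangle$ and $X_s^-$ is simple, the image is exactly $X_s^-$, so $X_1^-\fuse X_s^+\cong X_s^-$.

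The case $\epsilon=-$ then follows formally by applying $X_1^-\fuse-$ to this isomorphism and using associativity together with $X_1^-\fuse X_1^-\cong X_1^+$:
\[
X_1^-\fuse X_s^-\cong X_1^-\fuse(X_1^-\fuse X_s^+)\cong(X_1^-\fuse X_1^-)\fuse X_s^+\cong X_1^+\fuse X_s^+\cong X_s^+.
\]
The main obstacle I anticipate is the non-triviality step in the middle part: verifying that the composed map really does send $[v_1^-\otimes u_s]$ to the non-zero vector $|\alpha_{2,s}\rangle$, and not to zero after the $\mathcal{W}_p$-fusion relations are imposed. This requires the same careful tracking at the level of $\mathcal{A}_0$ through Proposition \ref{sec:subfusion} and the Heisenberg identification of Proposition \ref{sec:Vfusion} that was carried out for $X_1^-\fuse X_1^-$ in Step 3 of Theorem \ref{sec:X1msquare}.
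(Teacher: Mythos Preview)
Your proposal is correct and follows essentially the same two-step strategy as the paper: first use the rigidity of $X_1^-$ together with $X_1^-\fuse X_1^-\cong X_1^+$ to deduce that $X_1^-\fuse X_s^\epsilon$ is simple, then use the free field embeddings of Proposition~\ref{sec:WtoV} and the $V_L$-fusion of Propositions~\ref{sec:subfusion} and~\ref{sec:Vfusion} to produce a non-trivial map $X_1^-\fuse X_s^+\to\mathcal{V}_{[2,s]}$, forcing the image to be the unique simple submodule $X_s^-$; the $\epsilon=-$ case then follows by fusing once more with $X_1^-$. The only cosmetic difference is that you phrase Step~1 as ``$X_1^-\fuse-$ is an auto-equivalence'' while the paper argues by contradiction (a non-trivial composition series for $X_1^-\fuse M$ would, after fusing with $X_1^-$, give one for $M$), but these are the same argument.
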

\begin{proof}[Sketch of proof]
  We prove the theorem in two steps
  \begin{enumerate}
  \item Let \(M\) be a simple module, then \(X_1^-\fuse M\) is also simple.
  \item We prove the existence of a non-trivial \(\mathcal{W}_p\)-module map
    \begin{align}\label{eq:X1Xsintertwiner}
      X_1^-\fuse X_s^+\rightarrow \mathcal{V}_{[2,s]}\,.
    \end{align}
  \end{enumerate}
  The simplicity of \(X_1^-\fuse X_s^+\) implied by step 1 and the non-triviality of the map in step 2 implies that
  \(X_1^-\fuse X_s^+\) is a simple submodule of \(\mathcal{V}_{[2,s]}\). Therefore \(X_1^-\fuse X_s^+=X_s^-\). The theorem then follows by \(X_1^-\fuse X_1^-=X_1^+\).
\end{proof}

\begin{proof}[Proof of step 1]
  Proof by contradiction. Assume \(X_1^-\fuse M\) is not simple, then there exists an
  exact sequence
  \begin{align}
    0 \rightarrow A\rightarrow X_1^-\fuse M\rightarrow B\rightarrow 0\,,
  \end{align}
  for some non-trivial \(\mathcal{W}_p\)-modules \(A\) and \(B\). Because \(X_1^-\) is
  rigid, the sequence
  \begin{align}
    0 \rightarrow X_1^-\fuse A\rightarrow M\rightarrow X_1^-\fuse B\rightarrow 0\,,
  \end{align}
  must also be exact which is in contradiction to \(M\) being simple.
\end{proof}

\begin{proof}[Proof of step 2]
According to proposition \ref{sec:WtoV} there exist injective \(\mathcal{W}_p\)-module
maps
\begin{align}
  X_1^-&\rightarrow \mathcal{V}_{[2,1]}\\\nonumber
  v^-_1&\mapsto|\alpha_{2,1}\rangle
\end{align}
and for \(1\leq s\leq p\)
\begin{align}
  X_s^+&\rightarrow \mathcal{V}_{[1,s]}\,.\\\nonumber
  u_s&\mapsto |\alpha_{1,s}\rangle
\end{align}
By the above injective \(\mathcal{W}_p\)-module maps and proposition \ref{sec:subfusion}
there exits a non-trivial \(\mathcal{W}_p\)-module map
\begin{align}
  X_1^-\fuse X_s^+&\rightarrow \mathcal{V}_{[2,1]}\fuse_{V_L}\mathcal{V}_{[1,s]}\\\nonumber
  v_1^-\otimes u_s&\mapsto |\alpha_{2,1}\rangle\otimes|\alpha_{1,s}\rangle\,.
\end{align}
By proposition \ref{sec:Vfusion} there exits a \(V_L\)-module isomorphism
\begin{align}
  \mathcal{V}_{[2,1]}\fuse_{V_L}\mathcal{V}_{[1,s]}&\rightarrow \mathcal{V}_{[2,s]}\\\nonumber
  |\alpha_{2,1}\rangle\otimes|\alpha_{1,s}\rangle&\mapsto |\alpha_{2,s}\rangle\,.
\end{align}
By composing these two maps we have constructed a non-trivial \(\mathcal{W}_p\)-module map
\begin{align}
  X_1^-\fuse X_s^+\rightarrow \mathcal{V}_{[2,s]}\,.
\end{align}
\end{proof}

\subsection{The fusion rules and rigidity of \(X_2^+\)}

In this section we analyse the fusion products of \(X_2^+\) with simple modules and prove the rigidity
of \(X_2^+\).

\begin{thm}\label{sec:X2Xsthm}
  The \(\mathcal{W}_p\)-module \(X_2^+\) is rigid and
  he fusion rules of \(X_2^+\) with simple modules is given by
  \begin{align}
    X_2^+\fuse X_s^\epsilon=\left\{
        \begin{array}{ll}
          X_2^\epsilon&s=1\\
          X_{s-1}^\epsilon\oplus X_{s+1}^\epsilon&2\leq s\leq p-1\\
          P_{p-1}^\epsilon&s=p
        \end{array}\right.\,.
  \end{align}
\end{thm}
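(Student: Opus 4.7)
The plan is to mirror the two-step strategy of Theorems \ref{sec:X1msquare} and \ref{sec:X1fusion}. I would first extract an upper bound on $\mathcal{A}_0(X_2^+\fuse X_s^+)$ via Nahm's estimate (Proposition \ref{sec:Akestimate}) combined with the explicit special subspace $(X_2^+)^s$ supplied by Proposition \ref{sec:X2special}, and then produce a matching lower bound by constructing non-trivial $\mathcal{W}_p$-intertwiners through the free field realisation (Propositions \ref{sec:WtoV}, \ref{sec:subfusion}, \ref{sec:Vfusion}). The block decomposition $\mathcal{W}_p\text{-mod}=\bigoplus_s\mathcal{C}_s$ will then force predicted summands lying in distinct blocks $\mathcal{C}_{s-1}$ and $\mathcal{C}_{s+1}$ to split automatically.

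For the upper bound the source $(X_2^+)^s\otimes(X_s^+)^0$ is five-dimensional, spanned by $L_{-1}^j u_2\otimes u_s$ for $j=0,1$ and $W^\epsilon_{-2p+2}u_2\otimes u_s$ for $\epsilon=0,\pm$. Using the mode-moving relations \eqref{eq:moveright}--\eqref{eq:moveleft} the three $W$-classes are rewritten as $-[u_2\otimes W^\epsilon_{-2p+2}u_s]$ and must be further analysed using the detailed action of the $W$-modes on $X_s^+$. Combining this with the level-$2$ null vector $(L_{-1}^2-\tfrac1pL_{-2})u_2=0$ and computing the action of $L_0$ on what remains, in the style of step~1 of the proof of Theorem \ref{sec:X1msquare}, should leave exactly two $L_0$-eigenvalues, namely $h_{1,s-1}$ and $h_{1,s+1}$.

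For the lower bound, composing the embeddings $X_2^+\hookrightarrow\mathcal{V}_{[1,2]}$ and $X_s^+\hookrightarrow\mathcal{V}_{[1,s]}$ with the fusion surjection of Proposition \ref{sec:subfusion} and the isomorphism of Proposition \ref{sec:Vfusion} yields a non-trivial $\mathcal{W}_p$-module map
\begin{align}
X_2^+\fuse X_s^+\longrightarrow \mathcal{V}_{[1,2]}\fuse_{V_L}\mathcal{V}_{[1,s]}=\mathcal{V}_{[1,s+1]},
\end{align}
whose image contains $X_{s+1}^+$. A parallel construction, using the surjection $\mathcal{V}_{[1,s+1]}\twoheadrightarrow X_{p-s-1}^-$ from the Felder complex in conjunction with Theorem \ref{sec:X1fusion}, produces the $X_{s-1}^+$ summand; together with the upper bound and the block decomposition this gives the direct sum decomposition for $2\leq s\leq p-1$. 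The $s=2$ case yields $X_2^+\fuse X_2^+=X_1^+\oplus X_3^+$, and the projection onto, respectively injection from, the $X_1^+=\mathbf{1}$ summand provide evaluation and coevaluation morphisms making $X_2^+$ rigid and self-dual, exactly as in Theorem \ref{sec:X1msquare}. With rigidity in hand, Proposition \ref{sec:moncatprop}(3) forces $X_2^+\fuse X_p^+$ to be projective; an $\mathcal{A}_0$ dimension count ($\dim\mathcal{A}_0P_{p-1}^+=2$ versus $\dim\mathcal{A}_0P_1^-=4$) then pins this projective down as $P_{p-1}^+$. Finally, the $\epsilon=-$ cases follow by invoking the auto-equivalence $X_1^-\fuse(-)$ of Theorem \ref{sec:X1fusion}, which carries projective covers to projective covers and so maps $P_{p-1}^+$ to $P_{p-1}^-$.

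The principal technical obstacle is the upper bound step: one must carefully control the interplay between the five generators of $(X_2^+)^s$, the level-$2$ null vector, the $W$-mode action on $X_s^+$, and the $L_0$-action in order to confirm that exactly two eigenspaces survive with the predicted eigenvalues $h_{1,s\pm 1}$. A secondary subtlety is distinguishing $P_{p-1}^+$ from $X_{p-1}^+\oplus X_{p-1}^+$ in the $s=p$ case, which should be forced by computing directly that $\mathcal{A}_0(X_2^+\fuse X_p^+)\cong\widetilde X_{p-1}^+$ rather than $2\overline X_{p-1}^+$.
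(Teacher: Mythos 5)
Your overall architecture (Nahm upper bound on $\mathcal{A}_0$, free-field intertwiners for the lower bound, projectivity forcing $P_{p-1}^\epsilon$ at $s=p$, and transporting everything with $X_1^-\fuse-$) matches the paper's, but there is one genuine gap and one structural choice that makes your version harder than it needs to be. The gap is the rigidity step. You assert that the projection onto and injection from the $X_1^+$ summand of $X_2^+\fuse X_2^+=X_1^+\oplus X_3^+$ ``provide evaluation and coevaluation morphisms making $X_2^+$ rigid \ldots exactly as in Theorem \ref{sec:X1msquare}.'' This does not follow. For $X_1^-$ the fusion square \emph{is} the tensor unit, so $e$ and $i$ are isomorphisms and the composites in \eqref{eq:rigiditycond} are automatically isomorphisms of the simple object $X_1^-$, hence non-zero scalars. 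For $X_2^+$ the candidate $e$ and $i$ exist, and Schur's lemma gives $f=\mu\cdot\id_{X_2^+}$, $g=\nu\cdot\id_{X_2^+}$ for the zig-zag composites, but nothing so far rules out $\mu=0$ or $\nu=0$: the identity is being factored through $X_2^+\fuse(X_2^+\fuse X_2^+)\cong 2X_2^+\oplus X_4^+$ and the particular composite could vanish. This is precisely why the paper's step 5 --- the bulk of its proof --- computes the four-point conformal blocks $\Phi_s^{(1)},\Phi_s^{(2)}$, reduces the null-vector and M\"obius constraints to the hypergeometric equation \eqref{eq:hypgeomeq}, and reads off the non-vanishing of $\mu$ and $\nu$ from the entries of the connection matrix between the bases of solutions at $x=0$ and $x=1$ (with separate treatment of $p\geq4$, $p=3$, and the logarithmic case $p=2$). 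Without some substitute for this analytic-continuation argument your proof of rigidity is incomplete, and with it fall the $s=p$ case (which needs $X_2^+\fuse X_p^+$ projective) and the exactness used to propagate lower bounds.

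The structural point concerns your choice to run the $\mathcal{A}_0$ computation against $X_s^+$. The paper instead computes $X_2^+\fuse X_s^-$ and recovers the $+$ cases at the end via $X_1^-\fuse-$. The reason is the reduction of the $W$-classes: one needs $W^\mu_{-2p+2}$ applied to the highest weight vectors of the second factor to stay inside the Virasoro modules they generate, so that $[W^\mu_{-2p+2}u_2\otimes(\cdot)]=-[u_2\otimes W^\mu_{-2p+2}(\cdot)]$ collapses onto the $L_{-1}^j u_2\otimes(\cdot)$ span. For $X_s^-$ this holds because $h_{4,s}-h_{2,s}=3p-s>2p-2$. For $X_s^+$ with $s\geq2$ one has $h_{3,s}-h_{1,s}=2p-s\leq 2p-2$, so $W^\epsilon_{-2p+2}u_s$ can have components in the $3\mathcal{L}_{h_{3,s}}$ piece of \eqref{eq:virdecomp}, and your five-dimensional spanning set does not reduce by the clean argument you sketch; the iteration of \eqref{eq:moveright}--\eqref{eq:moveleft} still terminates but the bookkeeping is substantially worse. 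Your ``principal technical obstacle'' is thus real and self-inflicted; working with $X_s^-$ removes it. Finally, your remark that a dimension count distinguishes $P_{p-1}^+$ from $2X_{p-1}^+$ is not quite right as stated, since $\dim\mathcal{A}_0(P_{p-1}^+)=\dim 2\overline{X}_{p-1}^+=2$; you correctly note one must instead exhibit the indecomposable structure $\widetilde{X}_{p-1}^+$, or, as the paper does, combine projectivity with the eigenvalue data of step 1 and the surjection of step 4.
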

\begin{proof}[Sketch of proof]
  We prove the theorem in a number of steps
  \begin{enumerate}
  \item We prove the existence of surjections of \(\mathcal{A}_0\)-modules
    \begin{align}
      \left.
        \begin{array}{rr}
          s=1&\mathcal{A}_0(X_2^-)\\
          1<s<p&\mathcal{A}_0(X_{s-1}^-)\oplus \mathcal{A}_0(X_{s+1}^-)\\
          s=p&\mathcal{A}_0(P_{p-1}^-)
        \end{array}
      \right\}\rightarrow
      \mathcal{A}_0(X_2^+\fuse X_s^-)\rightarrow 0
      \,.
    \end{align}
  \item We prove the existence of non-trivial \(\mathcal{W}_p\)-module maps
    \begin{align}
      X_2^+\fuse X_s^-\rightarrow \mathcal{V}_{[2,s+1]}\,,
    \end{align}
    for \(1\leq s\leq p-1\).
  \item We prove the existence of non-trivial \(\mathcal{W}_p\)-module maps
    \begin{align}
      X_2^+\fuse \mathcal{V}_{[1,p-s]}\rightarrow X_{s-1}^-\,,
    \end{align}
    for \(2\leq s\leq p-1\).
  \item We prove the existence of a surjective \(\mathcal{W}_p\)-module map
    \begin{align}
      X_2^+\fuse X_p^-\rightarrow X_{p-1}^-\,.
    \end{align}
  \item We use the formalism outlined in section \ref{sec:intertwiners}
    to prove that \(X_2^+\) is rigid and that therefore \((X_2^+)^\vee=X_2^+\).
  \end{enumerate}

  Steps 1 through 3 prove
  \begin{align}
    X_2^+\fuse X_s^{-}=\left\{
      \begin{array}{cc}
        X_2^{-}& s=1\\
        X_{s-1}^{-}\oplus X_{s+1}^{-}&1<s<p
      \end{array}
      \right.\,.
  \end{align}
  Step 5 implies that \(X_2^+\fuse X_p^-\) is projective, since the
  product of the dual of a rigid module and a projective module is
  again projective. The
  only projective module compatible with steps 1 and 4 is \(P_{p-1}^-\),
  therefore
  \begin{align}
    X_2^+\fuse X_p^{-}=P_{p-1}^-\,.
  \end{align}
  Finally the fusion products of the theorem follow by multiplying with \(X_1^-\) and the associativity
  of the fusion product.
\end{proof}

\begin{proof}[Proof of step 1]
We choose the special subspace of \(X_2^+\) to be given by
  \begin{align}
    (X_2^+)^s=\bigoplus_{j=0}^1\mathbb{C}L_{-1}^ju_2\oplus\bigoplus_{\epsilon=0,\pm}\mathbb{C}W_{-2p+2}^\epsilon u_2\,,
  \end{align}
as in proposition \ref{sec:X2special} 
and we chose the zero mode 
subspace of \(X_s^-\) to be given by
\begin{align}
  (X_s^-)^0=\bigoplus_{\epsilon=\pm}\mathbb{C}v_s^\epsilon\,,
\end{align}
as in proposition \ref{sec:zeromodesubspaces}.
By proposition \ref{sec:Akestimate} there is a canonical surjection
\begin{align}
  (X_2^+)^s\otimes (X_1^-)^0\rightarrow \mathcal{A}_0(X_2^+\fuse X_s^-)\,.
\end{align}

We first show that the spanning set
\begin{align}
  \text{span}\{[L_{-1}^j u_2\otimes v_s^\epsilon],\ [W_{-2p+2}^\mu u_2\otimes v_s^\epsilon],\ j=0,1,\ \epsilon=\pm,\ \mu=\pm,0\}
\end{align}
is redundant. Consider
\begin{align}
  [W_{-2p+2}^\mu u_2\otimes v_s^\epsilon]=-[u_2\otimes W_{-2p+2}^\mu v_s^\epsilon]\,.
\end{align}
The difference in conformal highest weights between the Virasoro
representations
\(\mathcal{L}_{h_{4,s}}\) and \(\mathcal{L}_{h_{2,s}}\)
in the
decomposition \eqref{eq:virdecomp} of \(X_s^-\), 
is \(h_{4,s}-h_{2,s}=3p-s\). Therefore
\begin{align}
  W_{-2p+2}^\mu v_s^\epsilon\in \bigoplus_{\delta=\pm}\mathcal{U}(\mathcal{L})v_s^\delta
\end{align}
which implies that 
\([W_{-2p+2}^\mu u_2\otimes v_s^\epsilon]\) depends linearly on the
 \([L_{-1}^j u_2\otimes v_s^\delta],\ j=0,1,\ \delta=\pm\).

We therefore have a 4 dimensional spanning set for \(\mathcal{A}_0(X_2^+\fuse X_s^-)\) on
which we can compute the action of \(L_0\). 
  \begin{align}
    [u_2\otimes v_s^\pm]&\mapsto(h_2^++h_s^+)[u_2\otimes v_s^\pm]+[L_{-1}u_2\otimes v_s^\pm]\\\nonumber
    [L_{-1}u_2\otimes v_s^\pm]&\mapsto\frac{h_s^+}{p}[u_2\otimes v_s^\pm]+(h_2^++h_s^++1-\frac1p)[L_{-1}u_2\otimes v_s^\pm]\,.
  \end{align}
  We can therefore represent \(L_0\) by the matrix
  \begin{align}
    \left(
      \begin{array}{cc}
        h_2^++h_s^-&\frac{h_s^-}{p}\\
        1& h_2^++h_s^-+1-\frac1p
      \end{array}
    \right)
  \end{align}
  on the basis \(L_{-1}^j u_2\otimes v_s^\epsilon,\ j=0,1,\ \epsilon=\pm\).
  For \(2\leq s\leq p-1\) the eigenvalues of this matrix are \(h_{s-1}^-\) and \(h_{s+1}^-\) and
  for \(s=p\) the eigenvalues of the above matrix are \(h_{p-1}^-\) and \(h_{1}^+\). 
\end{proof}

\begin{proof}[Proof of step 2]
  According to proposition \ref{sec:WtoV} there exist injective \(\mathcal{W}_p\)-module
  maps
  \begin{align}
    X_2^+&\rightarrow \mathcal{V}_{[2,s]}\\\nonumber
    u_2&\mapsto|\alpha_{1,s}\rangle\,.
  \end{align}
  and for \(1\leq s< p\)
  \begin{align}
    X_s^-&\rightarrow \mathcal{V}_{[2,s]}\\\nonumber
    v_s^-&\mapsto |\alpha_{2,s}\rangle\,.
  \end{align}
  By these maps and proposition \ref{sec:subfusion}
  there exits a non-trivial \(\mathcal{W}_p\)-module map
  \begin{align}
    X_2^+\fuse X_s^-&\rightarrow \mathcal{V}_{[1,2]}\fuse_{V_L}\mathcal{V}_{[2,s]}\\\nonumber
    u_2\otimes v_s^-&\mapsto |\alpha_{1,2}\rangle\otimes|\alpha_{2,s}\rangle\,.
  \end{align}
  By proposition \ref{sec:Vfusion} there exits a \(V_L\)-module isomorphism
  \begin{align}
    \mathcal{V}_{[1,2]}\fuse_{V_L}\mathcal{V}_{[2,s]}&\rightarrow \mathcal{V}_{[2,s+1]}\\\nonumber
    |\alpha_{1,2}\rangle\otimes|\alpha_{2,s}\rangle&\mapsto |\alpha_{2,s+1}\rangle\,.
  \end{align}
  By composing these two maps we have constructed a non-trivial \(\mathcal{W}_p\)-module map
  \begin{align}
    X_2^+\fuse X_s^-\rightarrow \mathcal{V}_{[2,s+1]}\,.
  \end{align}
\end{proof}

\begin{proof}[Proof of step 3]
  Before we begin with the proof we note that the proof of step one implies the existence of a
  surjective \(\mathcal{W}_p\)-module map
  \begin{align}
    X_{s-1}^-\oplus X_{s+1}^-\rightarrow X_2^+\fuse X_s^-
  \end{align}
  for \(1<s<p\), {\it i.e.} the results for \(\mathcal{A}_0(X_2^+\fuse X_s^-)\) allow for no
  modules larger than \(X_{s-1}^-\) or \(X_{s+1}^-\). Therefore because \(X_1^-\) is rigid and the functor \(X_1^-\fuse -\)
  is exact, there exits a surjective \(\mathcal{W}_p\)-module map
  \begin{align}\label{eq:auxupperbound}
    X_{s-1}^+\oplus X_{s+1}^+\rightarrow X_2^+\fuse X_s^+\,.
  \end{align}

  According to proposition \ref{sec:WtoV} there exists an injective \(\mathcal{W}_p\)-module map
  \begin{align}
    X_2^+&\rightarrow \mathcal{V}_{[2,s]}\\\nonumber
    u_2&\mapsto|\alpha_{1,s}\rangle\,.
  \end{align}
  By this map and proposition \ref{sec:subfusion}
  there exits a non-trivial \(\mathcal{W}_p\)-module map
  \begin{align}
    X_2^+\fuse \mathcal{V}_{[1,p-s]}&\rightarrow \mathcal{V}_{[1,2]}\fuse_{V_L}\mathcal{V}_{[1,p-(s-1)]}\\\nonumber
    u_2\otimes |\alpha_{-1,p-s}\rangle&\mapsto |\alpha_{1,2}\rangle\otimes|\alpha_{-1,p-s}\rangle\,.
  \end{align}
  By proposition \ref{sec:Vfusion} there exits a \(V_L\)-module isomorphism
  \begin{align}
    \mathcal{V}_{[1,2]}\fuse_{V_L}\mathcal{V}_{[1,p-s]}&\rightarrow \mathcal{V}_{[1,p-(s-1)]}\\\nonumber
    |\alpha_{1,2}\rangle\otimes|\alpha_{-1,p-s}\rangle&\mapsto |\alpha_{-1,p-(s-1)}\rangle\,.
  \end{align}
  By composing these two maps we have constructed a non-trivial \(\mathcal{W}_p\)-module map
  \begin{align}\label{eq:X2tosm1}
    \phi:X_2^+\fuse \mathcal{V}_{[1,p-s]}\rightarrow \mathcal{V}_{[1,p-(s-1)]}\,.
  \end{align}
  Also according to proposition \ref{sec:WtoV} there exists a surjective
  \(\mathcal{W}_p\)-module map
  \begin{align}
     \pi:\mathcal{V}_{[1,p-(s-1)]}&\rightarrow X_{s-1}^-\\\nonumber
    |\alpha_{-1,p-(s-1)}\rangle&\mapsto v_{s-1}^+\,.
  \end{align}
  The composition \(\pi\circ \phi\) is therefore a non-trivial \(\mathcal{W}_p\)-module map
  \begin{align}\label{eq:auxnontriv}
    \pi\circ\phi:X_2^+\fuse \mathcal{V}_{[1,p-s]}\rightarrow X_{s-1}^-\,.
  \end{align}
  By the surjection \eqref{eq:auxupperbound} \(X_2\fuse X_{p-s}^+\) must lie in
  the kernel of \(\pi\circ\phi\). Therefore there exists a non-trivial 
  \(\mathcal{W}_p\)-module map
  \begin{align}
    X_2^+\fuse X_s^-\rightarrow X_{s-1}^-\,.
  \end{align}
\end{proof}

\begin{proof}[Proof of step 4]
  According to proposition \ref{sec:WtoV} there exists an injective \(\mathcal{W}_p\)-module
  map
  \begin{align}
    X_2^+&\rightarrow \mathcal{V}_{[1,2]}\\\nonumber
    u_2&\mapsto|\alpha_{1,2}\rangle
  \end{align}
  and a \(\mathcal{W}_p\)-module isomorphism \(X_p^-\rightarrow \mathcal{V}_{[2,p]}\).
  By the above maps and proposition \ref{sec:subfusion}
  there exits a non-trivial \(\mathcal{W}_p\)-module map
  \begin{align}
    X_2^+\fuse \mathcal{V}_{[2,p]}&\rightarrow \mathcal{V}_{[1,2]}\fuse_{V_L}\mathcal{V}_{[2,p]}\\\nonumber
    u_2\otimes |\alpha_{0,p}\rangle&\mapsto |\alpha_{1,2}\rangle\otimes|\alpha_{0,p}\rangle\,.
  \end{align}
  By proposition \ref{sec:Vfusion} there exits a \(V_L\)-module isomorphism
  \begin{align}
    \mathcal{V}_{[1,2]}\fuse_{V_L}\mathcal{V}_{[2,p]}&\rightarrow \mathcal{V}_{[2,p+1]}=\mathcal{V}_{[1,1]}\\\nonumber
    |\alpha_{1,2}\rangle\otimes|\alpha_{0,p}\rangle&\mapsto |\alpha_{0,p+1}\rangle=|\alpha_{-1,1}\rangle\,.
  \end{align}
  By composing these two maps we have constructed a non-trivial \(\mathcal{W}_p\)-module map
  \begin{align}
    X_2^+\fuse X_p^-\rightarrow \mathcal{V}_{[1,1]}\,.
  \end{align}
  Also according to proposition \ref{sec:WtoV} there exists a surjective \(\mathcal{W}_p\)-module
  map
  \begin{align}
    \mathcal{V}_{[1,1]}&\rightarrow X_{p-1}^-\\\nonumber
    |\alpha_{-1,1}\rangle&\mapsto v_{p-1}^+\,.
  \end{align}
  Therefore there exists a non-trivial \(\mathcal{W}_p\)-module map
  \begin{align}
    X_2^+\fuse X_p^-\rightarrow X_{p-1}^-\,.
  \end{align}
\end{proof}

\begin{proof}[Proof of step 5]
  In order to prove the rigidity of \(X_2^+\), we need to consider
  three fold products of \(X_2^+\), which at this stage we can only
  compute for \(p\geq 4\). We will explain how the
  proof of rigidity can be reduced to analysing formal solutions of
  hypergeometric equations for \(p\geq 4\). The advantage of this analysis is
  that it
  does not require us to explicitly know the three fold fusion product
  of \(X_2^+\) and we can therefore also apply it to \(p=2,3\) once we have discussed \(p\geq 4\).

  Until explicitly stated otherwise we will therefore assume that \(p\geq 4\).
  Then we have proven that
  \begin{align}
    X_2^+\fuse (X_2^+\fuse X_2^+)\cong (X_2^+\fuse X_2^+)\fuse X_2^+=2\cdot X_2^+\oplus X_4^+\,.
  \end{align}
  The rigidity of \(X_2^+\) and the self-duality \({X_2^+}^\vee=X_2^+\) requires the existence of \(\mathcal{W}_p\)-module maps
  \(i:X_1^+\rightarrow X_2^+\fuse X_2^+\) and \(e:X_2^+\fuse X_2^+\rightarrow X_1^+\), such that
  \begin{align}\label{eq:commuA}
      \begin{tikzpicture}[baseline=(base)]
        \path
        (0,0) node (1) {\(X_2^+\cong X_2^+\fuse X_1^+ \)}
        ++(0,-2) node (2) {\(X_2^+\cong X_1^+\fuse X_2^+ \)}
        ++(5,2) node (3) {\(X_2^+\fuse(X_2^+\fuse X_2^+)\)}
        ++(0,-2) node (4) {\((X_2^+\fuse X_2^+)\fuse X_2^+\)};
        \node (0,-1.5) (base) {};
        \draw[->,>=latex] (1) -- (2) node[left,midway] {\(f\)};
        \draw[<-,>=latex] (2) -- (4) node[above,midway] {\(e\fuse\id\)};
        \draw[->,>=latex] (1) -- (3) node[above,midway] {\(\id\fuse i\)};
        \draw[->,>=latex] (3) -- (4) node[left,midway] {\(\alpha_{X_2^+,X_2^+,X_2^+}\)};
      \end{tikzpicture}
  \end{align}
  and
  \begin{align}\label{eq:commuB}
    \begin{tikzpicture}[baseline=(base)]
      \path
      (0,0) node (1) {\(X_2^+\cong X_1^+\fuse X_2^+ \)}
      ++(0,-2) node (2) {\(X_2^+\cong X_2^+\fuse X_1^+ \)}
      ++(5,2) node (3) {\((X_2^+\fuse X_2^+)\fuse X_2^+\)}
      ++(0,-2) node (4) {\(X_2^+\fuse (X_2^+\fuse X_2^+)\)};
      \node (0,-1.5) (base) {};
      \draw[->,>=latex] (1) -- (2) node[left,midway] {\(g\)};
      \draw[<-,>=latex] (2) -- (4) node[above,midway] {\(\id\fuse e\)};
      \draw[->,>=latex] (1) -- (3) node[above,midway] {\(i\fuse\id\)};
      \draw[->,>=latex] (3) -- (4) node[left,midway] {\(\alpha_{X_2^+,X_2^+,X_2^+}^{-1}\)};
    \end{tikzpicture}
  \end{align}
  commute, where \(f=\mu\cdot\id_{X_2^+},\ g=\nu\cdot\id_{X_2^+}\) for two non zero constants \(\mu\) and \(\nu\). 
  We show that \(\mu\neq0\), the case of \(\nu\) is similar so we omit the proof.

  We fix highest weight vectors \(u_s\) of \(X_s^+\) for \(s=1,2,3\), such that \(X_s^+[h_s^+]=\mathbb{C}u_s\) and \(u_1=\Omega\) 
  as in previous calculations.
  By the fusion products we have computed so far we know that the spaces of
  vertex operators
  \begin{align}
    \binom{X_2^+}{X_2^+,X_s^+},\quad\binom{X_2^+}{X_s^+,X_2^+},\quad \binom{X_s^+}{X_2^+,X_2^+}\,,
  \end{align}
  are all one dimensional for \(s=1,3\). We therefore fix non-trivial vertex operators
  \begin{align}
    \tensor*[_2]{\Psi}{^{2}_s}\in\binom{X_2^+}{X_2^+,X_s^+},\quad
    \tensor*[_s]{\Psi}{^{2}_2}\in\binom{X_2^+}{X_s^+,X_2^+},\quad 
    \tensor*[_2]{\Psi}{^{s}_2}\in\binom{X_s^+}{X_2^+,X_2^+}\,.
  \end{align}
  These vertex operators can be formally expanded as
  \begin{align}
    \tensor*[_b]{\Psi}{^{a}_c}(\ ; z)=
    \sum_{n\in\mathbb{Z}} \tensor*[_b]{\Psi}{^{a}_{c;n}}(\ )z^{-n-(h_a^++h_b^+-h_c^+)}\,,
  \end{align}
  for appropriate choices of \(a,b\) and \(c\), where
  \begin{align}
    \tensor*[_b]{\Psi}{^{a}_{c;n}}\in\bigoplus_{k,\ell\geq 0}\hom_{\mathbb{C}}(X_a^+[h_a^++k]\otimes
    X_b^+[h_b^++\ell],X_c^+[h_c^++k+\ell-n])\,.
  \end{align}
  This allows us to define four power series
  \begin{align}
    \Phi^{(1)}_s(z_4,z_3,z_2,z_1)&=\langle \Omega|\tensor*[_1]{\Psi}{^{2}_2}(u_2;z_4)
    \tensor*[_2]{\Psi}{^{2}_s}(u_2;z_3)\tensor*[_s]{\Psi}{^{2}_2}(u_2;z_2)
    \tensor*[_2]{\Psi}{^{2}_1}(u_2;z_1)\Omega\rangle\\\nonumber
    \Phi^{(2)}_s(z_4,z_3,z_2,z_1)&=\langle \Omega|\tensor*[_1]{\Psi}{^{2}_2}(u_2;z_4)
    \tensor*[_2]{\Psi}{^{s}_2}(\tensor*[_s]{\Psi}{^{2}_2}(u_2;z_2-z_3)u_2;z_3)
    \tensor*[_2]{\Psi}{^{2}_1}(u_2;z_1)\Omega\rangle\,,
  \end{align}
  for \(s=1,3\). The power series \(\Phi^{(1)}_s\) and \(\Phi^{(2)}_s\) converge absolutely on
  the domains
  \begin{align}
    U^{(1)}&=\{(z_4,z_3,z_2,z_1)\in (\mathbb{C}^\ast)^4\,|\,|z_4|>|z_3|>|z_2|>|z_1|>0\}\,,\\\nonumber
    U^{(2)}&=\{(z_4,z_3,z_2,z_1)\in (\mathbb{C}^\ast)^4\,|\,|z_4|>|z_3|>|z_1|>0,|z_3|>|z_2-z_3|>0\}\,,
  \end{align}
  respectively and satisfy the partial differential equations:
  \begin{enumerate}
  \item For \(n=-1,0,1\)
    \begin{align}\label{eq:moebiuscov}
      \sum_{a=1}^4 z_a^n\left(z_a\frac{\partial}{\partial z_a}+(n+1)h_2^+\right)\Phi =0\,.
    \end{align}
  \item For \(a=1,2,3,4\)
    \begin{align}\label{eq:nullrell}
      \left(\frac{\partial^2}{\partial z_a^2}-\frac{1}{p}\sum_{\substack{b=1\\b\neq a}}^4\left(
          \frac{h_2^+}{(z_b-z_a)^2}-\frac{1}{z_b-z_a}\frac{\partial}{\partial z_b}\right)\right)\Phi=0\,.
    \end{align}
  \end{enumerate}
  The first set of differential equations follows from M\"obius covariance and
  the second set from the null vector at level 2 in \(X_2^+\).

  The solution space of these two sets of differential equations is two dimensional and
  the solutions define multivalued holomorphic functions on \((\mathbb{P})^4\setminus\mathrm{diagonals}\). Therefore 
  \(\Phi_s^{(1)}\) and \(\Phi_s^{(2)}\) define bases of the solution space of the above
  differential equations on the two domains \(U^{(1)}\) and \(U^{(2)}\) and it is possible to analytically
  continue \(\Phi_s^{(1)}\) to \(U^{(2)}\) and vice versa. For a given path \(\gamma\) from 
  \(U^{(1)}\) to \(U^{(2)}\), \(\Phi_s^{(1)}\) can be written as a linear combination of 
  \(\Phi^{(2)}_1\) and \(\Phi_3^{(2)}\). This defines a connection matrix
  \begin{align}
    \left(
      \begin{array}{c}
        \Phi_{1}^{(1)}\\
        \Phi_{3}^{(1)}
      \end{array}\right)=\left(
      \begin{array}{cc}
        a&b\\
        c&d
      \end{array}\right)
    \left(
      \begin{array}{c}
        \Phi_{1}^{(2)}\\\Phi_{3}^{(2)}
      \end{array}\right)\,.
  \end{align}
  Going along the path \(\gamma\) in the opposite direction one can express \(\Phi_s^{2}\) as 
  a linear combination of \(\Phi_1^{(1)}\) and \(\Phi_3^{(1)}\) with the inverse of the connection
  matrix above
  \begin{align}
    \left(
      \begin{array}{c}
        \Phi_{1}^{(2)}\\
        \Phi_{3}^{(2)}
      \end{array}\right)=\left(
      \begin{array}{cc}
        a&b\\
        c&d
      \end{array}\right)^{-1}
    \left(
      \begin{array}{c}
        \Phi_{1}^{(1)}\\\Phi_{3}^{(1)}
      \end{array}\right)\,.
  \end{align}
  The constant \(\mu\), in \(f=\mu\cdot\id_{X_2^+}\) of diagram \eqref{eq:commuA}, being non-zero
  is equivalent to \(\Phi_1^{(1)}\) having non-vanishing contributions from \(\Phi_1^{(2)}\), {\it i.e.}
  \(a\) being non-zero. Similarly \(\nu\) is non-vanishing if \(\Phi_1^{(2)}\) has non-trivial contributions
  from \(\Phi_1^{(1)}\), which is the case when \(d\) is non-zero.

  The first set of differential equations \eqref{eq:moebiuscov} guarantees the covariance of
  \(\Phi_s^{(a)}\) with respect to M\"obius transformations. Since M\"obius transformations act 
  transitively on triples of pair wise distinct elements of \(\mathbb{P}\), we can fix three of the
  arguments of \(\Phi_s^{(a)}\) to uniquely determine 
  \begin{align}\label{eq:simpe4point}
    \Phi_s^{(a)}=\prod_{1\leq i< j\leq 4}(z_i-z_j)^{\frac{2p-3}{6p}}x^\frac{1}{3}(1-x)^\frac{1}{3}H_s^{(a)}(x)
  \end{align}
  up to a function \(H_s^{(a)}(x)\) of the 
  M\"obius invariant cross ratio
  \begin{align}
    x=\frac{z_4-z_3}{z_4-z_2}\frac{z_1-z_2}{z_1-z_3}\,.
  \end{align}
  See \cite{DiFrancesco:1997nk} for a wealth of examples regarding such computations.
  The functions \(H^{(1)}_s(x)\) and \(H^{(2)}_s(x)\) are absolutely convergent on \(1> |x| > 0\) and \(1>|1-x|>0\)
  respectively.
  The second set of differential equations \eqref{eq:nullrell} arise from the fact that the vertex operators
  above vanish upon inserting the null vector
  \begin{align}\label{eq:nullvec}
    (L_{-1}^2-\tfrac{1}{p}L_{-2})u_2\,.
  \end{align}
  The prefactors of \(H_s^{(a)}(x)\) in equation \eqref{eq:simpe4point} have been chosen such that the differential
  equation for \(H_s^{(a)}(x)\) induce by \eqref{eq:moebiuscov} is particularly simple. Namely the well known hypergeometric equations
  \begin{align}\label{eq:hypgeomeq}
    x(1-x)\frac{\d^2}{\d x^2}H^{(a)}_s(x)+\tfrac{2}{p}(1-2x)\frac{\d}{\d x}H^{(a)}_s(x)-\tfrac{3-p}{p^2}H_s^{(a)}(x)=0\,.
  \end{align}
  For a detailed list of solutions and all formulae we will be using see \cite{Abramowitz:1964}.
  For \(\Phi_s^{(1)}\) which converges on \(U^{(1)}\), \(H_s^{(1)}(x)\) is a power series in \(x\),
  while for \(\Phi^{(2)}_s\) which converges on \(U^{(2)}\), \(H_s^{(2)}(x)\) is a power series in \(1-x\)
  \begin{align}
    H_1^{(1)}(x)&=\tensor[_2]{F}{_1}(\tfrac1p,\tfrac{3-p}{p};\tfrac2p;x)\,,\\\nonumber
    H_3^{(1)}(x)&=x^{\tfrac{p-2}{p}}\tensor[_2]{F}{_1}(\tfrac{p-1}{p},\tfrac1p;\tfrac{2p-2}{p};x)\,,\\\nonumber
    H_1^{(2)}(x)&=\tensor[_2]{F}{_1}(\tfrac1p,\tfrac{3-p}{p};\tfrac2p;1-x)\,,\\\nonumber
    H_3^{(2)}(x)&=(1-x)^{\tfrac{p-2}{p}}\tensor[_2]{F}{_1}(\tfrac1p,\tfrac{p-1}{p};\tfrac{2p-2}{p};1-x)\,.
  \end{align}

  To prove that \(\Phi_1^{(1)}\) has non-vanishing contributions from \(\Phi_1^{(2)}\) we continue
  \(H_1^{(1)}(x)\) along the path from 0 to 1 on the real line. The well known connection formula
  for hypergeometric functions then yields
  \begin{align}
    H_1^{(1)}(x)&=\frac{1}{2\cos \tfrac{\pi}{p}}H_1^{(2)}(x)
    +\frac{3-p}{2-p}\frac{\Gamma(\tfrac{2}{p})^2}{\Gamma(\tfrac{1}{p})\Gamma(\tfrac{3}{p})}H_3^{(2)}(x)\,,\\\nonumber
    H_1^{(2)}(x)&=\frac{1}{2\cos \tfrac{\pi}{p}}H_1^{(1)}(x)
    +\frac{3-p}{2-p}\frac{\Gamma(\tfrac{2}{p})^2}{\Gamma(\tfrac{1}{p})\Gamma(\tfrac{3}{p})}H_3^{(1)}(x)\,.
  \end{align}
  And thus the rigidity of \(X_2^+\) for \(p\geq 4\) follows.

  For \(p=2,3\) the analysis is exactly the same. Specifying the domains and codomains of the vertex
  operators is just a bit trickier. The resulting differential equations are analogous however.
  For \(p=3\) we have shown so far that
  \begin{align}
    X_2^+\fuse X_2^+\fuse X_2^+= X_2^+\oplus (X_2^+\fuse X_3^+)
  \end{align}
  and that there exits a surjective \(\mathcal{W}_3\)-module map
  \begin{align}
    X_2^+\fuse X_3^+\rightarrow X_2^+\,.
  \end{align}
  Therefore the right exactness of the fusion product implies the existence of a surjective \(\mathcal{W}_3\)-module
  map
  \begin{align}
    X_2^+\fuse (X_2^+\fuse X_3^+)\rightarrow X_1^+\oplus X_3^+\,.
  \end{align}
  The analysis above can therefore be repeated for \(p=3\) without any modifications.
  We consider the differential equations \eqref{eq:moebiuscov} and
  \eqref{eq:nullrell}, Which can again be simplified to the hypergeometric equation \eqref{eq:hypgeomeq}. Analysing
  the connection formulae for \(p=3\) yields
  \begin{align}
    H_1^{(1)}(x)=H_1^{(2)}(x)\,,
  \end{align}
  thus proving the rigidity of \(X_2^+\) for \(p=3\).

  For \(p=2\) the space of solutions for the hypergeometric equation
  \begin{align}
    x(1-x)\frac{\d^2}{\d x^2}H^{(a)}_s(x)+(1-2x)\frac{\d}{\d x}H^{(a)}_s(x)-\tfrac{1}{2}H^{(a)}_s(x)=0
  \end{align}
  is slightly more complicated than in the previous examples, because the poles encountered at \(x=0\) and \(x=1\) are
  logarithmic. This implies that vertex operators involved also contain logarithms. We will omit the details however
  since they are not important for solving the above differential equation. The solutions \(H_s^{(a)}\) are given by
    \begin{align}
    H_1^{(1)}(x)&=\tensor[_2]{F}{_1}(\tfrac12,\tfrac{1}{2};1;x)\,,\\\nonumber
    H_3^{(1)}(x)&=\tensor[_2]{F}{_1}(\tfrac12,\tfrac{1}{2};1;x)\log(x)+G(x)\,,\\\nonumber
    H_1^{(2)}(x)&=\tensor[_2]{F}{_1}(\tfrac12,\tfrac{1}{2};1;1-x)\,,\\\nonumber
    H_3^{(2)}(x)&=\tensor[_2]{F}{_1}(\tfrac12,\tfrac{1}{2};1;1-x)\log(1-x)+G(1-x)\,,
  \end{align}
  where \(G(x)\) is a power series with vanishing constant term that converges for \(1>|x|\). The connection
  formulae for \(p=2\) yield
  \begin{align}
    H_1^{(1)}=\frac{\log(4)}{\pi} H^{(2)}_1(x)-\frac{1}{\pi} H_3^{(2)}(x)\,,
  \end{align}
  thus proving the rigidity of \(X_2^+\) for \(p=2\).
\end{proof}

\section{\boldmath The rigidity of  \((\mathcal{W}_p\text{-mod},\fuse)\)\unboldmath}
\label{sec:wpisrigid}

In the previous sections we proved that \(X_1^-\) and \(X_2^+\) are rigid self dual objects in \(\mathcal{W}_p\)-mod.
In this section we will exploit this fact to compute the fusion
product of \(X_1^-\) and \(X_2^+\) with the projective modules \(P_s^\epsilon,\ 1\leq s<p,\ \epsilon=\pm\), allowing
us to prove the rigidity of \(\mathcal{W}_p\)-mod and ultimately compute the fusion product on the set of all
simple and all projective modules.

\subsection{\boldmath Fusion products between \(X_1^-\) and \(X_2^+\) and projective modules\unboldmath}

At first we prepare some more notation.
For any object \(Z\) in \(\mathcal{W}_p\)-mod we denote by \([Z:X_s^\epsilon]\) the multiplicity of
\(X_s^\epsilon\) in quotients \(M_{i+1}(Z)/M_i(Z)\) of the Jordan-H\"older
series \eqref{eq:JordanHoelder} of \(Z\), that is,
\begin{align}
  [Z:X_s^\epsilon]=\dim_\mathbb{C}\hom(P_s^\epsilon,Z)\,.
\end{align}

We have established that
\begin{align}
  X_2^+\fuse X_s^\epsilon&=\left\{
    \begin{array}{ll}
      X_2^\epsilon\,,&s=1\\
      X_{s-1}^\epsilon\oplus X_{s+1}^\epsilon\,,&2\leq s\leq p-1\\
          P_{p-1}^\epsilon\,,&s=p
    \end{array}\right.\\\nonumber
  X_1^-\fuse X_s^\epsilon&=X_s^{-\epsilon}\,,\qquad 1\leq s\leq p\,.
\end{align}
and that \(X_1^-\) and \(X_2^+\) are self-dual rigid objects. From the Jordan-H\"older series of the projective modules
we also know that
\begin{align}
  [P_s^\epsilon:X_t^\sigma]&=2\delta_{(s,\epsilon),(t,\sigma)}
  +2\delta_{(s,\epsilon),(p-t,-\sigma)}\,,\quad 1\leq s <p\\\nonumber
  [P_p^\epsilon:X_t^\sigma]&=\delta_{(p,\epsilon),(t,\sigma)}\,.
\end{align}
\begin{prop}
  The fusion rules of \(X_2^+\) and \(X_1^-\) with projective modules are
  given by
  \begin{align}\label{eq:projfusion}
    X_2^+\fuse P_s^\mu&=\left\{
      \begin{array}{ll}
        P^\mu_2\oplus 2\cdot P_p^{-\mu}\,,&s=1\\
        P^\mu_{s-1}\oplus P^\mu_{s+1}\,,&1<s<p-1\\
        P^\mu_{p-2}\oplus 2\cdot P_p^\mu\,,&s=p-1
      \end{array}\right.\\\nonumber
    X_1^-\fuse P_s^\mu&=P_s^{-\mu}\,,\qquad 1\leq s\leq p\,.
  \end{align}
\end{prop}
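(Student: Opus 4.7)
The plan is to leverage the rigidity of $X_1^-$ and $X_2^+$, together with Proposition~\ref{sec:moncatprop}, to reduce the computation to the already-established fusion rules with simple modules and to the known Jordan-H\"older data for the projective covers in $\mathcal{W}_p$-mod.

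For $X_1^-\fuse P_s^\mu$, since $X_1^-$ is rigid self-dual with $X_1^-\fuse X_1^-=X_1^+$, the endofunctor $X_1^-\fuse(-)$ is an exact autoequivalence of $\mathcal{W}_p$-mod whose quasi-inverse is itself. Such an autoequivalence must send indecomposable projectives to indecomposable projectives; on simples it acts as $X_s^\epsilon\mapsto X_s^{-\epsilon}$ by Theorem~\ref{sec:X1fusion}. Applied to the projective cover $P_s^\mu$ of $X_s^\mu$, it therefore yields the projective cover of $X_s^{-\mu}$, namely $P_s^{-\mu}$. The case $s=p$ is immediate since $P_p^\mu=X_p^\mu$ is simple.

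For $X_2^+\fuse P_s^\mu$, Proposition~\ref{sec:moncatprop}(3) applied to the self-dual rigid object $X_2^+$ guarantees that $X_2^+\fuse P_s^\mu$ is projective. To identify it, I compute the multiplicity of each simple $X_t^\nu$ in its top using the rigidity adjunction of Proposition~\ref{sec:moncatprop}(2):
\begin{align*}
\dim\hom_{\mathcal{W}_p}(X_2^+\fuse P_s^\mu,\, X_t^\nu)
&=\dim\hom_{\mathcal{W}_p}(P_s^\mu,\, X_2^+\fuse X_t^\nu)\\
&=[\,X_2^+\fuse X_t^\nu\,:\,X_s^\mu\,],
\end{align*}
where the second equality uses exactness of $\hom(P_s^\mu,-)$ together with $\hom(P_s^\mu,X_t^\nu)=\delta_{(s,\mu),(t,\nu)}$. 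The right-hand side is read off from Theorem~\ref{sec:X2Xsthm}: the only simples $X_t^\nu$ whose fusion with $X_2^+$ has $X_s^\mu$ as a composition factor are the nearest neighbours $X_{s\pm 1}^\mu$, supplemented at the boundary by $X_p^{-\mu}$ when $s=1$ (through $X_2^+\fuse X_p^{-\mu}=P_{p-1}^{-\mu}$, which has $X_1^\mu$ as a composition factor with multiplicity two) and by $X_p^\mu$ when $s=p-1$ (through $X_2^+\fuse X_p^\mu=P_{p-1}^\mu$, which has $X_{p-1}^\mu$ with multiplicity two). Reading off tops and using the fact that a projective module is determined up to isomorphism by its top then yields the three claimed decompositions.

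The main obstacle is careful bookkeeping at the boundary values $s=1$ and $s=p-1$, where the non-semisimple projectives $P_{p-1}^\pm$ contribute the extra summands $2\cdot P_p^{-\mu}$ and $2\cdot P_p^\mu$, respectively. A sanity check comparing total composition factors on both sides --- via right-exactness of $X_2^+\fuse(-)$ applied to the Jordan-H\"older filtration of $P_s^\mu$ --- confirms the decompositions. No modifications are needed for small $p$, since the rigidity statements and simple-module fusion rules invoked here hold uniformly for all $p\geq 2$.
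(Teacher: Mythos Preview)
Your proof is correct and essentially matches the paper's argument: both exploit that $X_2^+$ and $X_1^-$ are rigid self-dual, deduce that their fusion with any $P_s^\mu$ is projective, and then identify the projective summands via the adjunction $\hom(X\fuse P_s^\mu,X_t^\nu)\cong\hom(P_s^\mu,X\fuse X_t^\nu)=[X\fuse X_t^\nu:X_s^\mu]$, reading the right-hand side off from Theorems~\ref{sec:X1fusion} and~\ref{sec:X2Xsthm}. The only cosmetic difference is that for $X_1^-$ you shortcut the multiplicity count by observing that $X_1^-\fuse(-)$ is an autoequivalence and hence permutes indecomposable projectives according to its action on their simple tops, whereas the paper runs the same adjunction argument uniformly for both $X_1^-$ and $X_2^+$.
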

\begin{proof}
  Because \(X_1^-\) and \(X_2^+\) are rigid, their product with \(P_t^\delta\) is 
  projective. The most general ansatz for such a product is therefore
  \begin{align}
    X\fuse P_t^\delta=\bigoplus_{m=1}^p\bigoplus_{\mu=\pm} N_{m,\mu}\cdot P_m^\mu\,,
  \end{align}
  where \(X\) is either \(X_1^-\) or \(X_2^+\) and \(N_{m,\mu}\in\mathbb{Z}\) is the multiplicity of \(P_m^\mu\) in
  \(X\fuse P_t^\delta\). We can determine
  \(N_{m,\mu}\) by recalling that a rigid object \(X\) and two arbitrary objects \(A\) and \(B\)
  satisfy the relation
  \begin{align}
    \hom(A,X\fuse B)\cong \hom(X^\ast\fuse A,B)\,.
  \end{align}
  Setting \(A\) to \(P_t^\delta\) and \(C\) to \(X_m^\mu\) and calculating the dimensions of the spaces
  of \(\mathcal{W}_p\)-module maps in the equation above, we are lead to
  \begin{align}
    N_{m,\mu}=\dim\hom(P_t^\sigma,X\fuse X_m^\mu)=[X\fuse X_m^\mu:X_t^\sigma]\,.
  \end{align}

  We can easily calculate the multiplicities \([X\fuse X_m^\mu:X_t^\sigma]\) for \(X=X_1^-,\ X_2^+\)
  by considering the fusion products
  \ref{sec:X1fusion} and \ref{sec:X2Xsthm}
  \begin{align}
    [X_1^-\fuse X_m^\mu:X_t^\delta]&=\delta_{(t,\delta),(m,-\mu)}\\\nonumber
    [X_2^+\fuse X_m^\mu:X_t^\delta]&=\left\{
      \begin{array}{ll}
        \delta_{(2,+),(m,\mu)}+2\delta_{(p,\delta),(m,-\mu)}&t=1\\
        \delta_{(t-1,\delta),(m,\mu)}+\delta_{(t+1,\delta),(m,\mu)}&2\leq t\leq p-2\\
        \delta_{(p-2,\delta),(m,\mu)}+2\delta_{(p,\delta),(m,\mu)}&t=p-1\\
        \delta_{(p-1,\delta),(m,\mu)}&t=p
      \end{array}
\right.\,.
  \end{align}
  The proposition then follows directly by plugging in the multiplicities.
\end{proof}

\subsection{Proving rigidity}

We apply point 4 of proposition \ref{sec:moncatprop} to \(\mathcal{W}_p\)-mod. Since
all simple and all projective \(\mathcal{W}_p\)-modules appear in the repeated
fusion products  of \(X_1^-\)
and \(X_2^+\) we have the following proposition.
\begin{prop}
  For \(1\leq s\leq p,\ \epsilon=\pm\) the simple modules \(X_s^\epsilon\) and the projective
  modules \(P_s^\epsilon\) are self-dual rigid objects in \(\mathcal{W}_p\)-mod.
\end{prop}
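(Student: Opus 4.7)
The plan is to bootstrap rigidity from the two modules $X_1^-$ and $X_2^+$, already shown rigid in Theorems \ref{sec:X1msquare} and \ref{sec:X2Xsthm}, to every simple and every projective module by iterating fusion products. The two categorical inputs are part 5 of Proposition \ref{sec:moncatprop}, which guarantees that the fusion product of two rigid objects is rigid, together with the observation that a direct summand of a rigid object is itself rigid. The latter is a formal consequence of the additivity of the duality functor $(-)^\vee$: a splitting $M = A \oplus B$ with $M$ rigid yields $M^\vee = A^\vee \oplus B^\vee$, and the evaluation and coevaluation maps for $M$ restrict, via the canonical projections and inclusions, to maps witnessing rigidity of $A$ and $B$ separately. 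Morally this is part 4 of Proposition \ref{sec:moncatprop} applied to the split sequence $0 \to A \to A \oplus B \to B \to 0$; I would use it directly rather than through that statement so as not to presuppose hypothesis (c) (all projectives rigid), which is exactly what this proposition is meant to establish. Self-duality of every rigid module then comes for free, since each $X_s^\epsilon$ and each $P_s^\epsilon$ is self-contragredient and by part 2 of Theorem \ref{sec:tensorcat} the rigid dual coincides with the contragredient.

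First I would handle the simple modules by induction on $s$. The base cases $X_1^+ = \mathbf{1}$, $X_1^-$ and $X_2^+$ are in hand. For the inductive step, the fusion rule $X_2^+ \fuse X_s^+ = X_{s-1}^+ \oplus X_{s+1}^+$ (for $2 \le s \le p-1$, from Theorem \ref{sec:X2Xsthm}) presents a rigid fusion product on the left, whence $X_{s+1}^+$ is rigid as a direct summand, given that $X_{s-1}^+$ is already rigid by the induction hypothesis. This yields rigidity of $X_s^+$ for all $1 \le s \le p$, and then $X_1^- \fuse X_s^+ = X_s^-$ (Theorem \ref{sec:X1fusion}) exhibits each $X_s^-$ as a fusion product of rigid modules.

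For the projective modules, $P_p^\epsilon = X_p^\epsilon$ is already handled. The identity $X_2^+ \fuse X_p^\epsilon = P_{p-1}^\epsilon$ from Theorem \ref{sec:X2Xsthm} then shows that $P_{p-1}^\epsilon$ is rigid. A descending induction on $s$ using $X_2^+ \fuse P_s^\mu = P_{s-1}^\mu \oplus P_{s+1}^\mu$ for $1 < s < p-1$ from \eqref{eq:projfusion} extracts rigidity of $P_{s-1}^\mu$ from that of $P_s^\mu$ and $P_{s+1}^\mu$. The boundary formulas at $s=1$ and $s=p-1$ in \eqref{eq:projfusion} contribute extra copies of $P_p^{\pm\mu}$, which are already known to be rigid, so the induction closes without issue.

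The only conceptual point beyond formal bookkeeping is the direct-summand principle for rigidity isolated in the first paragraph; verifying the two zig-zag identities of Definition \ref{sec:defrigid} for each summand from those for the ambient direct sum is routine but must be made explicit, since it is what allows the fusion rules to propagate rigidity to individual indecomposables rather than only to direct sums. All remaining work consists of invoking the fusion rules catalogued in Section \ref{sec:calcX1andX2} and in the preceding subsection, so no further obstacles are anticipated.
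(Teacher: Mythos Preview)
Your proposal is correct and follows the same strategy as the paper: bootstrap from the rigidity of $X_1^-$ and $X_2^+$ via repeated fusion products, in which every simple and every projective module eventually appears as a summand. The paper's proof is a one-line sketch citing point~4 of Proposition~\ref{sec:moncatprop}, whereas you spell out the inductions explicitly; your caution about not invoking point~4 directly (since its hypothesis~(c) is precisely what is being proved) is well placed, and handling the split case by restricting the evaluation/coevaluation maps along the idempotents is the right way to close that gap.
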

In \(\mathcal{W}_p\)-mod all indecomposable objects \(M\) except the simple objects and
the projective objects satisfy exact sequences
\begin{align}
  0\longrightarrow L\longrightarrow M\longrightarrow N\longrightarrow 0
\end{align}
such that \(L\) and \(N\) are direct sums of simple objects.
So finally we obtain the rigidity of \(\mathcal{W}_p\)-mod by applying point 5 of proposition \ref{sec:moncatprop}.
\begin{thm}\label{sec:wprigid}
  The weakly rigid monoidal category \((\mathcal{W}_p,\fuse,X_1^+)\) is rigid.
  For any object \(M\) in \(\mathcal{W}_p\)-mod the dual \(M^\vee\) is given
  by the contragredient \(M^\ast\), {\it i.e.} \(M^\vee=M^\ast\).
\end{thm}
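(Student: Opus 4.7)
The plan is to combine the rigidity of the simple and projective modules, established in the proposition immediately preceding the theorem, with the abelian structure of $\mathcal{W}_p$-mod recalled in Section \ref{sec:defwpmod}, using points 4 and 5 of Proposition \ref{sec:moncatprop} as the main engine. Since every object in $\mathcal{W}_p$-mod is a finite direct sum of indecomposables, and rigidity is visibly preserved under direct sums, it suffices to prove rigidity of each indecomposable.

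First I would check that the three hypotheses of point 4 of Proposition \ref{sec:moncatprop} are satisfied. The category $\mathcal{W}_p$-mod has enough projectives and injectives because $\mathcal{W}_p$ is $c_2$-cofinite and the list of projective covers $P_s^\epsilon$ of each simple $X_s^\epsilon$ is explicit. Every projective is self-contragredient, so by applying the contragredient functor one sees that projective modules are also injective; and since the number of indecomposable projectives equals the number of simples, which in turn equals the number of indecomposable injectives, the two classes coincide. Finally, every projective is rigid by the proposition just proven. Thus all three hypotheses hold.

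Next I would handle the indecomposables that are neither simple nor projective, namely the families $G_{s,d}^\pm$, $H_{s,d}^\pm$ and $I_{s,d}^\pm(\lambda)$. By inspection of their socle series, each such $M$ fits into a short exact sequence
\begin{equation*}
0\longrightarrow L\longrightarrow M\longrightarrow N\longrightarrow 0
\end{equation*}
with $L$ and $N$ semisimple, hence rigid as direct sums of simples. Point 4 of Proposition \ref{sec:moncatprop} immediately yields rigidity of $M$. Together with the previous proposition this covers every indecomposable in $\mathcal{W}_p$-mod, completing the proof that the category is rigid.

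For the identification $M^\vee = M^\ast$, I would invoke Theorem \ref{sec:tensorcat}(2), which provides the natural isomorphism $\hom_{\mathcal{W}_p}(N\fuse M,V^\ast)\cong\hom_{\mathcal{W}_p}(N,M^\ast)$. Since the tensor unit $V = X_1^+$ is self-contragredient, $V^\ast = X_1^+ = \mathbf{1}$, so the left-hand side is precisely $\hom(N\fuse M,\mathbf{1})$, which represents $F_M$ and therefore is naturally isomorphic to $\hom(N,M^\vee)$. By Yoneda, $M^\vee \cong M^\ast$. The main thing to be careful about in the whole argument is the verification that injectives coincide with projectives (hypothesis (b) of Proposition \ref{sec:moncatprop}), since rigidity propagation through short exact sequences uses this crucially; here it is precisely the finite-length nature of $\mathcal{W}_p$-mod, coupled with the self-contragredience of the projective covers, that closes the loop.
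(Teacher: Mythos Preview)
Your proof is correct and follows essentially the same route as the paper: use Proposition~\ref{sec:moncatprop}(4) on the length-two socle filtration of each non-simple, non-projective indecomposable, taking as input the rigidity of all simples and projectives established in the preceding proposition. You supply more detail than the paper does---in particular the explicit verification of hypotheses (a)--(c) of Proposition~\ref{sec:moncatprop}(4) and the Yoneda argument identifying $M^\vee$ with $M^\ast$ via Theorem~\ref{sec:tensorcat}(2)---but the underlying strategy is identical.
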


\subsection{\boldmath The ring structure of \(P(\mathcal{W}_p)\) and \(K(\mathcal{W}_p)\)\unboldmath}

We see in theorems \ref{sec:X1fusion} and \ref{sec:X2Xsthm}, that the fusion products of
\(X_1^-\) and \(X_2^+\) with simple modules are direct sums of simple and projective modules. Because all
simple modules appear as direct summands of products of \(X_1^-\) and \(X_2^+\), the
product of any two simple modules must also be a product of simple and projective modules.
Therefore because all the simple modules are rigid, the fusion product closes on the set
of all simple and all projective modules. In this section we will compute the fusion product
on this set.

We introduce the free abelian group \(P(\mathcal{W}_p)\) of rank \(4p-2\) generated by all projective
and all simple modules
\begin{align}
  P(\mathcal{W}_p)=\bigoplus_{s=1}^p\bigoplus_{\epsilon=\pm} \mathbb{Z}[X_s^\epsilon]_P
  \oplus\bigoplus_{s=1}^{p-1}\bigoplus_{\epsilon=\pm}\mathbb{Z}[P_s^\epsilon]_P
\end{align}
and the rank \(2p\) Grothendieck group\footnote{The Grothendieck group \(K(\mathcal{C})\) can be defined for any abelian category \(\mathcal{C}\).
It is given by free abelian group generated by all objects of \(\mathcal{C}\) modulo the subgroup generated by all formal differences
\(M-L-N\) where \(L,M,N\) satisfy an exact sequence \(0\longrightarrow L\longrightarrow M\longrightarrow N\longrightarrow 0\). If the number
of simple objects in the abelian category \(\mathcal{C}\) is finite, then \(K(\mathcal{C})\) is just the finite rank free abelian group
generated by all simple objects.}
\begin{align}
  K(\mathcal{W}_p)=\bigoplus_{s=1}^p\bigoplus_{\epsilon=\pm}\mathbb{Z}[X_s^\epsilon]_K\,.
\end{align}
By the rigidity of \(\mathcal{W}_p\)-mod and and the closure of the fusion product
on simple and projective modules:
\begin{enumerate}
\item \(P(\mathcal{W}_p)\) and \(K(\mathcal{W}_p)\) have the structure of commutative rings.
\item The canonical projection \(\pi:P(\mathcal{W}_p)\rightarrow K(\mathcal{W}_p)\) is a ring homomorphism.
\end{enumerate}

By the above arguments the two operators
\begin{align}
  X&=X_2^+\fuse-\,,&Y&=X_1^-\fuse-\,,
\end{align}
define \(\mathbb{Z}\)-linear endomorphisms of \(P(\mathcal{W}_p)\) and \(K(\mathcal{W}_p)\). Because the fusion product
is commutative, the two operators \(X\) and \(Y\) must also commute. Thus
by the two operators \(X\) and \(Y\) the polynomial ring \(\mathbb{Z}[X,Y]\) acts on \(P(\mathcal{W}_p)\) and \(K(\mathcal{W}_p)\),
{\it i.e.} \(P(\mathcal{W}_p)\) and \(K(\mathcal{W}_p)\) are modules over \(\mathbb{Z}[X,Y]\) and the canonical projection
\(\pi\) is a \(\mathbb{Z}[X,Y]\)-module map.

Before we begin analysing the action of \(\mathbb{Z}[X,Y]\) on \(P(\mathcal{W}_p)\) we recall
some elementary facts about Chebyshev polynomials that will prove helpful.
\begin{definition}
  We define elements \(U_n(X),\ n=0,1,\dots\) in \(\mathbb{Z}[X]\) recursively
  \begin{align}\label{eq:chebypol}
    U_0(X)&=1\,,\qquad U_1(X)=X\,,\\\nonumber
    U_{n+1}(X)&=XU_n(X)-U_{n-1}(X)\,.
  \end{align}
\end{definition}
\begin{remark}
  \begin{enumerate}
  \item The coefficient of the leading order of \(U_n(X)\) is 1, {\it i.e.}
    \begin{align}
      U_n(X)=X^n+\cdots\in\mathbb{Z}[X],\ m=0,1,2,\dots
    \end{align}
    so we have
    \begin{align}
      \mathbb{Z}[X]=\bigoplus_{n=0}^\infty \mathbb{Z}U_n(X)\,.
    \end{align}
  \item The initial conditions and recursion relations of the polynomials \(U_n(X)\) are those
    of the Chebyshev polynomials of the second kind, though with a non-standard choice of normalisation.
  \end{enumerate}
\end{remark}

We define the \(\mathbb{Z}[X,Y]\)-module maps
\begin{align}
  \psi:\mathbb{Z}[X,Y]&\rightarrow P(\mathcal{W}_p)\,,\\\nonumber
  f(X,Y)&\mapsto f(X,Y)\cdot [X_1^+]_P\\\nonumber
  \phi:\mathbb{Z}[X,Y]&\rightarrow K(\mathcal{W}_p)\,.\\\nonumber
  f(X,Y)&\mapsto f(X,Y)\cdot [X_1^+]_K
\end{align}
\begin{thm}
  The maps \(\psi\) and \(\phi\) are surjective homomorphisms of
  commutative rings and the kernels are given by the ideals
  \begin{align}
    \ker \psi&=\left<Y^2-1,U_{2p-1}(X)-2YU_{p-1}(X)\right>\,,\\\nonumber
    \ker \phi&=\left<Y^2-1,U_{p}(X)-U_{p-2}(X)-2Y\right>\,.
  \end{align}
\end{thm}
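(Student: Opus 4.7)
The plan is to prove both statements in parallel by combining an explicit Chebyshev recursion with a rank count.

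I would start by noting that $\psi$ and $\phi$ are ring homomorphisms: the operators $X\cdot$ and $Y\cdot$ commute (fusion is commutative) and $[X_1^+]$ is the unit, so evaluating any polynomial at the unit gives a ring homomorphism out of $\mathbb{Z}[X,Y]$. Surjectivity then follows by a Chebyshev recursion. The fusion rule $X_2^+\fuse X_s^+ = X_{s-1}^+\oplus X_{s+1}^+$ for $2\leq s\leq p-1$ (Theorem \ref{sec:X2Xsthm}) precisely mirrors \eqref{eq:chebypol}, so by induction
\begin{align}
  [X_s^+]_P &= U_{s-1}(X)\cdot[X_1^+]_P\,, & [X_s^-]_P &= YU_{s-1}(X)\cdot[X_1^+]_P\,,
\end{align}
for $1\leq s\leq p$ (and identically in $K$). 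Extending one step past $s=p$ using $X_2^+\fuse X_p^+ = P_{p-1}^+$ and then proceeding by downward induction on $s$ via \eqref{eq:projfusion} yields $[P_s^+]_P = \bigl(U_{s-1}(X) + U_{2p-s-1}(X)\bigr)\cdot[X_1^+]_P$ for $1\leq s\leq p-1$, with $[P_s^-]_P$ obtained by an extra factor of $Y$; this exhausts the generators of $P(\mathcal{W}_p)$.

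Next I would verify that the stated relations lie in the kernels. The relation $Y^2-1$ is immediate from Theorem \ref{sec:X1msquare}. For $\psi$, iterating the Chebyshev recursion one step past $s=1$ in the formula for $[P_s^+]_P$ and invoking $X_2^+\fuse P_1^+ = P_2^+\oplus 2X_p^-$ gives $U_{2p-1}(X)\cdot[X_1^+]_P = 2[X_p^-]_P = 2YU_{p-1}(X)\cdot[X_1^+]_P$. For $\phi$, in the Grothendieck group $[P_{p-1}^+]_K = 2[X_{p-1}^+]_K + 2[X_1^-]_K$ while $X\cdot[X_p^+]_K = [P_{p-1}^+]_K$, yielding $U_p(X)\cdot[X_1^+]_K = U_{p-2}(X)\cdot[X_1^+]_K + 2Y\cdot[X_1^+]_K$.

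The heart of the proof, and the main obstacle, is showing these relations generate the full kernels, which I would do by rank matching. Modulo $Y^2-1$, every element of $\mathbb{Z}[X,Y]$ takes the form $a(X)+b(X)Y$, and the second generator of each ideal allows one to recursively rewrite $U_k(X)$ with $k\geq 2p-1$ (for $\ker\psi$) or $k\geq p$ (for $\ker\phi$) in terms of lower-index Chebyshev polynomials and their $Y$-multiples. Hence $\mathbb{Z}[X,Y]/\langle Y^2-1, U_{2p-1}-2YU_{p-1}\rangle$ is spanned by $\{U_k(X), YU_k(X) : 0\leq k\leq 2p-2\}$ (cardinality $4p-2$), and $\mathbb{Z}[X,Y]/\langle Y^2-1, U_p-U_{p-2}-2Y\rangle$ by $\{U_k(X), YU_k(X) : 0\leq k\leq p-1\}$ (cardinality $2p$). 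Under $\psi$ and $\phi$ these spanning sets map, after an invertible $\mathbb{Z}$-linear change of basis, to the standard $\mathbb{Z}$-bases of $P(\mathcal{W}_p)$ and $K(\mathcal{W}_p)$ respectively. Since each target is free abelian of exactly that rank, the surjections must be isomorphisms. The subtle point is that the factor of $2$ appearing in each relation could a priori create $\mathbb{Z}$-torsion in the quotient rings; this is precisely what the rank match with the torsion-free integral bases on the target side rules out a posteriori.
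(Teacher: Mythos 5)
Your proposal is correct and follows essentially the same route as the paper: surjectivity via the Chebyshev recursion identifying $[X_s^\pm]$ and $[P_s^\pm]$ with (products of $Y$ and) $U_{s-1}(X)+U_{2p-1-s}(X)$ applied to the unit, kernel containment from the specific fusion products $X_1^-\fuse X_1^-$, $X_2^+\fuse X_p^+$ and $X_2^+\fuse P_1^+$, and equality of kernels by matching the rank of the quotient with that of the free abelian target. Your additional remark that the factor of $2$ cannot introduce torsion (because the leading Chebyshev coefficients are $1$ and the target is free of the matching rank) is a point the paper leaves implicit, but it is handled by the same rank count.
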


\begin{proof}
  Consider the fusion products
  \begin{align}
    X_2^+\fuse X_1^+&=X_2^+\,,\\\nonumber
    X_2^+\fuse X_s^+&=X_{s-1}^+\oplus X_{s+1}^+\,,
  \end{align}
  for \(1<s<p\).
  Formally this looks exactly like the recursion relations and initial
  conditions \eqref{eq:chebypol} if one were to substitute \(X_2^+\) with \(X\) and \(X_s^+\) with \(U_{s-1}(X)\). We can
  therefore write the generators of \(P(\mathcal{W}_p)\) and \(K(\mathcal{W}_p)\) corresponding to the simple modules
  \(X_s^+,\ 1\leq s\leq p\) as
  \begin{align}
    [X_s^+]_P&=U_{s-1}(X)[X_1^+]_P\,,&[X_s^+]_K&=U_{s-1}(X)[X_1^+]_K\,.
  \end{align}
  Since the remaining simple modules \(X_s^-\) can be written as \(X_1^-\fuse X_s^+\) their corresponding
  generators in \(P(\mathcal{W}_p)\) and \(K(\mathcal{W}_p)\) can be written as
  \begin{align}
    [X_s^-]_P&=YU_{s-1}(X)[X_1^+]_P\,,&[X_s^-]_K&=YU_{s-1}(X)[X_1^+]_K\,.
  \end{align}
  Thus as a module over \(\mathbb{Z}[X,Y]\), \(K(\mathcal{W}_p)\) is generated by \([X_1^+]_K\)
  and \(\phi\) is therefore a surjective \(\mathbb{Z}[X,Y]\)-homomorphism.

  Next we consider the fusion products
  \begin{align}
    X_2\fuse X_p^+&=P_{p-1}^+\,,\\\nonumber
    X_2^+\fuse P_{s}^+&=P_{s-1}^+\oplus P_{s+1}^+\,,
  \end{align}
  for \(1<s<p\). These imply that the generators of \(P(\mathcal{W}_p)\) corresponding
  to the projective modules \(P_s^+,\ 1\leq s<p\) can be written as
  \begin{align}\label{eq:projident}
    [P_s^+]_P=(U_{2p-1-s}(X)+U_{s-1}(X))[X_1^+]_P\,.
  \end{align}
  Since the remaining projective modules \(P_s^-\) can be written as \(X_1^-\fuse P_s^+\)
  their corresponding generators in \(P(\mathcal{W}_p)\) can be written as
  \begin{align}
    [P_s^-]_P=Y(U_{2p-1-s}(X)+U_{s-1}(X))[X_1^+]_P\,.
  \end{align}
    Thus as a module over \(\mathbb{Z}[X,Y]\), \(P(\mathcal{W}_p)\) is generated by \([X_1^+]_P\)
  and \(\psi\) is therefore a surjective \(\mathbb{Z}[X,Y]\)-homomorphism.

  We verify that the two ideals
  \begin{align}
    I&=\left<Y^2-1,U_{2p-1}(X)-2YU_{p-1}(X)\right>\,,\\\nonumber
    J&=\left<Y^2-1,U_{p}(X)-U_{p-2}(X)-2Y\right>\,,
  \end{align}
  are indeed the kernels of \(\psi\) and \(\phi\) by showing that \(I\) and \(J\) lie in the kernels
  and that the ranks of \(\mathbb{Z}(X,Y)/I\) and \(\mathbb{Z}(X,Y)/J\) are equal to the ranks of \(P(\mathcal{W}_p)\)
  and \(K(\mathcal{W}_p)\).
  From the fusion product \(X_1^-\fuse X_1^-=X_1^+\) it follows that
  \begin{align}
    (Y^2-1)[X_1^+]_P&=0\,,&(Y^2-1)[X_1^+]_K&=0\,.
  \end{align}
  The left and right hand sides of \(X_2^+\fuse X_p^+=P_{p-1}^+\) are given by the
  left and right hand sides of
  \begin{align}
    XU_{p-1}(X)[X_1^+]_K=2(U_{p-1}(X)+Y)[X_1^+]_K\,,
  \end{align}
  respectively in \(K(\mathcal{W}_p)\).
  By the recursion relations for Chebyshev polynomials it therefore follows that
  \begin{align}
    (U_{p}(X)-U_{p-2}(X)-2Y)[X_1^+]_K=0\,.
  \end{align}
  Lastly by the left and right hand sides of the product \(X_2^+\fuse P_{1}^+=P_2^+\oplus 2X_p^-\)
  are given by the left and right hand sides of
  \begin{align}
    X(U_{2p-2}(X)+U_0(X))[X_1^+]_P=(U_{2p-3}(X)+U_1(X)+2YU_{p-1}(X))[X_1^+]_P
  \end{align}
  respectively in \(P(\mathcal{W}_p)\).
  By the recursion relations for Chebyshev polynomials it therefore follows that
  \begin{align}
    (U_{2p-1}(X)-2YU_{p-1}(X))[X_1^+]_P=0\,.
  \end{align}

  We decompose \(\mathbb{Z}[X,Y]/I\) and \(\mathbb{Z}[X,Y]/J\) as free abelian groups to compute their rank
  \begin{align}
    \frac{\mathbb{Z}[X,Y]}{I}&=\frac{\mathbb{Z}[X]\oplus \mathbb{Z}[X]Y}{\left<U_{2p-1}(X)-2YU_{p-1}(X)\right>}
    =\bigoplus_{i=0}^{2p-2}\mathbb{Z}X^i\oplus \bigoplus_{i=0}^{2p-2}\mathbb{Z}X^iY\\\nonumber
    \frac{\mathbb{Z}[X,Y]}{J}&=\frac{\mathbb{Z}[X]\oplus \mathbb{Z}[X]Y}{\left<U_{p}(X)-U_{p-2}(X)-2Y\right>}
    =\bigoplus_{i=0}^{p-1}\mathbb{Z}X^i\oplus \bigoplus_{i=0}^{p-1}\mathbb{Z}X^iY
  \end{align}
  and see that the ranks are \(4p-2\) and \(2p\) respectively.
\end{proof}

\begin{thm}\label{sec:fusionrules}
  The fusion products for all simple and all projective \(\mathcal{W}_p\)-modules are given by
  \begin{align}\label{eq:frules}
    X_s^\mu\fuse X_t^\nu&=\!\!\!\!\!\!\!\!\!\!\!\!\!\!\!\!\bigoplus_{i=|s-t|+1;2}^{\text{min}\{s+t-1,2p-1-s-t\}}
    \!\!\!\!\!\!\!\!\!\!\!\!\!\!\!\!\!X_i^{\mu\nu}
    \ \ \oplus\!\!\!\!\bigoplus_{i=2p+1-s-t;2}^{m}\!\!\!\!\!\!\!\!\!P_i^{\mu\nu}\\\nonumber
    X_s^\mu\fuse P_t^\nu&=\!\!\!\!\!\!\!\!\!\!\!\!\!\!\!\!\bigoplus_{i=|s-t|+1;2}^{\text{min}\{s+t-1,2p-1-s-t\}}
    \!\!\!\!\!\!\!\!\!\!\!\!\!\!\!\!\!P_i^{\mu\nu}
    \ \ \oplus\!\!\!\!\bigoplus_{i=2p+1-(s+t));2}^m\!\!\!\!\!\!\!\!\!\!\!2\cdot P_i^{\mu\nu}
    \ \ \oplus\!\!\!\!\bigoplus_{i=p+1-(s-t));2}^m\!\!\!\!\!\!\!\!\!\!\!2\cdot P_i^{-\mu\nu}\\\nonumber
    P_s^\mu\fuse P_t^\nu&=2\cdot X_s^+\fuse P_t^+\oplus 2\cdot X_{p-s}^-\fuse P_t^+
    \,,
  \end{align}
  where ``\(;2\)'' indicates that the summation variable is incremented in steps of 2 and
  \begin{align}
    m=\left\{
      \begin{array}{ll}
        p&\text{if }p-i \text{ is even}\\
        p-1&\text{if }p-i \text{ is odd}
      \end{array}
      \right.\,.
  \end{align}
  The product on the Grothendieck group induced by the fusion product is given by
  \begin{align}
  [X_s^\mu]_K\cdot[X_t^\nu]_K=\sum_{i=|s-t|+1;2}^{a(s+t)}
    [X_i^{\mu\nu}]_K
    &+\sum_{i=b(s+t);2}^{s+t-1-p}
    2[X_{i}^{-\mu\nu}]_K+[X_{p-i}^{\mu\nu}]_K\,,
  \end{align}
  where
  \begin{align}
    a(s)&=\left\{
      \begin{array}{ll}
        s-1&\text{if } s-1-p\leq 0\\
        p&\text{if } s-1-p>0 \text{ is odd}\\
        p-1&\text{if } s-1-p>0 \text{ is even}
      \end{array}
      \right.\,,
    \\
    b(s)&=\left\{
      \begin{array}{ll}
        1&\text{if } s-1-p \text{ is even}\\
        2&\text{if } s-1-p \text{ is odd}
      \end{array}
      \right.\,.
  \end{align}
\end{thm}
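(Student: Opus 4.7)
The plan is to derive the fusion rules by entirely algebraic computation in the ring $P(\mathcal{W}_p)\cong\mathbb{Z}[X,Y]/I$ established in the preceding theorem. Under this identification we have $[X_s^+]_P=U_{s-1}(X)[X_1^+]_P$, $[X_s^-]_P=Y\,U_{s-1}(X)[X_1^+]_P$, and $[P_s^+]_P=(U_{2p-1-s}(X)+U_{s-1}(X))[X_1^+]_P$ for $1\leq s\leq p-1$, together with the relations $Y^2=1$ and $U_{2p-1}(X)=2Y\,U_{p-1}(X)$ modulo $I$. Since every simple and every projective appears as a summand of an iterated fusion product of $X_1^-$ and $X_2^+$ and since the fusion product closes on this set, it suffices to compute the relevant products of Chebyshev polynomials and re-expand them in the basis $\{[X_s^\pm]_P,[P_s^\pm]_P\}$.

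The first step is to apply the classical Clebsch--Gordan identity
\begin{align}
U_{s-1}(X)\,U_{t-1}(X)=\sum_{k=0}^{\min(s,t)-1}U_{s+t-2-2k}(X),
\end{align}
which follows directly from the recursion \eqref{eq:chebypol}, and then to translate each summand back into $P(\mathcal{W}_p)$. For $0\leq i\leq p-1$, $U_i(X)[X_1^+]_P=[X_{i+1}^+]_P$; for $p\leq i\leq 2p-2$ the formula for projectives gives $U_i(X)[X_1^+]_P=[P_{2p-1-i}^+]_P-[X_{2p-1-i}^+]_P$; for $i=2p-1$ we use $U_{2p-1}(X)[X_1^+]_P=2Y\,U_{p-1}(X)[X_1^+]_P=2[X_p^-]_P$; and higher indices are reduced iteratively via the same two relations. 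The crucial combinatorial point is that the negative contributions $-[X_{2p-1-i}^+]_P$ produced by indices in the range $[p,2p-2]$ cancel exactly with the positive simples $[X_{2p-1-i}^+]_P$ arising from smaller indices in the Clebsch--Gordan sum, leaving only a direct sum of simples and projectives with non-negative multiplicities. A careful bookkeeping of the ranges of indices reproduces the three summation ranges appearing in \eqref{eq:frules} for $X_s^+\fuse X_t^+$; the remaining sign cases $X_s^\mu\fuse X_t^\nu$ follow by multiplying through by appropriate powers of $Y$ and using $Y^2=1$.

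The simple--projective products $X_s^\mu\fuse P_t^\nu$ are handled identically by expanding $U_{s-1}(X)(U_{2p-1-t}(X)+U_{t-1}(X))[X_1^+]_P$ and reducing as above, which explains the doubled contributions and the sign-flipping term $2\cdot P_i^{-\mu\nu}$. For the projective--projective products, the identity $[P_s^+]_P=[X_s^+]_P + U_{2p-1-s}(X)[X_1^+]_P$ together with the reduction $U_{2p-1-s}(X)[P_t^+]_P\equiv(U_{s-1}(X)+2Y\,U_{p-s-1}(X))[P_t^+]_P$ modulo the annihilator of $[P_t^+]_P$ in $P(\mathcal{W}_p)$ (verified directly from $U_{2p-1}=2Y\,U_{p-1}$ and the fact that $[P_t^+]_P$ absorbs one factor of $U_{p-1}$) yields $[P_s^\mu\fuse P_t^\nu]_P=2[X_s^+\fuse P_t^+]_P+2[X_{p-s}^-\fuse P_t^+]_P$. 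Finally, the Grothendieck group formula is obtained by applying $\pi:P(\mathcal{W}_p)\to K(\mathcal{W}_p)$, under which $\pi([P_s^+]_P)=2[X_s^+]_K+2[X_{p-s}^-]_K$, to the first formula of \eqref{eq:frules} and collecting terms by the parities of $s+t-1-p$. The main obstacle will be the case-by-case combinatorial verification that the Chebyshev expansion re-assembles into a genuine direct sum matching the three index ranges in \eqref{eq:frules}, together with the polynomial identity modulo the annihilator of $[P_t^+]_P$ needed for the projective--projective case.
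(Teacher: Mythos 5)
Your proposal is correct and follows essentially the same route as the paper: the paper's own (very terse) proof is exactly to multiply out in $\mathbb{Z}[X,Y]$ using the Chebyshev product formula $U_k(x)U_j(x)=\sum_{i=|k-j|;2}^{k+j}U_i(x)$ and then project onto $P(\mathcal{W}_p)$ or $K(\mathcal{W}_p)$, which is your Clebsch--Gordan identity in reindexed form. Your write-up merely makes explicit the bookkeeping the paper leaves implicit (the reduction of high-index $U_i$ via $[P_{2p-1-i}^+]_P-[X_{2p-1-i}^+]_P$ and $U_{2p-1}\equiv 2YU_{p-1}$, and the cancellation of the negative simple contributions), and that bookkeeping is sound.
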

\begin{proof}
  The above fusion rules can be computed directly in \(\mathbb{Z}[X,Y]\) by using multiplication formula for
  Chebyshev polynomials
  \begin{align}
    U_k(x)U_j(x)=\sum_{i=|k-j|;2}^{k+j}U_i(x)\
  \end{align}
  and subsequently projecting onto \(P(\mathcal{W}_p)\) or \(K(\mathcal{W}_p)\).
  Note that the ``;2'' in the subscript of the sum indicates that the summation variable \(k\) is incremented in steps of 2.
\end{proof}

\bibliographystyle{utphys}
\bibliography{wp_mod_tensor_cat}

\end{document}